\pdfoutput=1
\documentclass[11pt]{article}
\usepackage{graphicx}
\usepackage{color}
\usepackage{subfigure}
\usepackage{amsmath,amssymb,amsthm}
\usepackage[margin=1in]{geometry}
\usepackage[small,bf]{caption}
\usepackage{paralist}
\usepackage{comment}
\usepackage[colorlinks,linkcolor=black,bookmarksopen,bookmarksnumbered,
citecolor=black,urlcolor=black]{hyperref}

\allowdisplaybreaks

\graphicspath{{figs/wct/examples/},{figs/wct/meshes/},{figs/wct/misc/}}

\newtheorem{theorem}{Theorem}
\newtheorem{proposition}[theorem]{Proposition}
\newtheorem{lemma}[theorem]{Lemma}
\newtheorem{corollary}[theorem]{Corollary}

\theoremstyle{definition}
\newtheorem{remark}[theorem]{Remark}
\newtheorem{example}[theorem]{Example}
\newtheorem*{definition}{Definition}

\newcommand{\RR}{\ensuremath{\mathbb{R}}}

\newcommand{\closure}[1]{\ensuremath{\mathrm{Cl}(#1)}}
\newcommand{\affine}{\ensuremath{\mathrm{aff}}}
\newcommand{\boundary}[1]{\ensuremath{\mathrm{Bd}(#1)}}
\newcommand{\interior}{\ensuremath{\mathrm{Int}}}
\newcommand{\link}{\ensuremath{\mathrm{Lk}}}
\newcommand{\eqB}{\ensuremath{B}}

\newcommand{\cone}[2]{\ensuremath{#1 * #2}}
\renewcommand{\star}{\ensuremath{\mathrm{St}}}

\newlength{\dotlen}

\definecolor{gry50}{rgb}{0.5, 0.5, 0.5}
\newcommand{\gray}[1]{\textcolor{gry50}{#1}}

\hyphenation{sim-pli-ces sim-pli-cial}

\begin{document}
\title{Geometric and Combinatorial Properties
  of Well-Centered Triangulations in Three and Higher Dimensions}
\author{Evan VanderZee \and Anil N. Hirani \and Damrong Guoy \and
  Vadim Zharnitsky \and Edgar Ramos}

\date{}

\maketitle

\abstract{An $n$-simplex is said to be $n$-well-centered
if its circumcenter lies in its interior.  We introduce
several other geometric conditions and an
algebraic condition that can be used to determine
whether a simplex is $n$-well-centered.  These
conditions, together with some other observations, are
used to describe restrictions on the local combinatorial
structure of simplicial meshes in which every simplex
is well-centered.  In particular, it is shown that in
a $3$-well-centered ($2$-well-centered) tetrahedral mesh
there are at least $7$ ($9$) edges incident to each
interior vertex, and these bounds are sharp.
Moreover, it is shown that,
in stark contrast to the $2$-dimensional
analog, where there are exactly two vertex links that
prevent a well-centered triangle mesh in~$\RR^{2}$,
there are infinitely many vertex links that prohibit
a well-centered tetrahedral mesh in~$\RR^{3}$.}

\section{Introduction}

An $n$-dimensional simplex is {\emph{$n$-well-centered}} if its
circumcenter lies in its interior. More generally, it is
{\emph{$k$-well-centered}} if each of its $k$-dimensional faces is
$k$-well-centered.  It is {\emph{completely well-centered}} if it is
$k$-well-centered for each $k$, $1 \le k \le n$ \cite{VaHiGu2008}.
Several authors have noted the possible application of well-centered
meshes to particular problems~\cite{VaHiGu2008}.  Among these are
Nicolaides~\cite{Nicolaides1992} and
Sazonov~et~al.~\cite{SaHaMoWe2006,SaHaMoWe2007}, who have discussed
the covolume method and its application in electromagnetics
simulations.  Also, Kimmel and Sethian~\cite{KiSe1998} described an
algorithm for numerically solving the Eikonal equation on triangulated
domains.  Their algorithm, which can be used to compute geodesic paths
on triangulated surfaces, is described first for acute triangulations
(i.e., $2$-well-centered triangulations) and requires additional work
for triangulations that are nonacute.  \"Ung\"or and Sheffer
\cite{UnSh2002} used acute planar triangulations when they introduced
the tent-pitching algorithm for space-time meshing.  Well-centered
meshes also find application within Discrete Exterior Calculus (DEC),
a framework for designing numerical methods for partial differential
equations \cite{Hirani2003,DeHiLeMa2005}.  In DEC, a sufficient (but
not necessary) condition for discretizing the Hodge star operator so
that it is represented by a diagonal matrix is to use a well-centered
mesh. The diagonal matrix leads to efficient numerical solution.

Constructing well-centered meshes is a nontrivial task, so
researchers have put effort into finding ways to work
around an algorithmic requirement for well-centered meshes.
Recently, however, there has been progress towards making
well-centered meshes through mesh optimization.
In~\cite{VaHiGuRa2007} we described
a heuristic for obtaining well-centered meshes of planar domains
by starting with a given triangle mesh of the domain and
relocating the interior vertices of the mesh to minimize a cost
function defined on the coordinates of those vertices.  We later
generalized the cost function, making it possible to optimize
meshes of any dimension~\cite{VaHiGuRa2009}.
In~\cite{VaHiGu2008} we give a variety of examples of
tetrahedra that are and are not well-centered, and we show
that it is possible to mesh many simple shapes in~$\RR^3$ with
completely well-centered tetrahedra.  In many cases, the completely
well-centered tetrahedral meshes of~\cite{VaHiGu2008} were obtained
by creating initial relatively high-quality meshes by hand and
applying the optimization method to the initial meshes.

Not all simplicial meshes can be made well-centered by optimizing
the cost functions of~\cite{VaHiGuRa2007} and~\cite{VaHiGuRa2009}.
In some cases the combinatorial properties of the mesh
prevent the mesh from becoming well-centered.  In such
cases the mesh will not be well-centered for any choice of
vertex coordinates.  This paper develops theory and intuition
related to the geometry of well-centered simplices and applies
the geometric conditions to investigate combinatorial properties
of well-centered meshes.

Before discussing the specific results of this paper, a few comments
about terminology are in order.  The term {\emph{well-centered}} may
be used without a qualifying dimension to refer to the general
concept, or may refer to one of the more precise terms if the
context makes clear which more precise term is appropriate.  When
speaking of a triangle, for example, well-centered simultaneously
means $2$-well-centered and completely well-centered, since every
simplex is trivially $1$-well-centered.  All of these definitions
can be applied to simplicial meshes or simplicial complexes embedded
in Euclidean space.  Thus if a simplicial mesh is said to be
$k$-well-centered, this means that every $k$-dimensional simplex
appearing in the mesh properly contains its circumcenter.

\section{Results}

After giving definitions and notation in Sec.~\ref{sec:notation},
we introduce in Sec.~\ref{sec:characterize} several geometric
and algebraic conditions for an $n$-simplex to be well-centered.
Sections~\ref{sec:combinatorial3wccond}
and~\ref{sec:combinatorial2wctetcond} apply the theory
developed in Sec.~\ref{sec:characterize} to establish
conditions on the combinatorial structure of the
neighborhood of a vertex in a well-centered tetrahedral
mesh.  Finally, Sec.~\ref{sec:cubeapps} records some
observations specific to constructing well-centered meshes
of the cube, and Sec.~\ref{sec:conclusion}
offers some concluding thoughts.  We enumerate
the contributions of this paper in more
detail in the following paragraphs.

\begin{inparaenum}[(a)]
In Sec.~\ref{sec:characterize}
we prove three new conditions on when an $n$-simplex is
$n$-well-centered, each phrased in terms of the location
of a vertex $v_{i}$ given the facet of the simplex
opposite $v_{i}$.  These conditions are \item
a necessary condition expressed in terms of geometry
(Prop.~\ref{prop:char_nec}---the Cylinder Condition),
\item a sufficient condition expressed geometrically
(Prop.~\ref{prop:char_suf}---the Prism Condition),
and \item a both necessary and sufficient condition expressed
in terms of cubic polynomial inequalities
(Prop.~\ref{prop:necandsufpoly}).  The two geometric
conditions are generalizations to higher dimensions
of conditions in $\RR^{2}$.

Section~\ref{sec:combinatorial3wccond} investigates
combinatorial properties that follow from the results
in Sec.~\ref{sec:characterize}.
\item We prove a new combinatorial condition that must
be satisfied by the link of an interior vertex in an
$n$-well-centered mesh in $\RR^{n}$ (Thm.~\ref{thm:oneringcond}).
\item As an easy corollary we show that in a $3$-well-centered
tetrahedral mesh in $\RR^{3}$, every interior vertex
has at least seven incident edges (Cor.~\ref{cor:min3wc}).
\item We show that, in stark contrast to the analogous case
in $\RR^{2}$, where there are only two vertex links that
cannot appear in a $2$-well-centered mesh, there are
infinitely many vertex links that cannot appear in a
$3$-well-centered mesh in $\RR^{3}$ (Cor.~\ref{cor:infno3wc}).
\item We also construct an infinite family of vertex
links that can appear in a completely well-centered
tetrahedral mesh in~$\RR^{3}$. \item The section
closes by showing that if a vertex link can appear in
a $3$-well-centered mesh and the vertex link satisfies
some minor additional conditions, then degree three vertices
can be successively inserted into the vertex link to create
an infinite family of vertex links that can appear
in a $3$-well-centered mesh (Prop.~\ref{prop:deg3vtx3wc}).

Section~\ref{sec:combinatorial2wctetcond} develops combinatorial
conditions that $2$-well-centered tetrahedral meshes
in $\RR^{3}$ must satisfy. \item We prove in Thm.~\ref{thm:nminus3}
that no triangulation of $S^{2}$ on $m$ vertices
with a vertex of degree at least $m - 3$ can appear
in a $2$-well-centered tetrahedral mesh in $\RR^{3}$.
\item It follows that in a $2$-well-centered tetrahedral mesh
in $\RR^{3}$, every interior vertex has at least
nine incident edges (Cor.~\ref{cor:min2wc}).
\item We show that vertices of degree three can be inserted
into or deleted from triangulations that permit $2$-well-centered
neighborhoods to create other triangulations that
permit $2$-well-centered neighborhoods (Prop.~\ref{prop:deg3}).
\item Vertices of degree four can also be added to
such triangulations (Prop.~\ref{prop:deg4}).
\end{inparaenum}

At several points in the paper, we make claims about additional
results beyond what is actually proved in this paper.  Further
details about some of these claims appear in the first
author's dissertation~\cite{VanderZee2010}.

\section{Definitions and Notation}
\label{sec:notation}

We begin by introducing some definitions and notation that
will be used throughout the paper.  A simplex is referred
to with a Greek letter, usually $\sigma$ or $\tau$.  A superscript
for a simplex indicates the dimension, so, for example, $\sigma^{n}$
is an $n$-simplex.  The notation
$\sigma^{n} = [v_{0}v_{1}\ldots v_{n}]$ is used to
indicates that $\sigma^{n}$ is the
convex hull of the $n + 1$ vertices $v_{0}, v_{1}, \ldots, v_{n}$.
It is assumed that the vertices of a simplex are in general
position, i.e., that the vertices are affinely independent, so
$\sigma^{n}$ is fully $n$-dimensional.
The \emph{circumcenter} of a simplex $\sigma$ is denoted $c(\sigma)$.
For an $n$-simplex $\sigma^{n}$ embedded in $\RR^{n}$, $c(\sigma^{n})$
is the unique point which has the same distance
from every vertex of $\sigma^{n}$.
When $\sigma^{n}$ is embedded in $\RR^{m}$ for $m > n$,
$c(\sigma^{n})$ is the unique point that among all points equidistant
from the vertices of $\sigma^{n}$ minimizes
the distance to the vertices of $\sigma^{n}$.  The \emph{circumradius}
of a simplex $\sigma$, i.e., the distance from $c(\sigma)$ to the vertices
of $\sigma$, is denoted $R(\sigma)$.

We also use the {\emph{cone}}
operation of algebraic topology \cite{Munkres1984},
writing $\cone{u}{\sigma^n}$ to indicate the simplex formed
by taking the convex hull of a vertex $u$ together with
the $n$-dimensional simplex $\sigma^{n}$ to form a simplex
of dimension $n + 1$.  This notation may also be used
for a set $K$ of simplices;  $\cone{u}{K}$ is
the set of simplices $\{\cone{u}{\sigma} : \sigma \in K\}$.
The {\emph{affine hull}} of a set $S \subset \RR^{m}$,
which we denote $\affine(S)$, is the smallest affine
space that contains $S$.  For a simplex
$\sigma^{n} = [v_{0}\ldots v_{n}]$ we can define it as
\[
\affine([v_{0}\ldots v_{n}]) = \left\{\sum_{i=0}^{n} \lambda_{i}v_{i} :
    \sum_{i=0}^{n} \lambda_{i} = 1,\  -\infty < \lambda_{i} < \infty
    \quad\mathrm{for}\  i = 0,\ldots,n \right\}.
\]
The affine hull of a simplex $\sigma$
may also be called the {\emph{plane of $\sigma$}}.

When referring to a simplex $\sigma$, the {\emph{boundary}} of
the simplex, denoted $\boundary{\sigma}$,
is the union of the set of {\emph{proper
faces}} of $\sigma$, i.e., the set of all faces of $\sigma$ other than
$\sigma$ itself.  The {\emph{interior}} of the simplex, denoted
by $\interior(\sigma)$, is defined as
$\sigma - \boundary{\sigma}$.  More generally, for a set $S$,
we use $\interior(S)$ to refer to the interior of $S$ taken with
respect to the usual topology of $\affine(S)$.
For the {\emph{closure}} of a set $S$ we use the
notation $\closure{S}$.

For a vertex $u$ of a simplicial complex we define $\star~u$,
the {\emph{star}} of the vertex, to be
the union of the interiors of all the simplices for which $u$ is
a vertex.  The closure of the star, or the
{\emph{closed star}}, $\closure{\star~u}$, is
the union of all simplices incident to $u$.  The {\emph{link}}
of a vertex $u$ is defined by $\link~u = \closure{\star~u} - \star~u$.
Many of the terms briefly defined here are defined and
discussed at more length in \cite{Munkres1984}.

We wish to avoid any ambiguity about the dimension of the circumsphere
or circumball of a simplex $\sigma$.  Throughout this paper
the objects {\emph{circumsphere}} and {\emph{circumball}}
always live in $\affine(\sigma)$.  Thus the circumsphere of a
triangle is always a copy of $S^1$, even when the triangle is
embedded in $\RR^3$ as the facet of a tetrahedron.
The {\emph{equatorial ball}} of a simplex $\sigma$, sometimes
denoted $\eqB(\sigma)$, is a ball of radius $R(\sigma)$
centered at $c(\sigma)$, but distinguished from the circumball
of $\sigma$ by the fact that the equatorial ball is considered
in a higher-dimensional space.  For example, the equatorial
ball of a triangle $\tau$ considered in $\RR^3$ is
the unique $3$-dimensional ball that has the circumcircle
of $\tau$ as an equator (see Fig.~\ref{fig:eqball}).
The ambient higher-dimensional space that contains the
equatorial ball should be made clear wherever the
term is used.  In this paper, wherever the term equatorial ball
is used and a higher-dimensional space
is not explicitly specified, the term appears in the context of
a simplex $\sigma^{n} = \cone{u}{\tau^{n-1}}$, and
$\eqB(\tau^{n-1})$ is an $n$-dimensional subset of $\affine(\sigma^{n})$.

We frequently discuss the facets of a simplex
$\sigma^{n} = [v_{0}\ldots v_{n}]$.  As a matter of convention,
the facets usually are denoted
$\tau^{n-1}_{0}, \ldots, \tau^{n-1}_{n}$
with the understanding that the facet $\tau^{n-1}_{i}$ is the
facet opposite vertex $v_{i}$.  Thus
$\sigma^{n} = \cone{v_{i}}{\tau^{n-1}_{i}}$ for each
$i = 0,1,\ldots,n$.

\section{Characterizing the Well-Centered Simplex}
\label{sec:characterize}

We now investigate geometric properties of an $n$-well-cen\-tered
$n$-sim\-plex.  The context for this discussion is
a simplex $\sigma^{n} = \cone{u}{\tau^{n-1}}$ with facet
$\tau^{n-1}$ given.  The vertex $u$ is free to move, and we
wish to determine whether $\sigma^{n}$ is $n$-well-cen\-tered
based on the position of $u$ relative to the fixed vertices
of $\tau^{n-1}$.

We first recall an alternate geometric characterization
of the $n$-well-centered $n$-simplex and state its consequences
in this context.
The alternate characterization is stated in terms of equatorial
balls.  Using the result, which adopts the notational convention
that $\sigma^{n} = \cone{v_{i}}{\tau^{n-1}_{i}}$ for each
$i = 0,1,\ldots,n$, one can determine whether an $n$-simplex
is $n$-well-centered without explicitly computing $c(\sigma^{n})$.

\bigskip

\begin{theorem}[Equatorial Balls Condition]
\label{thm:eqballs}
The simplex $\sigma^{n}$ is $n$-well-centered
if and only if vertex $v_{i}$ lies strictly outside
$B(\tau_{i}^{n-1})$ for each $i=0,1,\ldots,n$.
\end{theorem}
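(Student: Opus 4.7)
The plan is to fix one facet $\tau^{n-1}_i$ and extract a signed height $h_i$ that locates $c(\sigma^n)$ relative to $\affine(\tau^{n-1}_i)$; the well-centeredness condition is then the intersection over $i$ of simple half-space conditions, and each is shown to be equivalent to $v_i$ lying strictly outside $\eqB(\tau^{n-1}_i)$ by a single Pythagorean computation.

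First I would observe that any point equidistant from the vertices of $\tau^{n-1}_i$ lies on the line through $c(\tau^{n-1}_i)$ orthogonal to $\affine(\tau^{n-1}_i)$. Since $c(\sigma^n)$ is such a point, I can write
\[
c(\sigma^n) \;=\; c(\tau^{n-1}_i) + h_i\, \vec{n}_i,
\]
where $\vec{n}_i$ is the unit normal to $\affine(\tau^{n-1}_i)$ inside $\affine(\sigma^n)$ that points into the open half-space containing $v_i$. The point $c(\sigma^n)$ lies in $\interior(\sigma^n)$ precisely when it is on the $v_i$-side of $\affine(\tau^{n-1}_i)$ for every $i$, i.e., $\sigma^n$ is $n$-well-centered iff $h_i > 0$ for all $i = 0, 1, \ldots, n$.

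Next I would convert each sign condition into the equatorial-ball condition. Let $p_i$ be the orthogonal projection of $v_i$ onto $\affine(\tau^{n-1}_i)$, and write $v_i = p_i + b_i \vec{n}_i$ with $b_i > 0$ by affine independence, and $a_i = |p_i - c(\tau^{n-1}_i)|$. Applying the decomposition to any vertex $v_j$ of $\tau^{n-1}_i$, where $v_j - c(\tau^{n-1}_i) \perp \vec{n}_i$, gives $R(\sigma^n)^2 = R(\tau^{n-1}_i)^2 + h_i^2$; applying it to $v_i$ itself gives $R(\sigma^n)^2 = a_i^2 + (h_i - b_i)^2$. Equating and simplifying yields the clean identity
\[
2 h_i b_i \;=\; a_i^2 + b_i^2 - R(\tau^{n-1}_i)^2 \;=\; |v_i - c(\tau^{n-1}_i)|^2 - R(\tau^{n-1}_i)^2.
\]

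Since $b_i > 0$, the sign of $h_i$ matches the sign of the right-hand side, so $h_i > 0$ iff $|v_i - c(\tau^{n-1}_i)| > R(\tau^{n-1}_i)$, iff $v_i$ lies strictly outside $\eqB(\tau^{n-1}_i)$. Conjoining over $i = 0, 1, \ldots, n$ gives the claimed equivalence. I do not anticipate a genuine obstacle here; the only point requiring care is orienting $\vec{n}_i$ consistently so that ``interior side of $\tau^{n-1}_i$'' translates into $h_i > 0$, and remembering that $\eqB(\tau^{n-1}_i)$ is the $n$-dimensional ball in $\affine(\sigma^n)$ (not the $(n-1)$-dimensional circumball in $\affine(\tau^{n-1}_i)$), since $v_i$ generally sits off $\affine(\tau^{n-1}_i)$ and so the Pythagorean split into $a_i^2 + b_i^2$ is what the argument actually measures.
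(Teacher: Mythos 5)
Your proof is correct. The paper gives no in-text proof of Theorem~\ref{thm:eqballs}, deferring entirely to \cite{VaHiGuRa2009}, so there is nothing in the paper itself to compare against; your self-contained argument---writing $c(\sigma^{n}) = c(\tau^{n-1}_{i}) + h_{i}\vec{n}_{i}$ inside $\affine(\sigma^{n})$, characterizing $n$-well-centeredness as $h_{i} > 0$ for all $i$, and deducing from the identity $2h_{i}b_{i} = \lVert v_{i} - c(\tau^{n-1}_{i})\rVert^{2} - R(\tau^{n-1}_{i})^{2}$ (with $b_{i} > 0$) that each sign condition is equivalent to $v_{i}$ lying strictly outside $\eqB(\tau^{n-1}_{i})$---is complete and is the standard route, consistent with the Pythagorean bookkeeping the paper does carry out elsewhere (e.g., $R(\tau^{n-1})^{2} + k^{2} = R(\sigma^{n})^{2}$ in the proof of Proposition~\ref{prop:char_nec}).
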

\begin{proof}
See \cite{VaHiGuRa2009}.
\end{proof}

\bigskip

\begin{figure}
\centering
\begin{minipage}[c]{93pt}
\vspace{35pt}
\includegraphics[width=93pt, trim=169pt 38pt 146pt 144pt, clip]
  {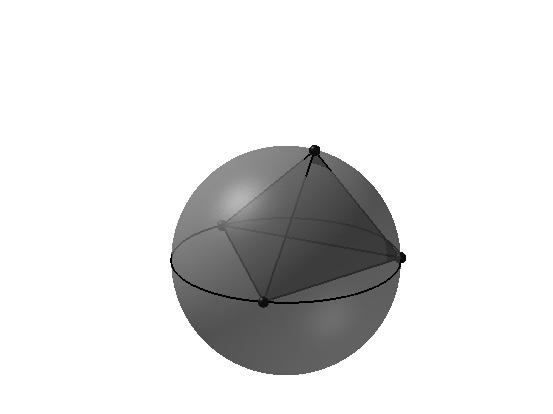}%
\end{minipage}%
\hspace{10pt}%
\begin{minipage}[c]{81pt}
\vspace{24pt}
\includegraphics[width=81pt, trim=222pt 81pt 141pt 149pt, clip]
  {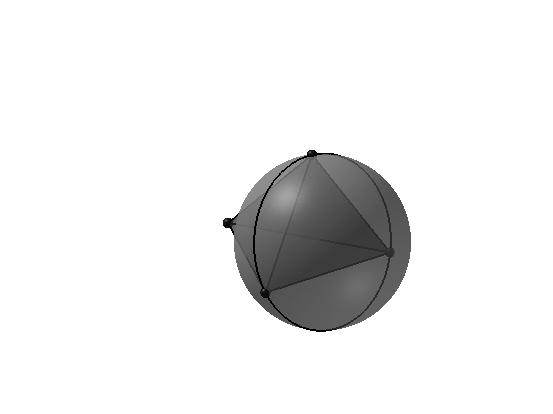}%
\end{minipage}%
\hspace{10pt}%
\begin{minipage}[c]{101pt}
\vspace{14.5pt}
\includegraphics[width=101pt, trim=106pt 26pt 111pt 98pt, clip]
  {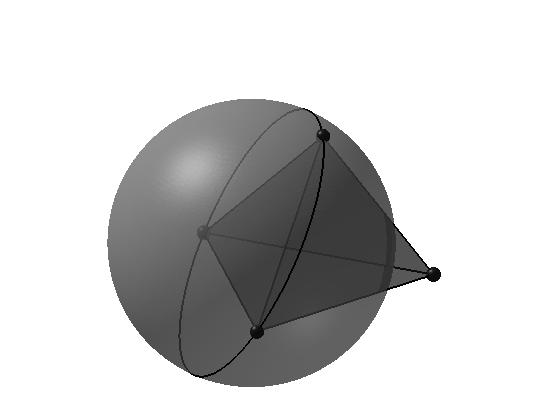}%
\end{minipage}%
\hspace{10pt}%
\begin{minipage}[c]{85pt}
\includegraphics[width=85pt, trim=175pt 65pt 113pt 85pt, clip]
  {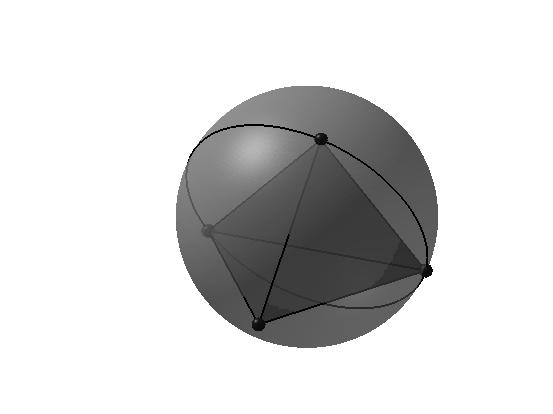}%
\end{minipage}%
\caption{Four views of the same $3$-well-centered tetrahedron $\sigma^{3}$
  in the same orientation.  From left to right the views show
  $\sigma^{3}$ with the equatorial balls
  of its bottom, right, left, and rear facets.
  For each facet $\tau^{2}_{i}$ of $\sigma^{3}$, the circumsphere (i.e.,
  circumcircle) of $\tau^{2}_{i}$, which is an equator of the
  equatorial ball $B(\tau^{2}_{i})$, is shown.  Because the tetrahedron is
  $3$-well-centered, the vertex $v_{i}$ opposite facet
  $\tau^{2}_{i}$ lies outside of $B(\tau^{2}_{i})$; an $n$-simplex is
  $n$-well-centered if and only if for each vertex $v_{i}$,
  $v_{i}$ lies outside of the equatorial ball of the facet
  $\tau^{n-1}_{i}$ opposite $v_{i}$.  For the bottom facet
  and rear facet views in the figure, the reader may need to
  look closely to see that the edges incident to $v_{i}$ do
  pierce $B(\tau^{2}_{i})$, and $v_{i}$ lies outside $B(\tau^{2}_{i})$.}
\label{fig:eqball}
\end{figure}

Figure~\ref{fig:eqball} illustrates Theorem~\ref{thm:eqballs} as
it applies to a tetrahedron.  For each vertex $v_{i}$ of the
tetrahedron, Fig.~\ref{fig:eqball} shows the equatorial ball
$B(\tau_{i})$ of the facet $\tau_{i}$ opposite $v_{i}$, emphasizing
in a darker color the circumcircle of $\tau_{i}$ (which is an
equator of $B(\tau_{i})$).  The figure shows that in each case
$v_{i}$ is outside the equatorial ball of $\tau_{i}$, so we can
conclude that the tetrahedron is $3$-well-centered.  Moreover,
this same condition is satisfied by every $3$-well-centered
tetrahedron.

In the context of an $n$-simplex $\sigma^{n}$ with a free
vertex~$u$ opposite a fixed facet~$\tau^{n-1}$,
Theorem~\ref{thm:eqballs} becomes a necessary condition
that vertex $u$ must satisfy if $\sigma^{n}$ is to be
$n$-well-centered.

\bigskip

\begin{corollary}[One-Facet Equatorial Ball Condition]
\label{cor:onefcteqball}
Let $\sigma^{n} = \cone{u}{\tau^{n-1}}$.
If the simplex $\sigma^{n}$ is $n$-well-centered, then
$u$ lies strictly outside of $\eqB(\tau^{n-1})$.
\end{corollary}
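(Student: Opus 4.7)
The plan is to observe that this corollary is essentially a restatement of one piece of Theorem~\ref{thm:eqballs} in the notation adopted for this section. Indeed, writing $\sigma^{n} = \cone{u}{\tau^{n-1}}$ simply amounts to singling out one of the vertices of $\sigma^{n}$ (call it $v_{i}$) and its opposite facet (call it $\tau^{n-1}_{i}$), and asserting $u = v_{i}$ and $\tau^{n-1} = \tau^{n-1}_{i}$.

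Given this identification, the proof is immediate. First I would invoke Theorem~\ref{thm:eqballs}: because $\sigma^{n}$ is assumed $n$-well-centered, the theorem supplies the stronger statement that \emph{every} vertex $v_{j}$ of $\sigma^{n}$ lies strictly outside $\eqB(\tau^{n-1}_{j})$, for $j = 0, 1, \ldots, n$. I would then select the index $i$ corresponding to the distinguished vertex $u$, obtaining that $u = v_{i}$ lies strictly outside $\eqB(\tau^{n-1}_{i}) = \eqB(\tau^{n-1})$, which is exactly the conclusion.

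There is essentially no obstacle here; the only thing worth remarking on is that $\eqB(\tau^{n-1})$ is being interpreted as a subset of $\affine(\sigma^{n})$, an $n$-dimensional ball rather than the $(n-1)$-dimensional circumball of $\tau^{n-1}$ inside $\affine(\tau^{n-1})$. This matches the convention stated immediately after the definition of equatorial ball, under which equatorial balls of facets of $\sigma^{n}$ live in $\affine(\sigma^{n})$, and it is also the interpretation used in the statement of Theorem~\ref{thm:eqballs}. So no further work is required; the corollary is just the specialization of the Equatorial Balls Condition to a single chosen facet.
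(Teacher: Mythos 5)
Your proof is correct and matches the paper's treatment exactly: the paper states Corollary~\ref{cor:onefcteqball} as an immediate specialization of Theorem~\ref{thm:eqballs} to the single distinguished vertex $u$ and its opposite facet, with no further argument needed. Your remark about $\eqB(\tau^{n-1})$ living in $\affine(\sigma^{n})$ is also consistent with the paper's stated convention.
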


\bigskip

To introduce the remaining results of this section and get a
somewhat different perspective on Corollary~\ref{cor:onefcteqball},
we consider the sketch in Fig.~\ref{fig:sphericaltriangle}.
In the sketch, $\tau$ is a given triangle in $\RR^{3}$ with
fixed vertices.  Triangle $\tau$ represents a facet of a
tetrahedron $\sigma$ whose fourth vertex $u$ has not yet
been determined.  (Thus tetrahedron $\sigma$ does not
appear in Fig.~\ref{fig:sphericaltriangle}.)  We
suppose, however, that the circumcenter $c(\sigma)$ of the
tetrahedron is known, so $u$ is constrained to lie on the
circumsphere of $\sigma$.  The two sides of
Fig.~\ref{fig:sphericaltriangle} show two different
cases; on the left $\tau$ is not $2$-well-cen\-tered, and
on the right $\tau$ is $2$-well-centered.

\begin{figure}
\centering
\resizebox{!}{160pt}{
  \input{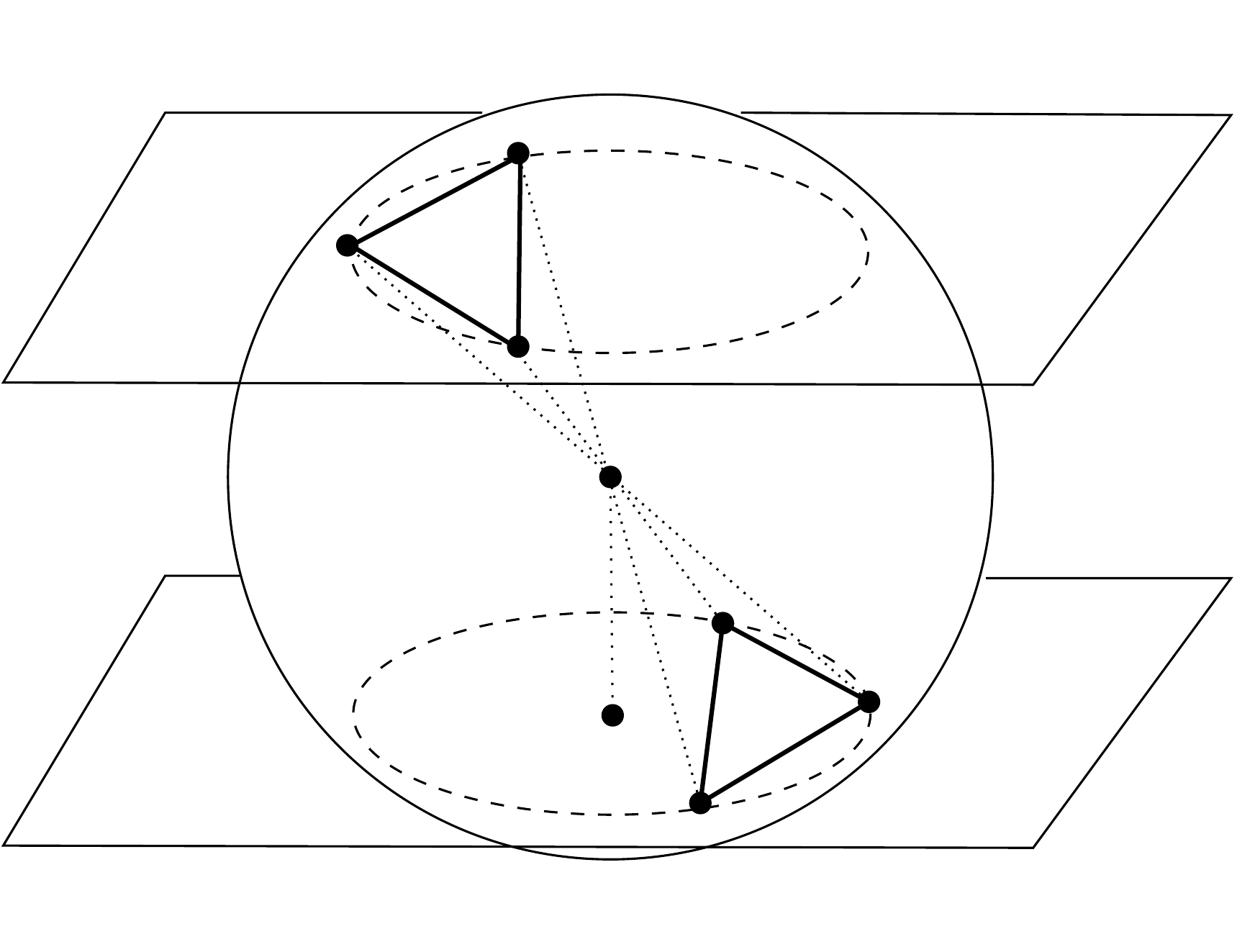_t}}%
\hspace{30pt}%
\resizebox{!}{160pt}{
  \input{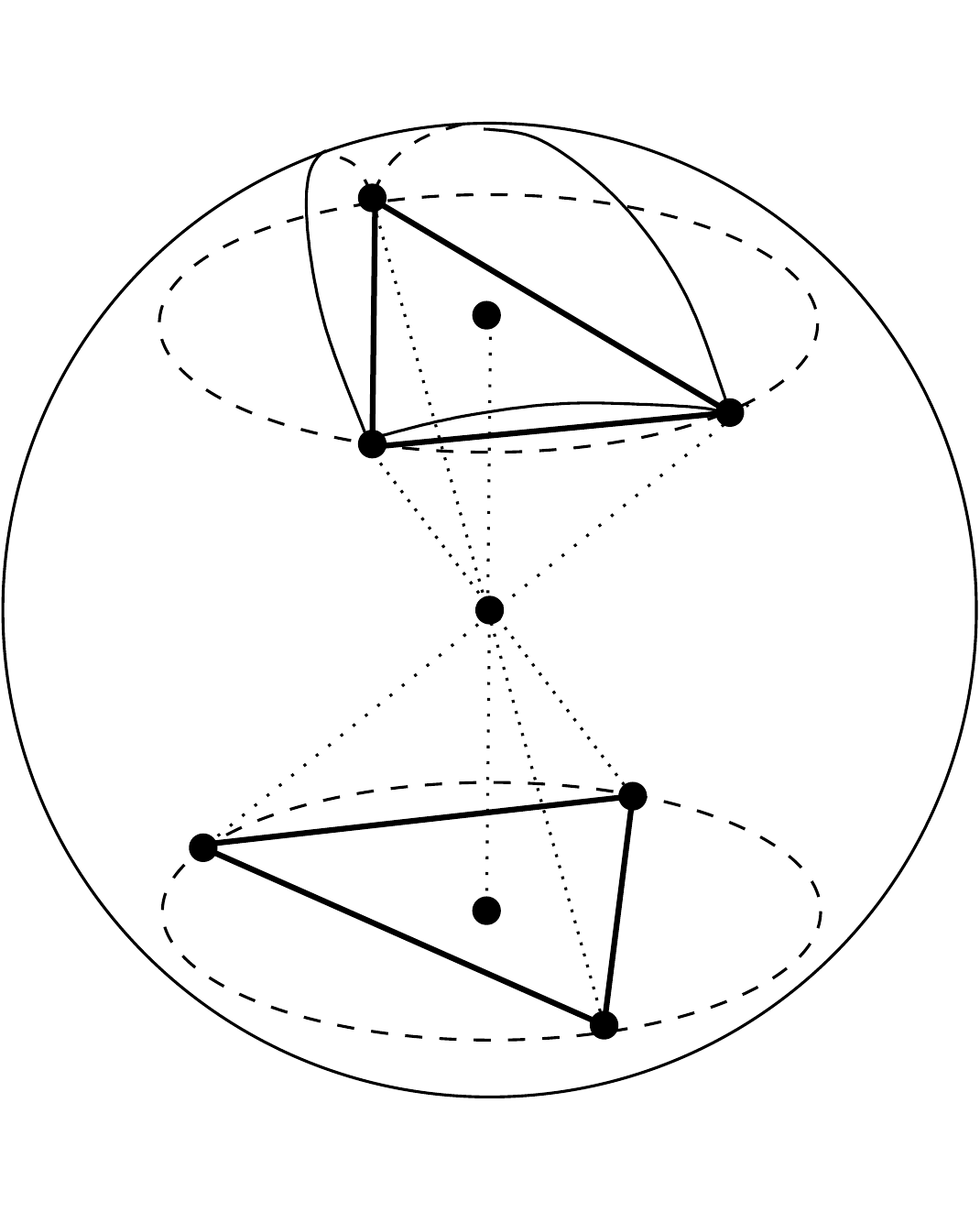_t}}
\caption{A facet $\tau$ of a tetrahedron $\sigma$ (not shown in the
    figure) and the reflection $\tau^{\prime}$ of $\tau$ through
    $c(\sigma)$.  In the case on the left $\tau$ is
    not $2$-well-centered; on the right $\tau$ is
    $2$-well-centered.  The tetrahedron $\sigma$
    is $3$-well-centered if and only if its vertex
    $u$ opposite $\tau$ lies in the spherical triangle
    determined by the intersection of the circumsphere of
    $\sigma$ and a cone on $\tau^{\prime}$ with apex
    $c(\sigma)$.  (The spherical triangle is shown in
    the case on the right, but not on the left.)}
\label{fig:sphericaltriangle}
\end{figure}

Now $\tau$ lies in the plane $\affine(\tau)$, and the reflections
of $\tau$ and $\affine(\tau)$ through $c(\sigma)$ are
$\tau^{\prime}$ and $\affine(\tau^{\prime})$ respectively.  The plane
$\affine(\tau)$ intersects the circumsphere to determine
a lower spherical cup $C$, and $\affine(\tau^{\prime})$
determines an upper spherical cup $C^{\prime}$.
The necessary condition of Corollary~\ref{cor:onefcteqball}
says that when $\sigma$ is $3$-well-centered
$u$ does not lie in the lower spherical cup $C$.  From the
geometry one can see that, in fact, if $\sigma$ is to be
$3$-well-centered, then $u$ must lie strictly inside the spherical
triangle determined by the intersection of $C^{\prime}$, the upper
cup of the circumsphere of $\sigma$, with the geometric cone on
$\tau^{\prime}$ with apex $c(\sigma)$.  (This spherical
triangle is drawn in the case on the right, but not in
the case on the left.)

In particular, there is a necessary
condition that $u$ must lie strictly in the upper cup
$C^{\prime}$ of the sphere.  Thus, speaking with regard to
the orthogonal projection into $\affine(\tau)$, $u$ projects
(vertically in the figure) to the interior of the
circumdisk of $\tau$.  Moreover, if $\tau$ is
$2$-well-centered, as on the right, then the projection of
the spherical triangle into $\affine(\tau)$ contains the
projection of $\tau^{\prime}$ into $\affine(\tau)$.
Thus if the projection of $u$ into $\affine(\tau)$
lies inside the projection of $\tau^{\prime}$
into $\affine(\tau)$, this is sufficient
to establish that $\sigma$ is $3$-well-centered.
These conditions and their generalizations into higher
dimensions are the first two conditions discussed in this
section.  The geometric intuition developed here is formalized
and proved algebraically in Propositions~\ref{prop:char_nec}
and~\ref{prop:char_suf}.

Finally, we consider varying the position of $c(\sigma)$.
Notice that as $c(\sigma)$ moves in
Fig.~\ref{fig:sphericaltriangle} from the circumcenter
of $\tau$ upward along a line orthogonal to $\affine(\tau)$,
the spherical triangle of $u$-positions that produce
a $3$-well-centered tetrahedron with circumcenter $c(\sigma)$
sweeps out a solid $3$-dimensional region.  Tetrahedron $\sigma$
will be $3$-well-centered if and only if $u$ lies in this region.
The section closes by describing this region for arbitrary
dimensions in terms of polynomial inequalities.
(See Fig.~\ref{fig:full_region}.)

Now we state a proposition that gives
a necessary condition for an \mbox{$n$-simplex}~$\sigma^{n}$
to be $n$-well-centered.  See Fig.~\ref{fig:char_nec_prop}.

\bigskip

\begin{proposition}[Cylinder Condition]
\label{prop:char_nec}
Let $\sigma^n$ be an $n$-well-centered $n$-simplex in $\RR^{n}$
with $u$ a vertex of $\sigma^{n}$ and
$\tau^{n-1}$ the facet of $\sigma^{n}$
opposite $u$.  That is, let $\sigma^{n} = \cone{u}{\tau^{n-1}}.$
Let $P$ be the orthogonal projection
$P:\RR^{n} \to \affine(\tau^{n-1}).$
Then $\lVert P(u) - c(\tau^{n-1}) \rVert < R(\tau^{n-1})$, i.e.,
$u$ projects to the interior of the circumball of $\tau^{n-1}$.
\end{proposition}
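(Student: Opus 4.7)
The plan is to combine two sources of information about the position of $u$: the algebraic constraint that $c(\sigma^{n})$ is equidistant from $u$ and the vertices of $\tau^{n-1}$, and the topological fact that the well-centered condition places $c(\sigma^{n})$ in the interior of the cone $\sigma^{n} = \cone{u}{\tau^{n-1}}$. Neither ingredient suffices on its own—in particular, Corollary~\ref{cor:onefcteqball} applied directly to $u$ only recovers the sign of the height of $u$ above $\affine(\tau^{n-1})$—but together they force $P(u)$ strictly inside the circumball of $\tau^{n-1}$.

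First I would fix orthogonal coordinates on $\RR^{n}$ in which $\affine(\tau^{n-1})$ is the hyperplane $x_{n}=0$ and $c(\tau^{n-1})$ is the origin, so that $u=(P(u),h)$ for some scalar $h$, and, since $c(\sigma^{n})$ is equidistant from the vertices of $\tau^{n-1}$, $c(\sigma^{n})=(0,t)$ for some scalar $t$. Because $c(\sigma^{n}) \in \interior(\sigma^{n})$ lies strictly on the same side of $\affine(\tau^{n-1})$ as $u$, I may take $h,t>0$. Expanding $R(\sigma^{n})^{2}$ from the two vantage points—$\|v_{i}-c(\sigma^{n})\|^{2}=R(\tau^{n-1})^{2}+t^{2}$ for any vertex $v_{i}$ of $\tau^{n-1}$, and $\|u-c(\sigma^{n})\|^{2}=\|P(u)\|^{2}+(h-t)^{2}$—yields the key identity
\[
\|P(u)\|^{2}-R(\tau^{n-1})^{2} \;=\; h(2t-h),
\]
so the desired conclusion reduces to proving the inequality $h>2t$.

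Next, using the cone structure $\sigma^{n}=\cone{u}{\tau^{n-1}}$, I would write $c(\sigma^{n})=\alpha u+(1-\alpha)w$ uniquely with $\alpha \in (0,1)$ and $w \in \interior(\tau^{n-1})$. Matching the last coordinate gives $\alpha=t/h$ (so in particular $t<h$), and then matching the first $n-1$ coordinates gives $w=-\tfrac{t}{h-t}P(u)$. Since $w$ is a strict convex combination of the affinely independent vertices $v_{0},\dots,v_{n-1}$, each lying at distance exactly $R(\tau^{n-1})$ from the origin, strict convexity of $\|\cdot\|^{2}$ forces $\|w\|<R(\tau^{n-1})$, i.e.,
\[
\|P(u)\| \;<\; \tfrac{h-t}{t}\,R(\tau^{n-1}).
\]
Finally I would combine this with the identity: squaring the inequality, substituting, and using the factorization $(h-t)^{2}-t^{2}=h(h-2t)$ collapses the two statements to $h(h-2t)\bigl(1+R(\tau^{n-1})^{2}/t^{2}\bigr)>0$, which (given $h,t>0$) yields $h>2t$. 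Plugging this back into the identity gives $\|P(u)\|^{2}<R(\tau^{n-1})^{2}$.

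The main obstacle is precisely this combining step. Each ingredient is individually consistent with $\|P(u)\| \ge R(\tau^{n-1})$: the circumradius identity by itself is only an identity, and the strict-convexity bound $\|P(u)\|<\tfrac{h-t}{t}R(\tau^{n-1})$ is vacuous once $h>2t$. One must resist the temptation to use only the one-facet equatorial ball condition and instead exploit the full force of $c(\sigma^{n})\in\interior(\sigma^{n})$, which is what produces the strictly interior point $w$ and drives the argument.
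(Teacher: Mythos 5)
Your proposal is correct and follows essentially the same route as the paper's proof: both arguments rest on writing $c(\sigma^{n})$ as a strict convex combination of $u$ and a point $w\in\interior(\tau^{n-1})$ (so that $w$ lies strictly inside a circumsphere) together with the Pythagorean circumradius relation in coordinates adapted to $\affine(\tau^{n-1})$, and your key inequality $h>2t$ is exactly the paper's $x_{n}>-k$ after translating the origin from $c(\tau^{n-1})$ to $c(\sigma^{n})$. The only difference is presentational: the paper obtains that inequality geometrically, by noting that the antipode $-u$ of $u$ on the circumsphere of $\sigma^{n}$ lies strictly below $\affine(\tau^{n-1})$, whereas you compute $w=-\tfrac{t}{h-t}P(u)$ explicitly and extract $h>2t$ by combining $\lVert w\rVert<R(\tau^{n-1})$ with the circumradius identity.
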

\begin{proof}
Consider the coordinate system on $\RR^{n}$ such that
$c(\sigma^{n})$ is the origin,
and $\mathrm{aff}(\tau^{n-1}) = \{x \in \RR^{n} : x_{n} = k\}$
for some constant $k \le 0$.  In this coordinate system, $P$
is the projection
\[
P: (x_{1},\ldots,x_{n-1},x_{n}) \mapsto (x_{1},\ldots,x_{n-1},k).
\]
Let $u = (x_{1}, \ldots, x_{n})$ in this
coordinate system.  We have assumed that $\sigma^{n}$ is $n$-well-centered,
so $c(\sigma^{n})$ (the origin) is strictly interior to $\sigma^{n}$.
It follows that $k < 0$ and $x_{n} > 0$.

\begin{figure}
\centering
\includegraphics[width=250pt, trim=118pt 150pt 87pt 130pt, clip]
  {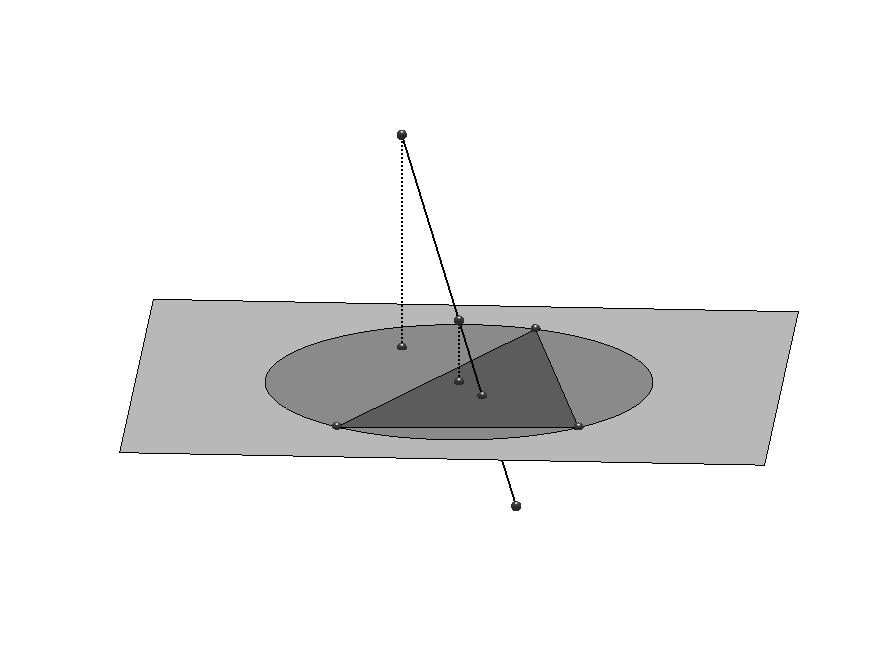}
\begin{picture}(0, 0)(250, 0)
\put(3, 28){\small{$\affine(\tau^{n-1})$}}
\put(91, 137){\small{$u$}}
\put(76, 59){\small{$P(u)$}}
\put(138, 93){\small{$c(\sigma^{n}) = (0, 0, 0)$}}
\put(137, 94){\line(-2, -3){15}}
\put(86, 44){\small{$c(\tau^{n-1})$}}
\put(133, 41){\small{$w$}}
\put(146, 2){\small{$-u$}}
\put(115, 100){\small{$\ell$}}
\put(170, 67){\small{$\tau^{n-1}$}}
\put(169, 68){\line(-2, -1){27}}
\end{picture}
\caption{Because the tetrahedron $\sigma^{n} = \cone{u}{\tau^{n-1}}$ is
  $3$-well-centered, $P(u)$ lies inside the circumball of $\tau^{n-1}$.}
\label{fig:char_nec_prop}
\end{figure}

Consider the line segment $\ell$ from $u$ to $-u$.  Observe that $\ell$
is a diameter of the circumsphere of $\sigma^{n}$.  Moreover,
$\ell \cap \interior(\tau^{n-1}) \ne \emptyset$.  This follows from
the fact that $\sigma^{n}$ is $n$-well-centered;  we have
$\sigma^{n} = \cone{u}{\tau^{n-1}}$ and
$c(\sigma^{n}) \in \interior(\sigma^{n})$,
so there must be some point $w \in \interior(\tau^{n-1})$ such
that $c(\sigma^{n})$ lies on $uw \subsetneq \ell$.  We notice, then,
that the point $-u$ lies below $\affine(\tau^{n-1})$ and conclude
that $x_{n} > -k$.

By the Pythagorean theorem, $R(\tau^{n-1})^{2} + k^2 = R(\sigma^{n})^2$.
We also have
\[
\sum_{i=1}^{n} x_{i}^{2} = R(\sigma^{n})^2,
\]
since $u$ lies on the circumsphere of $\sigma^{n}$.  It follows
that
\[
\quad\lVert P(u) - c(\tau^{n-1}) \rVert^2 = \sum_{i=1}^{n-1} x_{i}^{2}
    = R(\sigma^{n})^2 - x_{n}^2 < R(\sigma^{n})^2 - k^2 = R(\tau^{n-1})^2.
\]
\end{proof}

\bigskip

The statement is not limited to $\sigma^{n} \subset \RR^{n},$ of course.
For a simplex $\sigma^{n} \subset \RR^{m}$ with $m > n,$ there
exists a coordinate system such that
\[
\mathrm{aff}(\sigma^{n}) = \{x \in \RR^{m} :
  x_{i}=0\ \mathrm{for}\ i=n+1,\ldots,m\},
\]
and the same proof applies.

\begin{figure}
\centering
\begin{minipage}[c]{63pt}
\includegraphics[width=63pt, trim=254pt 237pt 238pt 227pt, clip]
  {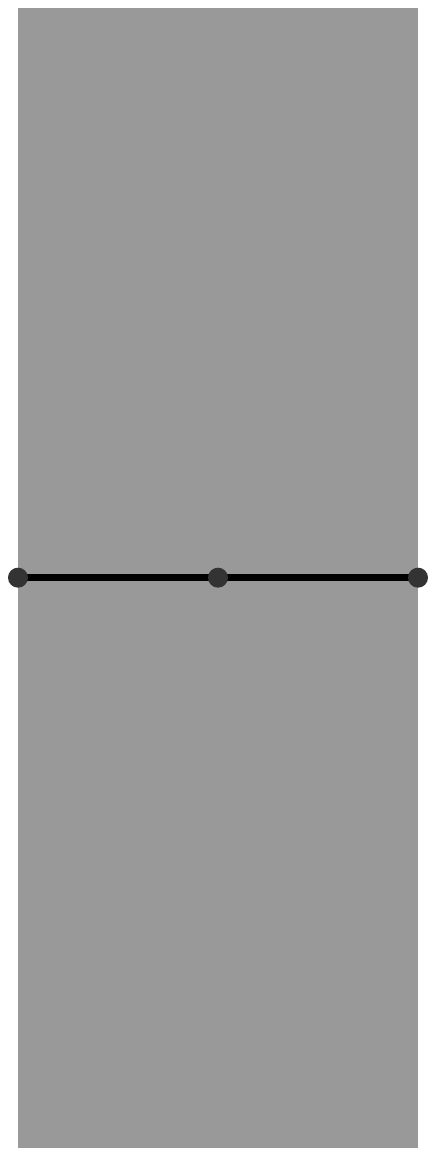}%
\end{minipage}%
\hspace{50pt}%
\begin{minipage}[c]{63pt}
\includegraphics[width=63pt, trim=672pt 287pt 543pt 138pt, clip]
  {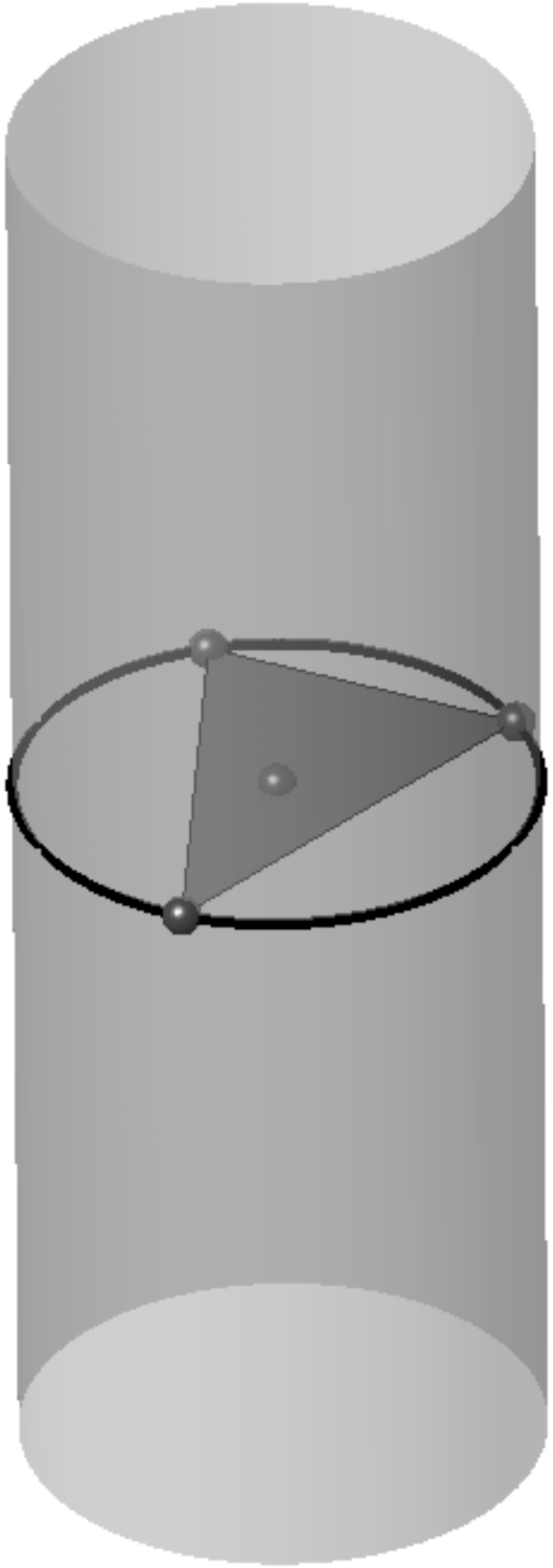}%
\end{minipage}
\caption{If a simplex $\cone{u}{\tau^{n-1}}$ is $n$-well-centered,
  then $u$ is interior to a solid right spherical cylinder over the
  circumsphere of $\tau^{n-1}$.}
\label{fig:nec_cond_region}
\end{figure}

\begin{remark}
Given a particular simplex $\tau^{n-1} \subset \RR^{n}$,
Proposition~\ref{prop:char_nec} provides a geometric
necessary condition on the location of vertex $u$ to
create an $n$-well-centered
simplex $\sigma^{n} = \cone{u}{\tau^{n-1}}.$
Vertex $u$ must lie within a solid right spherical cylinder
over the circumsphere of $\tau^{n-1}$ if $\sigma^{n}$ is to be
$n$-well-centered.  Figure~\ref{fig:nec_cond_region} illustrates
the condition in 2D and 3D, making it clear how this condition
generalizes from the familiar 2-D case into higher dimensions.
In each case the vertices of the base
simplex $\tau^{n-1}$, as well as the circumcenter
$c(\tau^{n-1})$, are marked by small dark-colored balls.
If $\cone{u}{\tau^{n-1}}$ is $n$-well-centered,
then the vertex $u$ must lie inside the gray cylinder over
the circumsphere of $\tau^{n-1}$.  In the notation
of Fig.~\ref{fig:sphericaltriangle}, where the
circumcenter of $\sigma$ is known, the Cylinder Condition
says that $u$ must lie either in the upper cup $C^{\prime}$ or
the lower cup $C$.
\end{remark}

\begin{remark}
The One-Facet Equatorial Ball Condition
(Corollary~\ref{cor:onefcteqball})
is also a necessary condition on the location of vertex $u$.
In $\RR^2$, the combination of Corollary~\ref{cor:onefcteqball}
and the Cylinder Condition is sufficient to guarantee that a
triangle (a $2$-simplex) is acute (is $2$-well-centered).  In
$\RR^n$ for $n \ge 3$, however, an $n$-simplex $\cone{u}{\tau^{n-1}}$
for which $u$ satisfies both of these necessary conditions
might not be $n$-well-centered.
\end{remark}

\begin{example}
For example, consider the tetrahedron $\sigma = \sigma^{3}$ with
vertices $(-0.152, 0.864, -0.48)$,
$(-0.64, -0.6, -0.48)$, $(0.6, -0.64, -0.48)$,
and $(-0.192, -0.64, 0.744)$, whose circumcenter
lies at the origin.  For three of the four facets $\tau^{2}_{i}$
of $\sigma^{3}$, vertex $v_{i}$ satisfies both necessary conditions
with respect to $\tau^{2}_{i}$, and for the fourth facet
$v_{i}$ satisfies the Cylinder Condition, but not the
One-Facet Equatorial Ball Condition.  The tetrahedron $\sigma^{3}$
is not $3$-well-centered.

\begin{figure}
\centering
\begin{minipage}[c]{170pt}
\includegraphics[width=170pt, trim=270pt 130pt 240pt 100pt, clip]
  {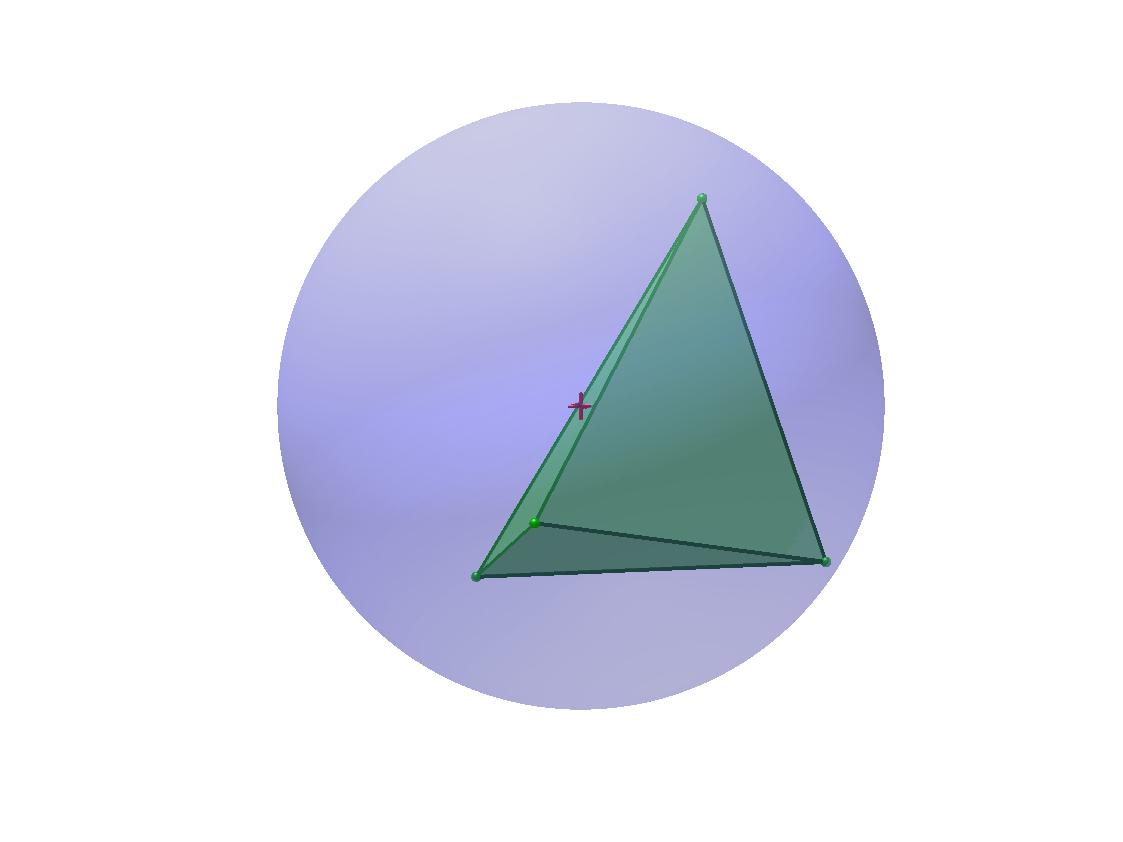}
\end{minipage}%
\hspace{20pt}%
\begin{minipage}[c]{200pt}
\begin{minipage}[c]{90pt}
\includegraphics[width=90pt, trim=145pt 74pt 126pt 59pt, clip]
  {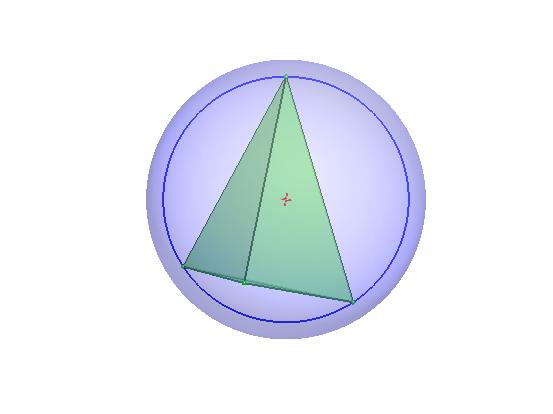}
\end{minipage}%
\hspace{10pt}%
\begin{minipage}[c]{90pt}
\includegraphics[width=90pt, trim=170pt 98pt 151pt 82pt, clip]
  {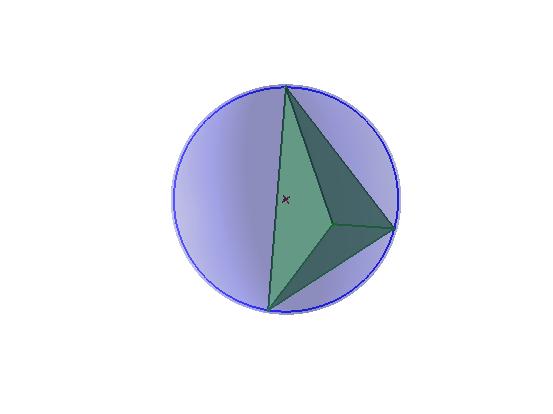}
\end{minipage}\\[7pt]
\begin{minipage}[c]{90pt}
\includegraphics[width=90pt, trim=172pt 100pt 154pt 85pt, clip]
  {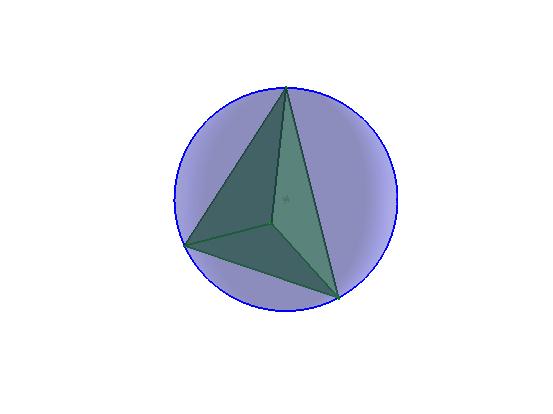}
\end{minipage}%
\hspace{10pt}%
\begin{minipage}[c]{90pt}
\includegraphics[width=90pt, trim=183pt 111pt 164pt 96pt, clip]
  {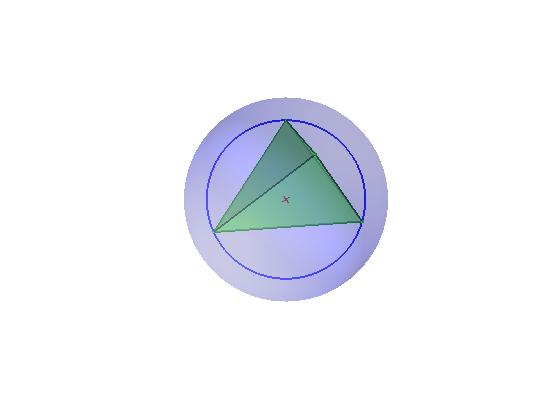}
\end{minipage}
\end{minipage}
\caption{A tetrahedron that is not $3$-well-centered, even though
  every vertex satisfies the necessary condition of
  Proposition~\ref{prop:char_nec}.  Three of the vertices also lie
  outside the equatorial balls of their respective opposite facets.}
\label{fig:tetan2y3n}
\end{figure}

Figure~\ref{fig:tetan2y3n} shows
several different views of $\sigma^{3}$.  The large view at left
shows that $\sigma^{3}$ is not $3$-well-centered; the circumcenter
of $\sigma^{3}$, marked by a small axes indicator, lies outside
the tetrahedron.  The four small views on the
right side of Fig.~\ref{fig:tetan2y3n}
are views directly down onto the facets $\tau^{2}_{i}$ of $\sigma^{3}$.
In each case the circumcircle of $\tau^{2}_{i}$
is rendered in a darker color, and one can see that the vertex above
the facet projects to the interior of the circumball of the facet, i.e.,
that vertex $v_{i}$ satisfies the Cylinder Condition with
respect to $\tau^{2}_{i}$.
In three of the four cases --- all except the case at lower left ---
the vertex also satisfies the One-Facet Equatorial Ball Condition.
The particular example in Fig.~\ref{fig:tetan2y3n} is also
mentioned in~\cite{VaHiGu2008}, which gives some additional
statistics on the tetrahedron.
\end{example}

\begin{example}
The tetrahedron with vertices at
$(-0.01, -0.01, -0.01)$, $(1, 0, 0)$, $(0, 1, 0)$, and $(0, 0, 1)$
is another tetrahedron that is not $3$-well-centered.
It also has three vertices that satisfy the
One-Facet Equatorial Ball Condition and four vertices that
satisfy the Cylinder Condition.  This example is dihedral
acute, in contrast to the previous example.
\end{example}

The above examples illustrate that the One-Facet Equatorial
Ball Condition and the Cylinder Condition are not enough to
establish that the $n$-simplex $\cone{u}{\tau^{n-1}}$
is $n$-well-centered.  However, the following proposition does
provide sufficient conditions that
$\cone{u}{\tau^{n-1}}$ is $n$-well-centered.  See also
Fig.~\ref{fig:char_suf_prop}.

\bigskip

\begin{proposition}[Prism Condition]
\label{prop:char_suf}
Let $\tau^{n-1}$ be an $(n-1)$-well-centered simplex in $\RR^{n}$
and $\sigma^{n} = \cone{u}{\tau^{n-1}}$.
If $u$ lies outside the equatorial
ball $\eqB(\tau^{n-1})$ and the reflection of $P(u)$ through
$c(\tau^{n-1})$ is interior to $\tau^{n-1}$, then
$\sigma^{n}$ is $n$-well-centered.
\end{proposition}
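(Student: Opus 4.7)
The plan is to mimic the coordinate setup used in the proof of the Cylinder Condition and to verify directly that $c(\sigma^{n})$ lies in $\interior(\sigma^{n})$. Choose coordinates on $\RR^{n}$ so that $c(\tau^{n-1})$ is the origin, $\affine(\tau^{n-1}) = \{x \in \RR^{n} : x_{n} = 0\}$, and (by symmetry) $u_{n} > 0$. Write $u = (p,u_{n})$ where $p = P(u)$ is identified with its first $n-1$ coordinates in the hyperplane $\{x_{n}=0\}$. Because the circumcenter is equidistant from every vertex of $\tau^{n-1}$, it lies on the line through $c(\tau^{n-1})$ orthogonal to $\affine(\tau^{n-1})$, so $c(\sigma^{n}) = (0,\ldots,0,h)$ for some real $h$. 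Equating $\lVert c(\sigma^{n}) - u\rVert^{2}$ to $\lVert c(\sigma^{n}) - v_{j}\rVert^{2}$ for any vertex $v_{j}$ of $\tau^{n-1}$ yields, after cancellation, the explicit formula $h = (\lVert u\rVert^{2} - R(\tau^{n-1})^{2})/(2u_{n})$. The first hypothesis $u \notin \eqB(\tau^{n-1})$ gives $\lVert u\rVert > R(\tau^{n-1})$, so $h > 0$.

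Next I would write $c(\sigma^{n}) = (1-t)q + tu$ with $q \in \affine(\tau^{n-1})$. Matching the $x_{n}$-coordinate gives $t = h/u_{n}$, and matching the first $n-1$ coordinates gives $q = -s p$ where $s = t/(1-t)$. Observe that $-p$ is precisely the reflection of $P(u)$ through $c(\tau^{n-1})$, so the second hypothesis places $-p \in \interior(\tau^{n-1})$; since $\tau^{n-1}$ is $(n-1)$-well-centered, $0 = c(\tau^{n-1})$ also lies in $\interior(\tau^{n-1})$. If I can show $0 < s < 1$, then $q = -sp$ lies on the open segment between these two interior points, and by convexity of $\interior(\tau^{n-1})$ we get $q \in \interior(\tau^{n-1})$. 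Combined with $0 < t < 1$ (which is equivalent to $0 < s$), this is exactly the statement that $c(\sigma^{n})$ lies in $\interior(\sigma^{n})$, which is what we want.

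The main obstacle is verifying the inequality $s < 1$, equivalently $t < 1/2$, equivalently (substituting the formula for $h$ and using $\lVert u\rVert^{2} = \lVert p\rVert^{2} + u_{n}^{2}$) the bound $\lVert p\rVert < R(\tau^{n-1})$. For this I would invoke the well-centeredness of $\tau^{n-1}$ and strict convexity of the Euclidean ball: a convex combination of points lying on a sphere, with at least two strictly positive weights, lies strictly inside that sphere. Therefore every point of $\interior(\tau^{n-1})$ is a nontrivial convex combination of its vertices (which sit on the circumsphere) and lies strictly inside the open circumdisk of $\tau^{n-1}$. Applying this to $-p \in \interior(\tau^{n-1})$ yields $\lVert p\rVert = \lVert -p\rVert < R(\tau^{n-1})$, which closes the argument. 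All other steps are routine algebra from the Pythagorean relation and the defining equation for $c(\sigma^{n})$.
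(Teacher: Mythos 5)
Your proof is correct, and it reaches the same geometric conclusion as the paper by a somewhat different computational route. The paper carries over the coordinate frame of the Cylinder Condition (origin at $c(\sigma^{n})$, $\affine(\tau^{n-1})=\{x_{n}=k\}$), uses the antipodal point $-u$ on the circumsphere together with the identity that the reflection of $P(u)$ through $c(\tau^{n-1})$ is exactly $P(-u)$, and argues via convexity of the infinite prism $\tau^{n-1}\times\RR$ that the segment from $c(\sigma^{n})$ to $-u$ crosses $\interior(\tau^{n-1})$; the equatorial-ball hypothesis enters through $\lVert P(-u)\rVert < R(\sigma^{n})$, which pins down the sign of $x_{n}$ relative to $k$. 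You instead put $c(\tau^{n-1})$ at the origin, solve explicitly for the height $h=(\lVert u\rVert^{2}-R(\tau^{n-1})^{2})/(2u_{n})$ of $c(\sigma^{n})$ (so the equatorial-ball hypothesis gives $h>0$ immediately), and show that the point where the line through $u$ and $c(\sigma^{n})$ meets $\affine(\tau^{n-1})$ is a strict convex combination of $c(\tau^{n-1})$ and $-P(u)$, using strict convexity of the ball to get $\lVert P(u)\rVert < R(\tau^{n-1})$, i.e.\ $t<1/2$. Both arguments hinge on the same final fact---the exit point lies between two interior points of $\tau^{n-1}$, one interior by $(n-1)$-well-centeredness and one by hypothesis---but your version is self-contained and makes the quantitative structure explicit (exactly where each hypothesis is used and that the threshold is $t<1/2$), while the paper's avoids computing the circumcenter at all by exploiting the antipode and prism convexity. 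One presentational point: establish $\lVert P(u)\rVert< R(\tau^{n-1})$ (hence $t<1/2<1$) before introducing $s=t/(1-t)$ and the point $q=-sP(u)$, so the decomposition $c(\sigma^{n})=(1-t)q+tu$ is well defined from the outset; since that inequality follows directly from the hypothesis on the reflected projection, this is only a reordering, not a gap.
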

\begin{proof}
We assume the stated hypothesis and
take the same coordinate system that was used in the proof of
Proposition~\ref{prop:char_nec}.  Observe that if
$u$ were on the equatorial ball of $\tau^{n-1}$, then
$c(\sigma^{n})$ would lie in $\tau^{n-1}$, coinciding with
$c(\tau^{n-1})$.  Because
$u$ lies outside the equatorial ball of $\tau^{n-1}$, $c(\sigma^{n})$
lies interior to the same halfspace as $u$ with respect to
$\affine(\tau^{n-1})$.
It follows that $k < 0$ and $x_{n} > k$.

\begin{figure}
\centering
\includegraphics[width=250pt, trim=118pt 150pt 87pt 130pt, clip]
  {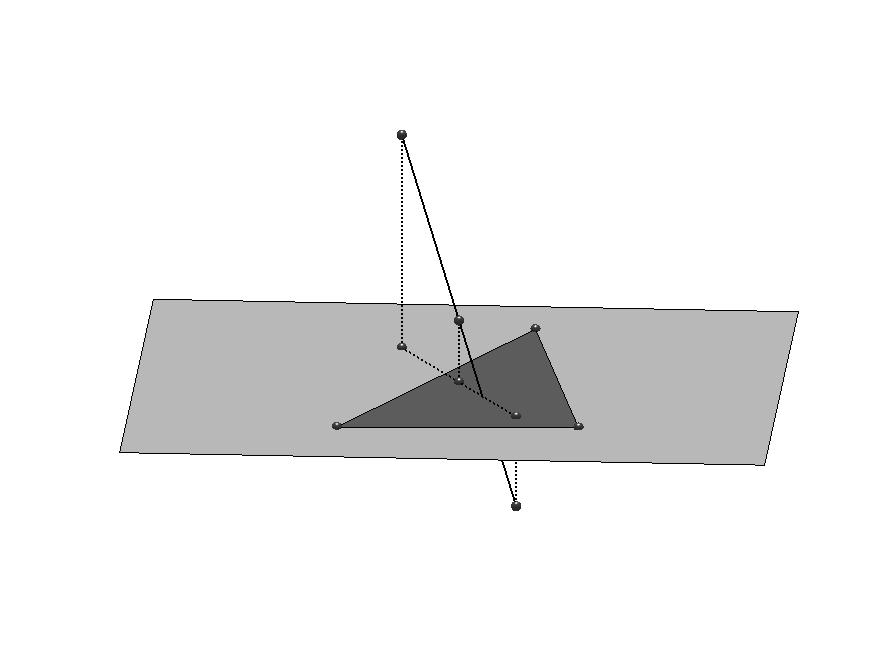}
\begin{picture}(0, 0)(250, 0)
\put(3, 28){\small{$\affine(\tau^{n-1})$}}
\put(91, 137){\small{$u$}}
\put(76, 59){\small{$P(u)$}}
\put(138, 93){\small{$c(\sigma^{n}) = (0, 0, 0)$}}
\put(137, 94){\line(-2, -3){15}}
\put(86, 44){\small{$c(\tau^{n-1})$}}
\put(170, 42){\small{$P(-u)$}}
\put(169, 45){\line(-3, -1){26.5}}
\put(146, 2){\small{$-u$}}
\put(115, 100){\small{$\ell$}}
\put(170, 67){\small{$\tau^{n-1}$}}
\put(169, 68){\line(-2, -1){27}}
\end{picture}
\caption{Because $P(-u)$ and $c(\tau^{n-1}) = P(c(\sigma^{n}))$ are both
  interior to $\tau^{n-1}$ and $c(\sigma^{n})$ is above
  $\affine(\tau^{n-1})$, we know that the tetrahedron
  $\sigma^{n} = \cone{u}{\tau^{n-1}}$ is $3$-well-centered.}
\label{fig:char_suf_prop}
\end{figure}

Observe that, as shown in Fig.~\ref{fig:char_suf_prop}, the reflection of
$P(u)$ through $c(\tau^{n-1})$ is $P(-u)$.
By the hypothesis, $P(-u)$ is interior to
$\tau^{n-1}$.  Thus $P(-u)$ is interior to the circumball
of $\sigma^{n}$ and
\[
\lVert P(u) \rVert^2 = \lVert P(-u) \rVert^2
  = k^2 + \sum_{i=1}^{n-1} x_{i}^{2} <
  R(\sigma^{n})^2 = \sum_{i=1}^{n} x_{i}^{2}
\]
It follows that $\lvert x_{n} \rvert > \lvert k \rvert = -k$.
Since we know that $x_{n} > k$, we conclude that $x_{n} > -k > 0$.

Let $\ell$ be the line segment from $u$ to $-u.$  We will show that
$\ell$ intersects the interior of $\tau^{n-1}$.
Then, because $\sigma^{n} = \cone{u}{\tau^{n-1}}$ and
$k < 0 < x_{n}$ (so that $c(\sigma^{n}) \in \ell$ is above
$\tau^{n-1}$ and below $u$), we will be able to conclude
that $c(\sigma^{n})$ is interior to $\sigma^{n}$.
We know that $P(c(\sigma^{n})) = c(\tau^{n-1})$
is interior to $\tau^{n-1}$ because $\tau^{n-1}$ is
$(n-1)$-well-centered.  Since $P(-u)$ and $P(c(\sigma^{n}))$ are both
interior to $\tau^{n-1},$ the line segment from $c(\sigma^{n})$ to
$-u$, which is contained in $\ell$, is interior to the (convex)
infinite prism $\tau^{n-1} \times \RR$.  Moreover, $0 > k > -x_{n}$
(i.e., $c(\sigma^{n})$ is above $\tau^{n-1}$ and $-u$ is
below $\tau^{n-1}$), so this part of segment $\ell$ intersects
the interior of $\tau^{n-1}$.
\end{proof}

\bigskip

As was the case for Proposition~\ref{prop:char_nec},
Proposition~\ref{prop:char_suf} is not limited to
$\sigma^{n} \subset \RR^{n}$;  in higher-dimen\-sional
spaces $\RR^{m}$ there is a coordinate
system such that
\[
\mathrm{aff}(\sigma^{n}) = \{x \in \RR^{m} :
  x_{i}=0\ \mathrm{for}\ i=n+1,\ldots,m\},
\]
and the same proof applies.

After reading Proposition~\ref{prop:char_suf}, one might ask whether
the requirement that the facet $\tau^{n-1}$ be $(n-1)$-well-centered can
be removed from the proposition.  It may already be clear
from the discussion of Fig.~\ref{fig:sphericaltriangle} that
the answer to this question is no.  The tetrahedron in
Fig.~\ref{fig:tetan2n3n} is an explicit example that
confirms the requirement cannot be removed.

\begin{example}
\begin{figure}
\centering
\includegraphics[width=150pt, trim=250pt 110pt 210pt 80pt, clip]
  {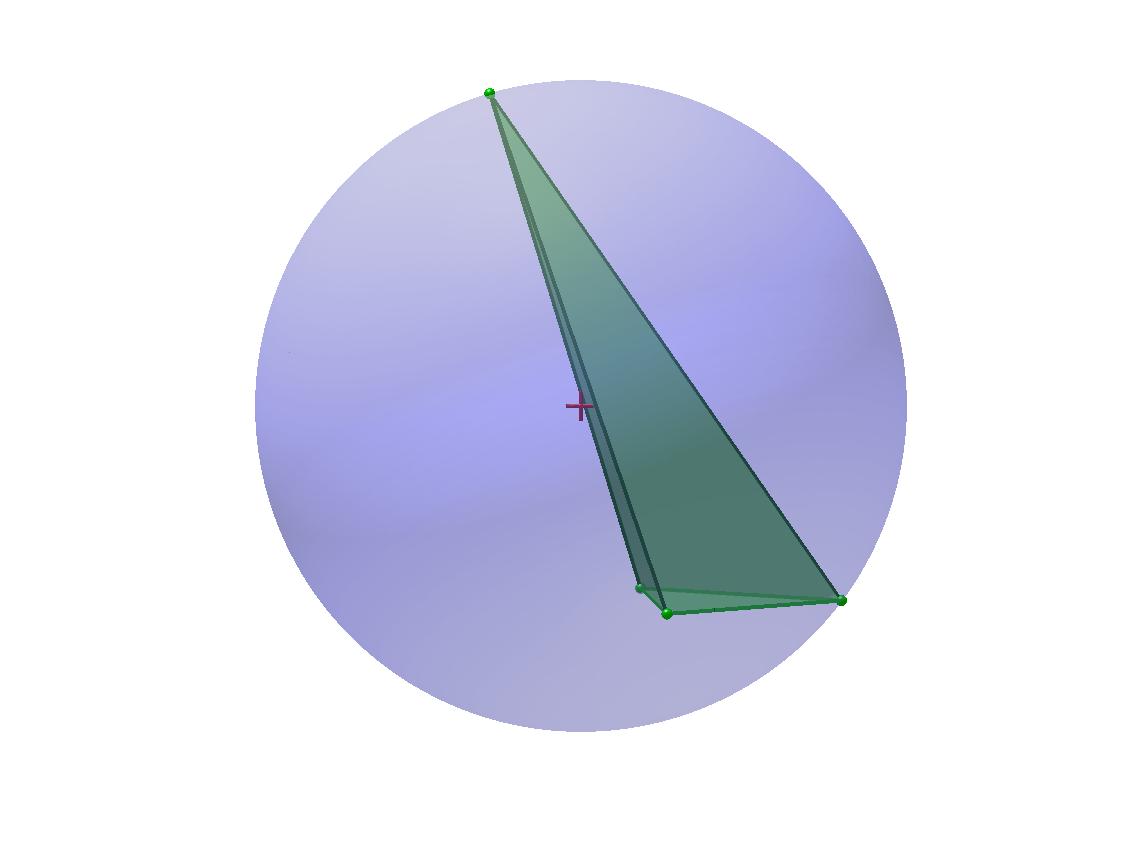}
\caption{A tetrahedron for which the top vertex and bottom facet satisfy
  all of the sufficient conditions for being $3$-well-centered except
  that the bottom facet is not $2$-well-centered.}
\label{fig:tetan2n3n}
\end{figure}

The tetrahedron in Fig.~\ref{fig:tetan2n3n} is the convex hull
of vertices
$v_{0} = (0.224, -0.768, -0.6)$,
$v_{1} = (0.8, 0, -0.6)$, $v_{2} = (0.224, 0.768, -0.6)$, and
$v_{3} = (-0.28, 0, 0.96)$.
The bottom facet in Fig.~\ref{fig:tetan2n3n}, which is the triangle
$\tau^{2}_{3} = [v_{0}v_{1}v_{2}]$, lies in the plane
$x_{3} = -0.6$ and is an obtuse triangle.
The obtuse angle is at vertex $v_{2}$, the rightmost vertex in
Fig.~\ref{fig:tetan2n3n}.  Taking this bottom facet to be $\tau^{2}$
as in Proposition~\ref{prop:char_suf}, and the top vertex to
be $u = v_{3}$, we satisfy the conditions that
$u$ lie outside the equatorial ball of $\tau^2$ and that the
reflection of $P(u)$ through $c(\tau^2)$ be interior to $\tau^2$.
Indeed, $c(\tau^{2}) = (0, 0, -0.6)$ and $R(\tau^{2}) = 0.8$
with $\lVert u - c(\tau^{2})\rVert = \sqrt{2.512} \approx 1.58$,
so $u$ is outside $\eqB(\tau^{2})$, and $P(-u) = (0.28, 0, -0.6)$
lies inside $\tau^{2}$.
Thus we satisfy all of the Prism Condition
except the requirement that $\tau^2$ be $2$-well-centered.
It is clear from Fig.~\ref{fig:tetan2n3n} that this is
not sufficient;  the circumcenter of tetrahedron $\cone{u}{\tau^{2}}$,
marked by a small axes indicator, lies outside the tetrahedron,
so the tetrahedron is not $3$-well-centered.
\end{example}

\begin{remark}
\begin{figure}
\centering
\begin{minipage}[c]{63pt}
\includegraphics[width=63pt, trim=254pt 237pt 238pt 227pt, clip]
  {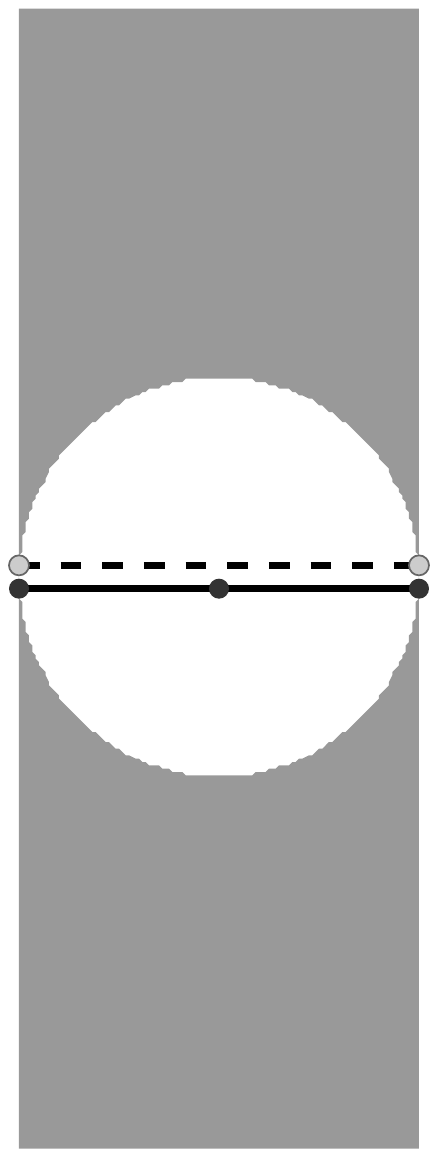}%
\end{minipage}%
\hspace{50pt}%
\begin{minipage}[c]{63pt}
\includegraphics[width=63pt, trim=672pt 287pt 543pt 138pt, clip]
  {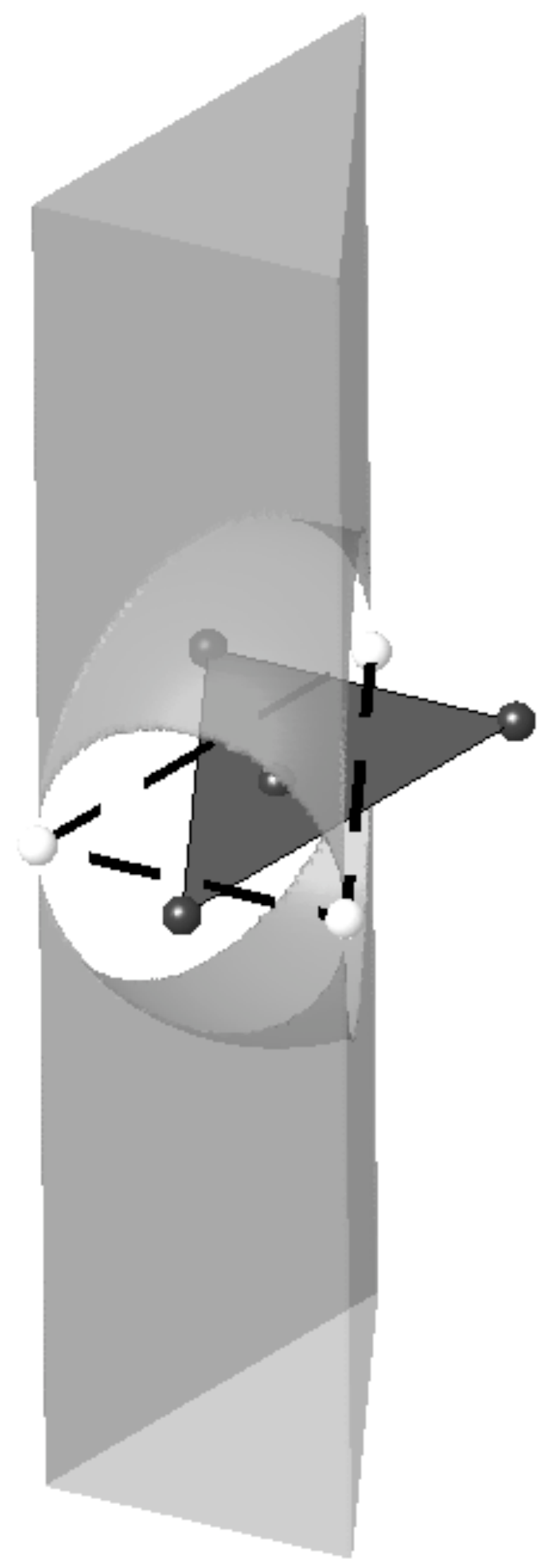}%
\end{minipage}
\caption{If the base simplex $\tau^{n-1}$ is $(n-1)$-well-centered
  and vertex $u$ is both outside $B(\tau^{n-1})$ and inside the
  infinite prism over the reflection of $\tau^{n-1}$ through
  $c(\tau^{n-1})$, then the simplex $\cone{u}{\tau^{n-1}}$
  is $n$-well-centered.}
\label{fig:suf_cond_region}
\end{figure}

Like the condition of Proposition~\ref{prop:char_nec}, the
condition of Proposition~\ref{prop:char_suf} has a nice
geometric interpretation.  Given an $(n-1)$-well-centered
facet $\tau^{n-1}$,
if the vertex $u$ opposite $\tau^{n-1}$ lies outside
$B(\tau^{n-1})$ and within an infinite
prism (a right cylinder) over
the reflection of $\tau^{n-1}$ through its circumcenter, then
$\sigma^{n} = \cone{u}{\tau^{n-1}}$ is $n$-well-centered.
Figure~\ref{fig:suf_cond_region} portrays the region defined by the
Prism Condition for specific examples in $2$ and $3$ dimensions.
In each case the base simplex $\tau^{n-1}$ is shown in dark colors and solid
lines, and its reflection is outlined with lighter colors and dashed
lines.  In the figure, each $\tau^{n-1}$ is $(n-1)$-well-centered,
so for a vertex $u$ lying inside the prism over the reflection
of $\tau^{n-1}$ through $c(\tau^{n-1})$ and outside
the equatorial ball of $\tau^{n-1}$,
i.e., for a vertex $u$ lying in the gray region
shown in Fig.~\ref{fig:suf_cond_region},
the simplex $\cone{u}{\tau^{n-1}}$ will be $n$-well-centered.  Note that
on the left in Fig.~\ref{fig:suf_cond_region} the base simplex and its
reflection should actually lie on top of each other, but are set
slightly apart in the drawing so the reader can distinguish them
from each other.
\end{remark}

We have now established two different conditions for an $n$-simplex
to be $n$-well-centered.  One condition is a necessary condition,
and the other condition is a sufficient condition.  Both conditions
are stated in terms of the location of a vertex $u$ relative
to the facet $\tau^{n-1}$ opposite $u$.  The regions defined
by the necessary condition and the sufficient condition may be
quite different from each other.  For example, in the 3-D
portions of Figs.~\ref{fig:nec_cond_region} and~\ref{fig:suf_cond_region}
the same base simplex $\tau^{n-1}$ yields rather different regions
for the two conditions.  It is natural
to seek a precise description of the region where the vertex
$u$ will produce an $n$-well-centered $n$-simplex
$\cone{u}{\tau^{n-1}}$.
The following discussion develops just such a set of conditions
on the location of $u$.  The conditions take the form of a
system of cubic polynomial inequalities in the coordinates of $u$.
The simplex $\cone{u}{\tau^{n-1}}$ will be $n$-well-centered
if and only if the coordinates of $u$ satisfy the polynomial
inequalities.

The inequalities are derived from a linear system of equations discussed
in \cite{BeHi2009}.  This linear system, which provides one way to
compute the circumcenter of a simplex $\sigma^{n}$ embedded in
$\RR^{m}$ for $m \ge n$, is briefly reviewed here.
We may write the circumcenter $c$ of a simplex $\sigma^{n} =
[v_{0}v_{1}\ldots v_{n}]$ as a linear combination of the vertices
$v_{i} \in \RR^{m}$,
\[
c = \alpha_{0}v_{0} + \alpha_{1}v_{1} + \cdots + \alpha_{n}v_{n},
\]
with the coefficients $\alpha_{i}$ satisfying
$\sum_{i=0}^{n} \alpha_{i} = 1$.
The coefficients $\alpha_{i}$ are known as the
\emph{barycentric coordinates} of the circumcenter.
The condition that $\sigma^{n}$ be $n$-well-centered
is the same as the condition that $0 < \alpha_{i}$ for every
$\alpha_{i}$, i.e., the condition that the circumcenter be a convex
combination of the vertices of $\sigma^{n}$ with strictly positive
coefficients.

Suppose we are given the coordinates of the vertices $v_{i}$
of $\sigma^{n}$.  We know that
\[
\langle c - v_{i}, c - v_{i} \rangle = \lVert c - v_{i} \rVert^2
    = R^2
\]
for each vertex $v_{i}.$  Introducing the
variable $\lambda = R^2 - \lVert c \rVert^2$, we obtain the
$n+1$ equations $2\langle c, v_{i}\rangle + \lambda =
\lVert v_{i} \rVert^2$.  Since the vertices $v_{i}$ are known,
each equation is a linear equation in the $n+2$ unknowns
$\alpha_{0}, \alpha_{1}, \ldots, \alpha_{n}, \lambda.$
The final equation of the system is $\sum_{i=0}^{n} \alpha_{i} = 1$,
which forces the $\alpha_{i}$ to be barycentric coordinates.
As long as this linear system of $n+2$ equations in $n+2$
unknowns is nonsingular, we can solve for the barycentric
coordinates.  If the simplex is nondegenerate, i.e., if the
$n+1$ vertices are affinely independent, then the simplex has a
unique, finite circumcenter, which has unique barycentric coordinates.
It follows that the linear system has a unique
solution; hence the matrix is nonsingular.

Let $A$ be the matrix of this linear system and $b$ the right-hand
side,
\[
A =
\begin{pmatrix}
2\langle v_{0}, v_{0}\rangle & 2\langle v_{0}, v_{1} \rangle
& \cdots & 2\langle v_{0}, v_{n}\rangle & 1\\
2\langle v_{1}, v_{0}\rangle & 2\langle v_{1}, v_{1} \rangle
& \cdots & 2\langle v_{1}, v_{n}\rangle & 1\\
\vdots & \vdots & \ddots & \vdots & \vdots \\
2\langle v_{n}, v_{0}\rangle & 2\langle v_{n}, v_{1} \rangle
& \cdots & 2\langle v_{n}, v_{n}\rangle & 1\\
1 & 1 & \cdots & 1 & 0
\end{pmatrix},
\qquad
b =
\begin{pmatrix}
\langle v_{0}, v_{0}\rangle\\
\langle v_{1}, v_{1}\rangle\\
\vdots\\
\langle v_{n}, v_{n}\rangle\\
1
\end{pmatrix}.
\]
For $i=0,1,\ldots,n$ we let $A_{i}$ be the matrix $A$ with
column $i + 1$ replaced by $b$.  Cramer's rule tells us that
$\alpha_{i} = \det(A_{i})/\det(A).$  If we consider vertices
$v_{0},\ldots,v_{n-1}$ to be the vertices of some given $\tau^{n-1}$
and $v_{n}$ to be a free vertex $u$, then the barycentric
coordinates $\alpha_{i}$ are rational functions of the coordinates of $u$.
Thus the conditions $\alpha_{i} > 0$ become algebraic inequalities
in the coordinates of $u$.

To simplify matrix $A$ a little, we translate each vertex of the
simplex by $-v_{0}$.  The translation
may change the value of $\lambda$ in the solution vector --- in fact,
$\lambda = 0$ always holds for the translated system ---
but the barycentric coordinates of the circumcenter are not changed
by translating the vertices of the simplex.
If $m > n$ we make one further simplification.
In the translated coordinate system, we rotate the simplex
about the origin $v_{0}$ to obtain a simplex for which
vector $v_{i} - v_{0} \in \{x:x_{n+1} = \cdots = x_{m} = 0\}$
for each $i = 1,\ldots,n$.  Rotation
about the origin is an orthogonal transformation, so it does
not change any of the entries of the linear system and does
not affect the barycentric coordinates.

If we restrict our
attention to one of the open halfspaces bounded by
$\affine(\tau^{n-1})$, we have either $\det(A) > 0$ or
$\det(A) < 0$ throughout the halfspace, because $\det(A)$ is a
continuous function of the entries in $A$ and $A$ is
singular only when $u \in \affine(\tau^{n-1})$.
We will see that, in fact, $\det(A) \le 0$ holds everywhere,
so $\det(A) < 0$ throughout the halfspace.

The first row and the first
column of $A$ in the simplified linear system are all zeroes
except for the last entry, which is $1$ in both cases.  Computing the
determinant of $A$ by first expanding across the first row and then
expanding down the first column (one with an odd number of entries
and the other with an even number of entries) we find that
$\det(A) = -\det(B)$ where $B$ is the submatrix of $A$ spanning rows
$2$ to $n + 1$ and columns $2$ to $n + 1$.  The $n \times n$
submatrix $B$ has the form $2V^{\mathrm{T}}V$, where $V$ is the
$m \times n$ matrix
\[
V = \begin{pmatrix}
v_{1} - v_{0} & v_{2} - v_{0} & \cdots & v_{n} - v_{0}
\end{pmatrix}.
\]
Because of the earlier rotation of the simplex,
the last $m - n$ coordinates of each
vector $v_{i} - v_{0}$ are zeroes, and if we take $\widetilde{V}$
to be the first $n$ rows of $V$, then $\widetilde{V}$ is
an $n \times n$ matrix that satisfies
$V^{\mathrm{T}}V = \widetilde{V}^{\mathrm{T}}\widetilde{V}$.
It follows that $B = 2\widetilde{V}^{\mathrm{T}}\widetilde{V}$.
Thus $\det(B) = 2^{n}\det(\widetilde{V})^2 \ge 0$.  Observing that
$\det(\widetilde{V})$ is the signed volume of the parallelepiped
spanned by the vectors that form the columns of $\widetilde{V}$, we
note that $\det(B) > 0$ holds when the columns of
$\widetilde{V}$ are linearly independent, i.e. when the
vertices of the original simplex are affinely independent.

Thus with the assumption that $\tau^{n-1}$ is a fully
$(n-1)$-dimensional simplex, we know that $\det(A) < 0$
when the vertex $u$ lies in either of the open halfspaces
bounded by $\affine(\tau^{n-1})$.  For $u$ outside
$\affine(\tau^{n-1})$, then, we conclude that
$\alpha_{i} = \det(A_{i})/\det(A) > 0$
if and only if $\det(A_{i}) < 0$.  Hence the
simplex $\cone{u}{\tau^{n-1}}$ will be $n$-well-centered
if and only if the coordinates of $u$ satisfy the polynomial
inequality $\det(A_{i}) < 0$.

It remains to show that the equation $\det(A_{i}) = 0$ is a
polynomial in the coordinates of $u$ of degree at most $3$.
To do this we examine the entries of $A_{i}$ that depend on $u$.
All of these entries appear in row $n + 1$ or in column $n + 1$.
At most two of these entries are quadratic in the coordinates
of $u$---the entry at position $(n+1, n+1)$ and the entry at
$(n+1, i+1)$.  (Only one entry is quadratic in the coordinates of
$u$ when $i = n$.)  Every other entry that depends on $u$ is
linear in the coordinates of $u$.
Using $S_{n}$ to denote the group of permutations on $n$ letters,
the determinant of an $n\times n$ matrix $M$ can be written as
\[
\det(M) = \sum_{\pi \in S_{n}} \mathrm{sgn}(\pi) \prod_{j=1}^{n} M_{j\pi(j)},
\]
where $M_{jk}$ stands for the entry in row $j$, column $k$ of matrix $M$,
and $\mathrm{sgn}(\pi)$ is the signum function applied to the permutation.
Considering the structure of matrix $A_{i}$,
we observe that each product in this definition of $\det(A_{i})$
involves at most two terms that depend on $u$, and at most one
of these---the entry selected from row $n + 1$---is quadratic
in the coordinates of $u$.  Thus the determinant is a summation
of terms that are polynomial in the coordinates of $u$ and have
degree at most $3$.

\begin{figure}
\centering
\includegraphics[width=54pt, trim=617pt 178pt 567pt 142pt, clip]
  {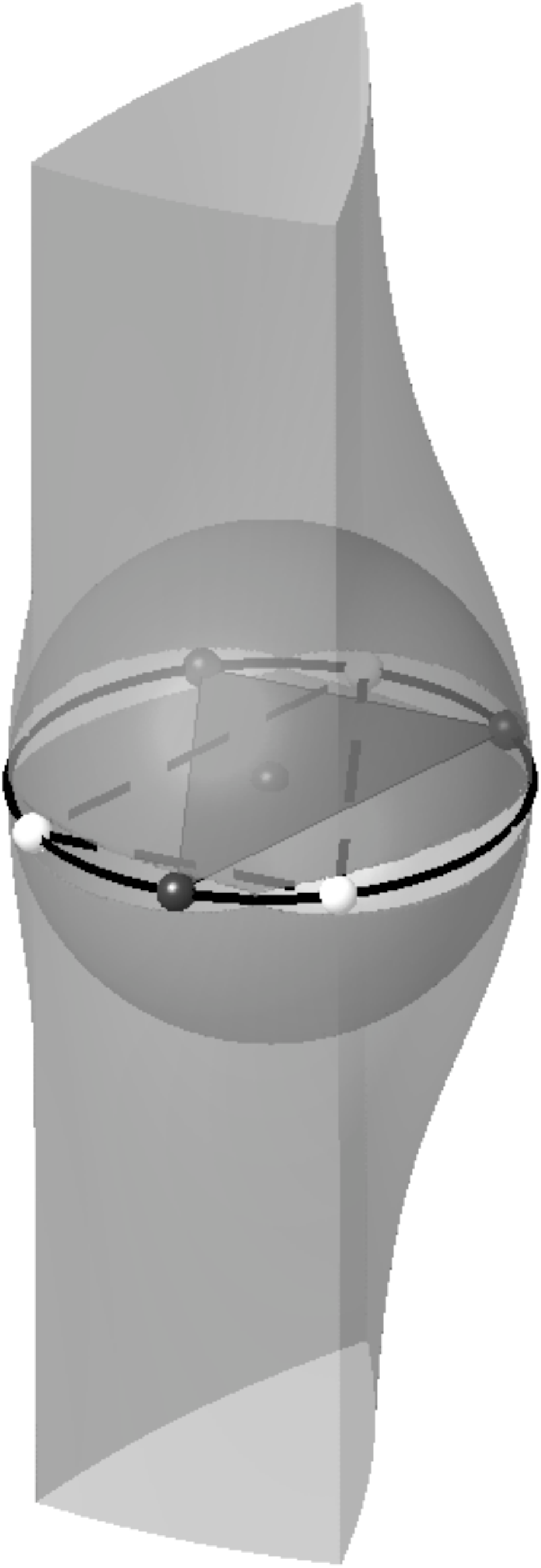}%
\caption{Given a facet $\tau^{n-1}$, the region where the vertex $u$
  may lie to produce an $n$-well-centered simplex $\cone{u}{\tau^{n-1}}$
  is defined by a system of polynomial inequalities.  When $\tau^{n-1}$
  is $(n-1)$-well-centered, so that there are regions related to both
  the necessary and sufficient conditions, the actual region where
  $u$ may lie is somewhere in between the regions defined by the necessary
  Cylinder Condition (Fig.~\ref{fig:nec_cond_region})
  and the sufficient Prism Condition (Fig.~\ref{fig:suf_cond_region}).}
\label{fig:full_region}
\end{figure}

We can also explain this from the perspective of computing
the determinant by expanding it along a row or column.
We will consider a specific example with $i=2$ arising from a
tetrahedron  (dimension $n=3$), but the discussion applies to the
general case.  We have
\[
A_{2} =
\begin{pmatrix}
2\langle v_{0}, v_{0}\rangle & 2\langle v_{0}, v_{1} \rangle
& \langle v_{0}, v_{0}\rangle & 2\langle v_{0}, u\rangle & 1\\
2\langle v_{1}, v_{0}\rangle & 2\langle v_{1}, v_{1} \rangle
& \langle v_{1}, v_{1}\rangle & 2\langle v_{1}, u\rangle & 1\\
2\langle v_{2}, v_{0}\rangle & 2\langle v_{2}, v_{1} \rangle
& \langle v_{2}, v_{2}\rangle & 2\langle v_{2}, u\rangle & 1\\
2\langle u, v_{0}\rangle & 2\langle u, v_{1} \rangle
& \langle u, u\rangle & 2\langle u, u\rangle & 1\\
1 & 1 & 1 & 1 & 0
\end{pmatrix}
\]
for this particular example.
If we compute $\det(A_{i})$ by expanding down column
$n+1$ (column 4, in this case), we find that term
$n+1$ of the summation is a quadratic function
of the coordinates of $u$ multiplied by the determinant of a
submatrix that is constant with respect to $u$.  In our example,
this is the fourth term in the summation,
\[
2\langle u, u\rangle\cdot\det\!\!
\begin{pmatrix}
2\langle v_{0}, v_{0}\rangle & 2\langle v_{0}, v_{1} \rangle
& \langle v_{0}, v_{0}\rangle & 1\\
2\langle v_{1}, v_{0}\rangle & 2\langle v_{1}, v_{1} \rangle
& \langle v_{1}, v_{1}\rangle & 1\\
2\langle v_{2}, v_{0}\rangle & 2\langle v_{2}, v_{1} \rangle
& \langle v_{2}, v_{2}\rangle & 1\\
1 & 1 & 1 & 0
\end{pmatrix}.
\]
The remaining $n+1$ of the terms in the summation are
linear (or constant) functions of $u$ multiplied by a determinant
of some other submatrix of $A_{i}$ that is not constant with respect to
$u$.  For our example, the first term of the summation is
\[
-2\langle v_{0}, u\rangle\cdot\det\!\!
\begin{pmatrix}
2\langle v_{1}, v_{0}\rangle & 2\langle v_{1}, v_{1} \rangle
& \langle v_{1}, v_{1} \rangle & 1\\
2\langle v_{2}, v_{0}\rangle & 2\langle v_{2}, v_{1} \rangle
& \langle v_{2}, v_{2} \rangle & 1\\
2\langle u, v_{0}\rangle & 2\langle u, v_{1} \rangle
& \langle u, u\rangle & 1\\
1 & 1 & 1 & 0
\end{pmatrix}.
\]
Expanding the appropriate row (usually row $n$) of each of these
submatrices in similar fashion, we obtain
a summation of terms that are either linear or
quadratic in the coordinates of $u$ (at most one term is quadratic),
each multiplied by the determinant of a smaller submatrix
that is constant with respect to $u$.

We state the conclusions of the foregoing discussion as
a formal proposition.

\bigskip
\begin{proposition}
\label{prop:necandsufpoly}
Let $\sigma^{n} = \cone{u}{\tau^{n-1}}$ for a fixed facet
$\tau^{n-1}$.  The $n$-simplex $\sigma^{n}$ is $n$-well-centered
if and only if the coordinates of vertex $u$ satisfy the
inequalities $\det(A_{i}) < 0$, which are cubic polynomial
inequalities in the coordinates of $u$.
\end{proposition}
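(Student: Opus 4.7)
The plan is to formalize the discussion immediately preceding the proposition, which already contains all the essential ingredients. The $n$-well-centered condition is equivalent to every barycentric coordinate $\alpha_i$ of the circumcenter being strictly positive, so the approach is to write $\alpha_i = \det(A_i)/\det(A)$ via Cramer's rule applied to the linear system from \cite{BeHi2009}, then determine the sign of $\det(A)$ and bound the degree of $\det(A_i)$ as a polynomial in the coordinates of $u$.

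First, I would set up the linear system: write $c = \sum \alpha_i v_i$ with $\sum \alpha_i = 1$, introduce $\lambda = R^2 - \lVert c\rVert^2$ to linearize the equations $\langle c - v_i, c - v_i\rangle = R^2$, and form the $(n+2)\times(n+2)$ matrix $A$ and right-hand side $b$ displayed above. Since $\sigma^n$ is nondegenerate, this system is nonsingular and Cramer's rule gives $\alpha_i = \det(A_i)/\det(A)$. I would then simplify by translating $v_0$ to the origin and (when $m > n$) rotating so that the vertices of $\tau^{n-1}$ lie in the coordinate subspace $\{x_{n+1} = \cdots = x_m = 0\}$; neither operation changes the barycentric coordinates.

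Next, I would pin down the sign of $\det(A)$. Expanding across the first row and then down the first column (now both consisting of zeros except for a $1$ in the corner) yields $\det(A) = -\det(B)$, where $B = 2V^{\mathrm{T}}V$ with $V$ the matrix whose columns are $v_i - v_0$ for $i=1,\ldots,n$. Using the rotation, $V^{\mathrm{T}}V = \widetilde V^{\mathrm{T}}\widetilde V$ for the top $n\times n$ block $\widetilde V$, so $\det(B) = 2^n \det(\widetilde V)^2 \ge 0$, with strict inequality precisely when $u$ lies off $\affine(\tau^{n-1})$. Hence $\det(A) < 0$ on both open halfspaces bounded by $\affine(\tau^{n-1})$, so the sign condition $\alpha_i > 0$ is equivalent to $\det(A_i) < 0$, which gives the biconditional statement of the proposition.

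The remaining task is the degree bound. Every entry of $A_i$ that depends on $u$ lies in the last row or the column that received $b$; among these, only the entries $\langle u,u\rangle$ appearing at positions $(n+1,n+1)$ and $(n+1,i+1)$ (one such entry when $i = n$) are quadratic in the coordinates of $u$, and all other $u$-dependent entries are linear. Using the permutation expansion $\det(A_i) = \sum_{\pi \in S_{n+2}} \mathrm{sgn}(\pi)\prod_j (A_i)_{j,\pi(j)}$, each product picks up at most one entry from row $n+1$ and at most one from each column, so each product contains at most one quadratic factor and at most one additional linear factor in $u$. Thus each product has total degree at most $3$ in the coordinates of $u$, and therefore so does $\det(A_i)$. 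The only step requiring genuine care is this degree bookkeeping, and even that is routine once one observes that the quadratic entries are confined to a single row; the sign analysis of $\det(A)$ is the other piece that must be stated precisely, but the factorization $B = 2\widetilde V^{\mathrm{T}}\widetilde V$ makes it immediate.
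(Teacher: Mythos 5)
Your proposal is correct and takes essentially the same route as the paper, whose proof of this proposition is exactly the preceding discussion you formalize: Cramer's rule on the circumcenter system, the translation/rotation reduction giving $\det(A) = -\det(B)$ with $B = 2\widetilde{V}^{\mathrm{T}}\widetilde{V}$ so that $\det(A) < 0$ whenever $u \notin \affine(\tau^{n-1})$, and the permutation-expansion count showing $\det(A_{i})$ has degree at most $3$ in the coordinates of $u$. One cosmetic slip: the $u$-dependent entries of $A_{i}$ lie in row $n+1$ and in column $n+1$ (the $b$-column $i+1$ contains only the single $u$-dependent entry $\langle u,u\rangle$ in row $n+1$), but your degree bound is unaffected, since a permutation product still selects at most one entry from row $n+1$ (the only place a quadratic can occur) and at most one from column $n+1$ (linear).
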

\bigskip

Figure~\ref{fig:full_region} gives a graphical representation of the
precise region where the vertex $u$ may be placed to produce
a $3$-well-centered
tetrahedron $\cone{u}{\tau^{n-1}}$.  The facet $\tau^{n-1}$ used
in Fig.~\ref{fig:full_region} is the same facet used to illustrate
the necessary condition for a tetrahedron in
Fig.~\ref{fig:nec_cond_region} and the sufficient condition
for a tetrahedron in Fig.~\ref{fig:suf_cond_region}, so
readers can see for this specific case how the full region compares
to the regions defined by the necessary condition and the sufficient
condition.  The facet $\tau^{n-1}$ along with its circumcircle and
the reflection of $\tau^{n-1}$ through $c(\tau^{n-1})$ are shown in
Fig.~\ref{fig:full_region} to aid this comparison.
It should also be noted that Fig.~\ref{fig:full_region} was generated
using MATLAB's isosurface function and evaluations of the polynomial
inequalities on a finite grid, so the graphical representation has
some slight imperfections.
For instance, the entire circumcircle of
$\tau^{n-1}$ lies in the boundary of the region
even though in Fig.~\ref{fig:full_region} it appears that there is a
small gap above and below $\affine(\tau^{n-1})$.

\section{Local Combinatorial Properties of
  $3$-Well-Centered Tetrahedral Meshes}
\label{sec:combinatorial3wccond}

The geometric properties of the $n$-well-centered
$n$-simplex discussed in Sec.~\ref{sec:characterize}
have implications for the combinatorial properties
of well-centered meshes.  As a simple motivating
example we consider the $2$-dimensional case of a
triangle mesh in the plane.  If $v$ is a vertex
interior to this mesh and there are fewer than
five edges incident to $v$, then some angle incident
to $v$ has measure $\pi/2$ radians or larger.  Thus
the mesh has a nonacute triangle.  This geometric
observation can be restated as a combinatorial property
of $2$-well-centered (i.e., acute) triangle meshes.
Namely, there are at least five edges incident
to every interior vertex of an acute triangle mesh
in $\RR^{2}$.  This well-known fact is a key ingredient in
the generation of $2$-well-centered triangle meshes through
optimization of vertex coordinates; the mesh must satisfy this
combinatorial condition at every interior vertex if optimizing the
vertex coordinates is to have any hope of finding an acute mesh.

Similarly, tetrahedral meshes in $\RR^{3}$
that are $2$-well-centered or $3$-well-centered
must satisfy certain local mesh connectivity conditions.
These combinatorial conditions, which are key to creating
well-centered tetrahedral meshes, are analyzed in the
next two sections.
This section develops some of the combinatorial properties of
$3$-well-centered tetrahedral meshes, and the next section
examines combinatorial properties of $2$-well-centered
tetrahedral meshes.

The combinatorial properties of tetrahedral meshes in $\RR^{3}$
are more complex than the analogous properties for triangle meshes
in $\RR^{2}$.  In a triangle mesh in $\RR^{2}$, the link
of an interior vertex is a set of edges that form a cycle
around the vertex, i.e., a triangulation of a topological
circle~($S^{1}$).  The number of edges incident to the interior
vertex, which is the number of vertices on the cycle, completely
characterizes the neighborhood of the vertex.
In tetrahedral meshes in $\RR^3$, on the other hand,
the link of an interior vertex
is a triangulation of a topological sphere $S^{2}$.
Thus the number of edges incident to the vertex does not
completely characterize the neighborhood of the vertex.
We do, however, prove necessary conditions on the number of
edges that must be incident to an interior vertex in a
tetrahedral mesh in $\RR^{3}$ in order for the mesh
to be $3$-well-centered, $2$-well-centered, or
completely well-centered.  We also show that there is
no sufficient condition in terms of the number of edges
incident to an interior vertex.

Much of the discussion in Secs.~\ref{sec:combinatorial3wccond}
and~\ref{sec:combinatorial2wctetcond}, then, is phrased in
terms of the link of an interior vertex.  For a tetrahedral
mesh in $\RR^{3}$, this is a triangulation of $S^{2}$, which
corresponds to a planar triangulation in a graph theoretic
sense.  We try to avoid the term planar triangulation to
prevent possible confusion with triangle meshes in $\RR^{2}$.

We begin with two results that apply in arbitrary dimension.
The first lemma generalizes the following statement about
planar triangle meshes, using the Cylinder Condition
(Proposition~\ref{prop:char_nec}) to relate geometry to combinatorics.
If a planar triangle $\sigma^{2} = [abc]$ is subdivided into
three triangles by adding a vertex $u$ interior to $\sigma^{2}$
and adding edges $[ua]$, $[ub]$, and $[uc]$ to obtain
$\cone{u}{\boundary{\sigma^{2}}}$, then at most one of the
three triangles $[abu]$, $[bcu]$, and $[cau]$ in
$\cone{u}{\boundary{\sigma^{2}}}$ is an acute triangle.  The
main idea of the proof of Lemma~\ref{lemma:simplex_partition}
is illustrated by Fig.~\ref{fig:smplxprttn}.

\bigskip

\begin{lemma}
\label{lemma:simplex_partition}
For $n \ge 2$, let $\sigma^{n} = [v_{0}v_{1}\ldots v_{n}]$ have facets
$\tau^{n-1}_{0},\tau^{n-1}_{1},\ldots,\tau^{n-1}_{n}.$  If
$u$ is a point lying in $\sigma^{n}$,
then at most one of the $n$-simplices of
$\cone{u}{\boundary{\sigma^{n}}}$, i.e., at most one of the simplices
$\cone{u}{\tau^{n-1}_{0}}, \cone{u}{\tau^{n-1}_{1}},
\ldots, \cone{u}{\tau^{n-1}_{n}}$,
is an $n$-well-centered $n$-simplex.
\end{lemma}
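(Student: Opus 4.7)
My plan is to argue by contradiction using the Cylinder Condition (Proposition~\ref{prop:char_nec}), applied to each of the two hypothetically well-centered sub-simplices at a carefully chosen vertex. Suppose two of the $n$-simplices in $\cone{u}{\boundary{\sigma^{n}}}$ are $n$-well-centered; after relabeling, I may assume they are $\cone{u}{\tau_{0}^{n-1}}$ and $\cone{u}{\tau_{1}^{n-1}}$. These two sub-simplices share the $(n-1)$-face $\cone{u}{\rho}$, where $\rho = \tau_{0}^{n-1}\cap\tau_{1}^{n-1} = [v_{2}\ldots v_{n}]$. I will let $K = \affine(\cone{u}{\rho})$ and let $B$ denote the $(n-1)$-dimensional circumball of $\cone{u}{\rho}$ inside $K$.

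The key observation is that Proposition~\ref{prop:char_nec} applies at \emph{any} vertex of an $n$-well-centered $n$-simplex, not only at a distinguished apex. Applying it to $\cone{u}{\tau_{0}^{n-1}}$ with $v_{1}$ in the role of the free vertex gives that the orthogonal projection of $v_{1}$ into $K$ lies strictly inside $B$, since the facet of $\cone{u}{\tau_{0}^{n-1}}$ opposite $v_{1}$ is exactly $\cone{u}{\rho}$. Applying it in the same way to $\cone{u}{\tau_{1}^{n-1}}$ with $v_{0}$ as the free vertex gives that the projection of $v_{0}$ into $K$ also lies strictly inside $B$.

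For the contradiction I would write $u$ as a convex combination $u = \sum_{k=0}^{n}\lambda_{k}v_{k}$ of the vertices of $\sigma^{n}$. Because each of the two sub-simplices is a (non-degenerate) $n$-simplex, both $v_{0}$ and $v_{1}$ must lie in the minimal face of $\sigma^{n}$ containing $u$, so $\lambda_{0}, \lambda_{1} > 0$. Letting $P$ denote the orthogonal projection onto $K$ and using $P(u) = u$ together with $P(v_{k}) = v_{k}$ for $k \geq 2$, projecting the identity yields
\[
u \;=\; \lambda_{0}P(v_{0}) + \lambda_{1}P(v_{1}) + \sum_{k=2}^{n}\lambda_{k}v_{k}.
\]
The points $u$ and $v_{2},\ldots,v_{n}$ lie on the circumsphere of $\cone{u}{\rho}$, i.e., on $\partial B$, while $P(v_{0})$ and $P(v_{1})$ lie in $\interior(B)$. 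Since $\lambda_{0}, \lambda_{1} > 0$, a standard convexity argument (a convex combination involving a strictly interior point with positive weight lies strictly inside any convex body) forces the right-hand side into $\interior(B)$; but the left-hand side is $u \in \partial B$, a contradiction.

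The main obstacle, in my view, is spotting how to make the two independent well-centeredness hypotheses interact. The trick is to notice that the two sub-simplices meet in the $(n-1)$-face $\cone{u}{\rho}$ and to deploy Proposition~\ref{prop:char_nec} so as to push both ``off-$K$'' vertices $v_{0}$ and $v_{1}$ into the interior of the same $(n-1)$-ball $B$ inside $K$; once this is done, the fact that $u$ itself lies on $\partial B$ yields the contradiction in a single convexity step.
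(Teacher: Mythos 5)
Your proof is correct, and it rests on the same key ingredient as the paper's own argument: applying the Cylinder Condition (Proposition~\ref{prop:char_nec}) to each of the two supposedly well-centered cones at the vertex opposite their common facet $\cone{u}{\rho}$, where $\rho = \tau^{n-1}_{0}\cap\tau^{n-1}_{1}$. The difference lies in how the contradiction is closed. The paper stays in $\RR^{n}$ and uses the solid right spherical cylinder $T$ over the circumball of $\cone{u}{\rho}$: the Cylinder Condition places $v_{0},v_{1}\in\interior(T)$, convexity of $T$ gives $\sigma^{n}\subset T$, and then $u\in\boundary{T}$ contradicts $u\in\interior(\sigma^{n})$ --- which forces a preliminary perturbation step reducing to the case $u\in\interior(\sigma^{n})$. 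You instead project the barycentric representation of $u$ into $K=\affine(\cone{u}{\rho})$ (the ball $B$ you use is exactly the slice of the paper's cylinder $T$ by $K$) and finish with the one-line convexity estimate that a convex combination carrying positive weight on a point of $\interior(B)$, with all other points in $B$, lies in $\interior(B)$, contradicting $u\in\partial B$. Your source of positivity for $\lambda_{0},\lambda_{1}$ is non-degeneracy of the two well-centered cones (if $\lambda_{0}=0$ then $u\in\affine(\tau^{n-1}_{0})$ and $\cone{u}{\tau^{n-1}_{0}}$ would be degenerate, hence not well-centered), so boundary positions of $u$ are handled directly and no perturbation is needed. In substance the two proofs are the same; your finish is slightly more streamlined, buying a uniform treatment of all $u\in\sigma^{n}$, while the paper's cylinder formulation keeps the argument purely pictorial in $\RR^{n}$.
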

\begin{proof}

\begin{figure}
\centering
\includegraphics[width=80pt, trim=44pt 593pt 436pt 9pt, clip]
  {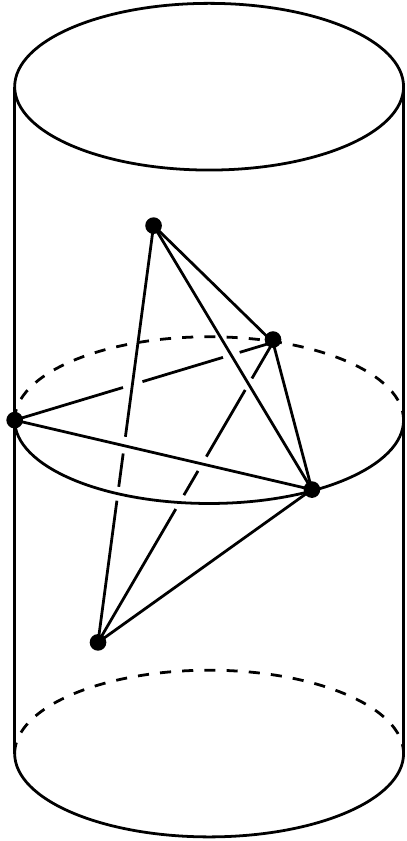}%
\begin{picture}(0,0)
\put(-87,80){$u$}
\put(-62,122){$v_{i}$}
\put(-73,36){$v_{j}$}
\put(-52,82){$\tau_{u}$}
\end{picture}
\caption{If $v_{i}$ and $v_{j}$ are both
  contained in the interior of the solid right spherical
  cylinder over the circumcircle
  of $\tau_{u}$,
  then $u$ lies outside the simplex formed from $v_{i}$, $v_{j}$,
  and the other vertices of $\tau_{u}$.}
\label{fig:smplxprttn}
\end{figure}

It suffices to prove the statement when $u$ is in the
interior of $\sigma^{n}$.  Indeed, if $u$ is on the
boundary of $\sigma^{n}$ and two or more of the simplices
$\cone{u}{\tau^{n-1}_{i}}$ are $n$-well-centered, then
we can slightly perturb $u$ into the interior and
obtain a point $u \in \interior(\sigma^{n})$
with at least two $n$-well-centered $n$-simplices.
Thus we assume that $u \in \interior(\sigma^{n})$.

Let $\tau^{n-1}_{i}$ and $\tau^{n-1}_{j}$ be two
distinct facets of $\sigma^{n}$.
Then $\cone{u}{\tau^{n-1}_{i}}$ and $\cone{u}{\tau^{n-1}_{j}}$
are $n$-sim\-pli\-ces, and $\tau^{n-1}_{i} \cap \tau^{n-1}_{j}$
is an $(n-2)$-dimensional face of $\sigma^{n}$.
The face $\tau^{n-1}_{i} \cap \tau^{n-1}_{j}$ is incident to
all but two of the
vertices of $\sigma^{n}$, the two vertices $v_{i}$ and $v_{j}$.
(Recall that $v_{i}$ is opposite $\tau^{n-1}_{i}$ and
$v_{j}$ is opposite $\tau^{n-1}_{j}$.)
Notice that $\cone{u}{\tau^{n-1}_{i}}$
and $\cone{u}{\tau^{n-1}_{j}}$ have a common facet, the
$(n-1)$-simplex
$\tau^{n-1}_{u} := \cone{u}{(\tau^{n-1}_{i} \cap \tau^{n-1}_{j})}$.
We let $T \subset \affine(\sigma^{n})$ be the solid right
spherical cylinder over the
circumball of $\tau^{n-1}_{u}$.

Assume towards contradiction that $\cone{u}{\tau^{n-1}_{i}}$ and
$\cone{u}{\tau^{n-1}_{j}}$ are both $n$-well-centered.
By the Cylinder Condition (Proposition~\ref{prop:char_nec}), both
$v_{i}$ and $v_{j}$ lie in $\interior(T)$.
Now $T$ is a convex set, and all the vertices of $\sigma^{n}$
lie in $T$, so $\sigma^{n} \subset T$.
On the other hand, $u$ lies on the circumsphere of
$\tau^{n-1}_{u}$, so $u \in \boundary{T}$.
Thus $u \notin \interior(\sigma^{n}) \subset \interior(T)$,
contradicting the assumption we made in the first
paragraph of the proof.  We conclude that at most
one of $\cone{u}{\tau^{n-1}_{i}}$,
$\cone{u}{\tau^{n-1}_{j}}$ is $n$-well-centered.
\end{proof}

\bigskip

The next theorem shows that Lemma~\ref{lemma:simplex_partition}
has implications for the local combinatorial properties of
$n$-well-centered meshes.  The theorem is stated using
the language of simplicial complexes.  We say that a vertex
$u$ is an {\emph{interior vertex}} in an $n$-dimensional
simplicial complex embedded in $\RR^{n}$ if
(the underlying space of) $\link~u$
is homeomorphic to $S^{n-1}$, the sphere of dimension $n - 1$.
Thus the closed star of $u$ is homeomorphic to an $n$-dimensional
ball in $\RR^{n}$, and the point $v$ lies in the interior
of the ball in the standard topology on $\RR^{n}$.

When we speak of an abstract simplicial complex $K$ we make
an important distinction between an embedding of $K$ and
a geometric realization of $K$.  An {\emph{embedding}} of $K$
is an assignment of coordinates in $\RR^{n}$ to the vertices
of $K$ such that $K$ is a simplicial complex in $\RR^{n}$
with vertices at the specified locations.  By a
{\emph{geometric realization}} of $K$ we mean merely some
assignment of coordinates in $\RR^{n}$ to the vertices of $K$.
Thus in a geometric realization of $K$ in $\RR^{n}$, it is
possible for $K$ to have self-intersections.
Figure~\ref{fig:invexample}, which is related to the
proof of Theorem~\ref{thm:oneringcond}, illustrates
the distinction between these two terms.

\bigskip

\begin{theorem}[One-Ring Necessary Condition]
\label{thm:oneringcond}
Let $u$ be an interior vertex of an $n$-dimensional simplicial
complex $M$ (e.g., a mesh) embedded in $\RR^{n}$, and set
$L = \link~u$.  If there exists an abstract finite $n$-dimensional
simplicial complex $K$ such that
\begin{enumerate}[(i)]
\item $K$ is an $n$-manifold complex (with boundary)
\item $\boundary{K}$ is isomorphic to $L$, and
\item for every $n$-simplex $\sigma^{n} \in K$, there are at least
  two $(n-1)$-simplices in $\boundary{\sigma^{n}} \cap L$,
\end{enumerate}
then $\cone{u}{L}$ is not $n$-well-centered.
\end{theorem}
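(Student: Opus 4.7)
I proceed by contradiction: assume $\cone{u}{L}$ is $n$-well-centered and try to locate an $n$-simplex $\sigma^{n}\in K$ whose geometric realization (using the positions from the embedding $|L|\hookrightarrow\RR^{n}$) contains the point~$u$. Applying Lemma~\ref{lemma:simplex_partition} to that $\sigma^{n}$ then yields a contradiction.

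The first observation is that condition~(iii) forces every vertex of every $n$-simplex of $K$ to lie on $\boundary{K}$: any two facets of an $n$-simplex meet in an $(n-2)$-face carrying $n-1$ vertices and jointly exhaust all $n+1$ vertices of the simplex, so if two facets lie in $\boundary{K}\cong L$, then all vertices do. Using the isomorphism $\boundary{K}\cong L$, I place every vertex of $K$ at the $\RR^{n}$-position of its image in $L$ and extend affinely on each simplex, defining a continuous (in general non-injective) map $\phi\colon |K|\to\RR^{n}$ that restricts to the identity embedding $|\boundary{K}|\xrightarrow{\cong}|L|$.

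The heart of the argument is to show that $u\in\phi(\sigma^{n})$ for some $n$-simplex $\sigma^{n}\in K$. Since $u$ is an interior vertex of $M$, the embedded sphere $|L|$ bounds a topological $n$-ball whose interior contains $u$, so $[|L|]$ generates $H_{n-1}(\RR^{n}\setminus\{u\};\mathbb{Z}/2)$. Suppose toward contradiction that $u\notin\phi(|K|)$; then $\phi$ factors as a map of pairs $(|K|,|\boundary{K}|)\to(\RR^{n}\setminus\{u\},|L|)$, so the fundamental class $[K,\boundary{K}]\in H_{n}(K,\boundary{K};\mathbb{Z}/2)$ (nonzero because $K$ is an $n$-manifold with boundary) pushes forward into $H_{n}(\RR^{n}\setminus\{u\},|L|;\mathbb{Z}/2)$. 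However, the long exact sequence of the pair $(\RR^{n}\setminus\{u\},|L|)$, together with the fact that $H_{n-1}(|L|)\to H_{n-1}(\RR^{n}\setminus\{u\})$ carries $[|L|]$ to a generator, forces $H_{n}(\RR^{n}\setminus\{u\},|L|;\mathbb{Z}/2)=0$, so the push-forward is zero. By naturality of the connecting homomorphism, its $\partial$-image is then zero as well; yet that image equals $\phi_{\ast}[\boundary{K}]=[|L|]\ne 0$, a contradiction. Working with $\mathbb{Z}/2$ coefficients sidesteps any possible non-orientability of $K$.

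To finish, condition~(iii) furnishes two distinct facets $\tau^{n-1}_{i},\tau^{n-1}_{j}$ of this $\sigma^{n}$ that lie in $L$. Since $u$ lies in the realization of $\sigma^{n}$, Lemma~\ref{lemma:simplex_partition} says that at most one of $\cone{u}{\tau^{n-1}_{i}}$ and $\cone{u}{\tau^{n-1}_{j}}$ is $n$-well-centered; but both are top-dimensional simplices of $\cone{u}{L}$, contradicting the assumption that $\cone{u}{L}$ is $n$-well-centered. The main obstacle is the middle step, since $\phi$ need not be an embedding (in general $K$ need not embed in $\RR^{n}$, so the realization can overlap itself), which blocks any purely set-theoretic containment argument; the homological/degree argument, combined with $\mathbb{Z}/2$ coefficients to avoid orientation issues, is what resolves it.
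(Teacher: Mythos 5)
Your proof is correct, and it matches the paper's overall architecture: you first show (via condition (iii)) that every vertex of $K$ lies in $L$, use the embedding of $\cone{u}{L}$ to induce a (possibly non-injective) geometric realization of $K$ in $\RR^{n}$, argue that the realization of some $n$-simplex of $K$ must contain $u$, and then apply Lemma~\ref{lemma:simplex_partition} to two facets of that simplex lying in $L$ to contradict well-centeredness. Where you genuinely diverge is in the middle step, which is the heart of the proof. The paper establishes that some realized $n$-simplex contains $u$ by a purely combinatorial-geometric device: choose a line $\ell$ through $u$ in general position, use star-shapedness of $L$ about $u$ to see that $\ell$ meets $L$ in exactly two facets on opposite sides of $u$, and then walk along $\ell$ through the realized $n$-simplices of $K$, using the manifold property (each interior $(n-1)$-face has exactly two cofaces, each boundary face one) to show the walk starts at one of these facets and terminates at the other, so by continuity some $n$-simplex along the walk covers $u$. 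You instead run a mod-$2$ degree argument: if $u$ missed the image of $|K|$, the pushforward of the $\ZZ/2$ fundamental class of $(K,\boundary{K})$ would land in $H_{n}(\RR^{n}\setminus\{u\},|L|;\ZZ/2)$, which vanishes because the embedded link carries the generator of $H_{n-1}(\RR^{n}\setminus\{u\};\ZZ/2)$, while naturality of the connecting map forces its boundary to be $[|L|]\neq 0$. Both routes use the same inputs (manifold hypothesis on $K$, the star-shaped/spherical position of $L$ around $u$), but your version buys brevity and robustness --- no general-position line, no bookkeeping about the walk's finiteness or non-repetition, and the $\ZZ/2$ coefficients dispose of orientation questions --- at the cost of importing fundamental classes of compact manifolds with boundary and the long exact sequence, whereas the paper's walk is elementary and self-contained, in keeping with the rest of the article. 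One shared caveat (not a gap relative to the paper, which has the same implicit assumption): when you finally invoke Lemma~\ref{lemma:simplex_partition}, the realized vertices of the chosen $\sigma^{n}\in K$ are tacitly treated as affinely independent; only the two facets in $L$ and their cones with $u$ are guaranteed nondegenerate by the embedding of $M$, so a fully scrupulous treatment would address degenerate realizations, in your argument and in the paper's alike.
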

\begin{proof}
We first observe that every vertex of~$K$ must also
be a vertex of~$L$.  By assumption (i), every simplex of
$K$ is a face of some $n$-simplex of $K$, so
if $K$ had a vertex~$v$ not in~$L$,
then there would be some $n$-dimensional
simplex~$\sigma^{n}~\in~K$ incident to~$v$, and
$\sigma^{n}$ would have only one $(n-1)$-dimensional face
not incident to~$v$.  Since $v \notin L$, 
it follows that $\boundary{\sigma^{n}} \cap L$
would contain at most one $(n - 1)$-simplex, and
(iii) would not be satisfied.

\begin{figure}
\centering
\begin{minipage}[c]{120pt}
\includegraphics[width=120pt, trim=0pt 0pt 0pt 0pt, clip]
  {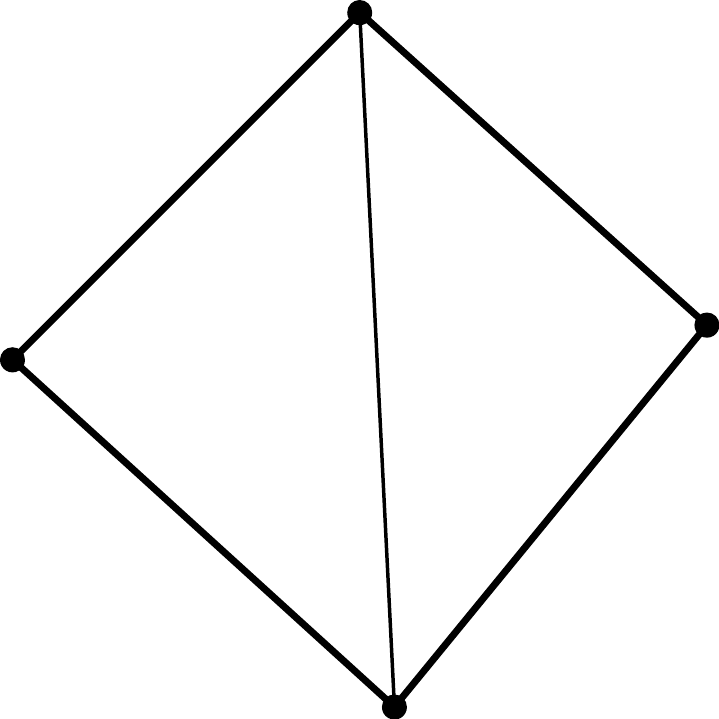}%
\begin{picture}(0, 0)
\put(-112, 58){$v_{0}$}
\put(-47, 2){$v_{1}$}
\put(-54, 115){$v_{2}$}
\put(-17, 64){$v_{3}$}
\end{picture}
\end{minipage}%
\hspace{50pt}%
\begin{minipage}[c]{70pt}
\hspace{3pt}%
\includegraphics[width=67pt, trim=0pt 0pt 0pt 0pt, clip]
  {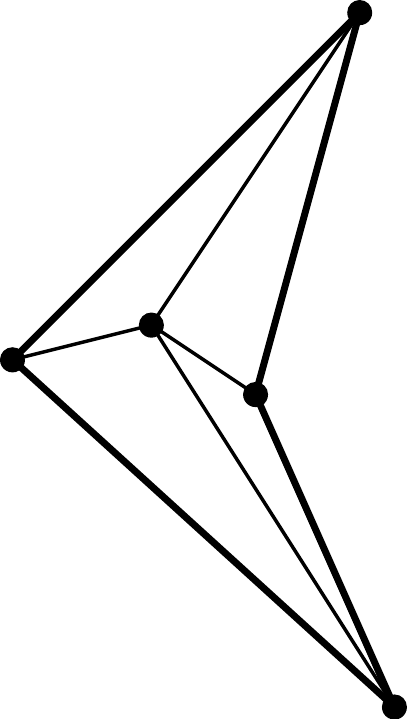}%
\begin{picture}(0, 0)
\put(-70, 49){$v_{0}$}
\put(2, 3){$v_{1}$}
\put(-3, 114){$v_{2}$}
\put(-21, 51){$v_{3}$}
\put(-48, 56){$u$}
\end{picture}
\end{minipage}%
\hspace{50pt}%
\begin{minipage}[c]{70pt}
\includegraphics[width=67pt, trim=0pt 0pt 0pt 0pt, clip]
  {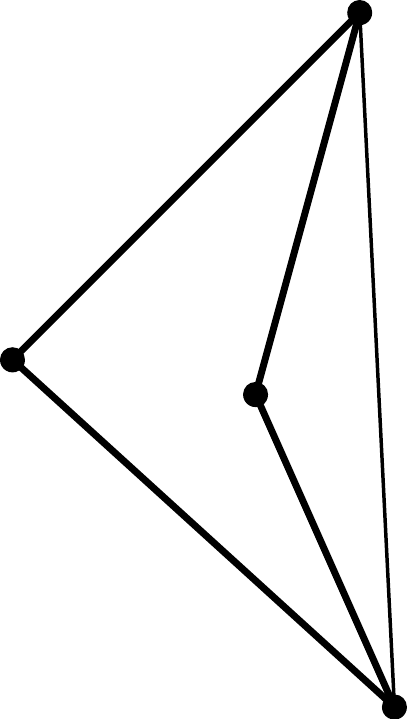}%
\begin{picture}(0, 0)
\put(-70, 49){$v_{0}$}
\put(2, 3){$v_{1}$}
\put(-3, 114){$v_{2}$}
\put(-21, 51){$v_{3}$}
\end{picture}
\end{minipage}\\[8pt]
\hspace{15pt}%
\begin{minipage}[c]{120pt}
\centering
Abstract Complex $K$, $\boundary{K} = L$
\end{minipage}%
\hspace{50pt}%
\begin{minipage}[c]{70pt}
\centering
Embedding of $M = \cone{u}{L}$
\end{minipage}%
\hspace{35pt}%
\begin{minipage}[c]{100pt}
\centering
Induced Geometric Realization of $K$
\end{minipage}%
\caption{In $\RR^{2}$ the simplicial complex $K$ consisting of
  two triangles $[v_{0} v_{1} v_{2}]$ and $[v_{1} v_{3} v_{2}]$
  and their faces (left) satisfies the hypothesis of
  Theorem~\ref{thm:oneringcond} for the mesh $M = \cone{u}{L}$
  embedded in $\RR^{2}$ (center), so $M$ is not
  $2$-well-centered.  The embedding of $M$ in~$\RR^{2}$ induces
  a geometric realization of $K$ into $\RR^{2}$ (right).  The
  geometric realization of $K$ is not an embedding in this case,
  since $[v_{1} v_{3} v_{2}]$ is inverted here.  The particular
  embedding of $M$ does not affect the existence of the abstract
  complex $K$ in Theorem~\ref{thm:oneringcond}; there is
  no embedding of $M$ that is $2$-well-centered.}
\label{fig:invexample}
\end{figure}

The embedding of $M$ in $\RR^{n}$ includes an embedding of
$\cone{u}{L}$ in $\RR^{n}$.  Since every vertex of $K$ is a
vertex of $L$, this embedding of $\cone{u}{L}$ in $\RR^{n}$
induces a geometric realization of $K$ into $\RR^{n}$.
(As shown in Fig.~\ref{fig:invexample}, the geometric
realization might not be an embedding.)

We have an embedding of the simplicial complex $\cone{u}{L}$ in
$\RR^{n}$.  Since it is an embedding, each $n$-dimensional
simplex is a fully $n$-dimensional geometric object, and
we have consistent orientation.  Moreover, $L$ is
star-shaped with respect to $u$.  We claim that by
(i) and (ii) this implies that there is some simplex
in the induced geometric realization of $K$ that contains
the point $u$ (possibly on its boundary).  We return
to this claim in a moment, but first we show how
this completes the proof.

Fix a simplex $\sigma^{n} \in K$ that contains $u$.
Now consider the $n$-simplices of
$\cone{u}{\boundary{\sigma^{n}}}$.  By assumption (iii)
of the hypothesis, at least two of these simplices have
a facet in $L$.  Each simplex of $\cone{u}{\boundary{\sigma^{n}}}$
with a facet in $L$ is a member of $\cone{u}{L}$, and by
Lemma~\ref{lemma:simplex_partition} at most
one of these simplices is $n$-well-centered.
We conclude that at least one of the simplices
of $\cone{u}{L} \subseteq M$ is not $n$-well-centered.

Now we prove the claim that there is a simplex
of the geometric realization of $K$ that contains
the point $u$.  Choose a line $\ell$ through $u$
in general position.  General position here means
that $\ell$ does not intersect any face of $K$
of dimension less than $n - 1$.  Such an $\ell$
can be chosen unless $u$ itself lies on a simplex $\rho^{k}$
of $K$ of dimension $k < n - 1$, and in that
case we are done, since there is some $\sigma^{n} \succ \rho^{k}$
that contains $u$.

Since $\cone{u}{L}$ is a simplicial complex
and $L$ is star-shaped with respect to $u$,
$\ell$ intersects exactly two simplices of $L$,
each of dimension $n - 1$, and the intersection
points are in opposite directions from $u$ along $\ell$.
For reference, we designate a $+$ and a $-$ direction
and name facet $\tau^{n-1}_{+}$ (resp.~$\tau^{n-1}_{-}$) as
the facet of $L$ intersected by $\ell$ in the $+$ ($-$) direction
from $u$.  Starting from $\tau^{n-1}_{+}$ we describe a walk
along $\ell$ through $n$-simplices and $(n-1)$-simplices
of the geometric realization of $K$ that ends at $\tau^{n-1}_{-}$.
By continuity of this walk and $\tau^{n-1}_{+}$, $\tau^{n-1}_{-}$
in opposite directions from $u$, there must be some $n$-simplex
in the geometric realization of $K$ that contains $u$.

The walk is as follows.  Since $K$ is a manifold with
a boundary and $\tau^{n-1}_{+}$ is on the boundary, there
is a unique $\sigma^{n}_{1}$ incident to
$\tau^{n-1}_{0} := \tau^{n-1}_{+}$.
Then for a given $\sigma^{n}_{i}$ the walk is on $\ell$
at $\tau^{n-1}_{i-1}$, and $\ell$ intersects some unique
second facet of $\sigma^{n}_{i}$, which we name $\tau^{n-1}_{i}$.
As long as $\tau^{n-1}_{i} \ne \tau^{n-1}_{-}$, we are
not on the boundary of $K$, so (since $K$ is manifold)
there are exactly two $n$-dimensional
simplices incident to $\tau^{n-1}_{i}$.  One of these is
$\sigma^{n}_{i}$, and the other we name $\sigma^{n}_{i+1}$.
Since $K$ is a manifold complex, the sequence $\tau^{n-1}_{i}$
has no repetitions and must eventually end at $\tau^{n-1}_{-}$.
(The $\sigma_i^{n}$ in the sequence may flip 
back and forth in orientation, which corresponds to
the walk going back and forth along $\ell$.)
\end{proof}

\bigskip

It is worth noting that the existence of the abstract simplicial
complex $K$ has no dependence on the particular embedding
of $M$ in $\RR^{n}$.  Theorem~\ref{thm:oneringcond} is
really a combinatorial statement, and we can use it to
show that a particular {\emph{abstract}} simplicial
complex $L = \boundary{K}$ cannot appear as the link
of an interior vertex in an $n$-well-centered mesh
embedded in $\RR^{n}$.

The case $n = 3$ is of particular interest.  Using
the One-Ring Necessary Condition of Theorem~\ref{thm:oneringcond}
it is fairly easy to establish a tight lower bound on the
number of edges incident to a vertex in a $3$-well-centered
tetrahedral mesh embedded in $\RR^{3}$.

\bigskip

\begin{corollary}
\label{cor:min3wc}
Let $M$ be a $3$-well-centered tetrahedral mesh embedded
in $\RR^{3}$.  For every vertex $u$
interior to $M$, at least $7$ edges of $M$
are incident to $u$.
\end{corollary}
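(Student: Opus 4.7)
The plan is to apply the One-Ring Necessary Condition (Theorem~\ref{thm:oneringcond}) with $L = \link~u$. Because $u$ is interior to a tetrahedral mesh in $\RR^{3}$, $L$ is a triangulation of $S^{2}$, and the number of edges of $M$ incident to $u$ equals the number of vertices $V$ of $L$. Euler's formula yields $V \ge 4$, so it suffices to rule out $V \in \{4,5,6\}$ by constructing, for each combinatorial type of such an $L$, an abstract $3$-manifold complex $K$ with $\boundary{K} \cong L$ in which every $3$-simplex meets $L$ in at least two facets; Theorem~\ref{thm:oneringcond} will then force $\cone{u}{L}$, and hence $M$, to fail to be $3$-well-centered.

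I would next dispatch each case using the classification of simplicial $2$-spheres on few vertices. For $V = 4$, $L$ is the boundary of an abstract $3$-simplex, and $K$ is that $3$-simplex itself (all four of its facets lie on $\boundary{K}$). For $V = 5$, $L$ is (uniquely) the boundary of a triangular bipyramid, and $K$ is two abstract $3$-simplices glued along a common triangular face; each contributes three facets to $\boundary{K}$. For $V = 6$ there are exactly two combinatorial types. The octahedron admits a decomposition into four $3$-simplices sharing a common ``long diagonal''; each of these tetrahedra retains two boundary facets (one above and one below the equator). The other type is the stacked triangulation obtained from the $4$-vertex triangulation by two stellar subdivisions; the corresponding $K$ is three $3$-simplices assembled by two successive face attachments to a base tetrahedron, and a direct facet count verifies that every $3$-simplex has at least two facets on $\boundary{K}$ regardless of which faces the attachments use.

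In every case the hypotheses of Theorem~\ref{thm:oneringcond} are verified, so the assumption that $M$ is $3$-well-centered yields a contradiction, giving $V \ge 7$. The main work is really bookkeeping in the $V = 6$ case: one must invoke (or verify by hand) the classification of triangulations of $S^{2}$ on six vertices into the octahedron and the stacked type, and then exhibit a bounding $3$-complex for each. Both decompositions are classical and combinatorially straightforward, so no deep obstacle arises.
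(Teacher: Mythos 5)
Your proposal is correct and follows essentially the same route as the paper: invoke Theorem~\ref{thm:oneringcond}, enumerate the triangulations of $S^{2}$ on at most six vertices (one each on four and five vertices, two on six), and exhibit for each a manifold complex $K$ in which every tetrahedron has at least two facets on $\boundary{K}$. Your explicit certificates (single tetrahedron; two tetrahedra sharing a face; four tetrahedra around a common edge for the octahedron; a stacked $3$-ball of three tetrahedra for the $(5,5,4,4,3,3)$ sphere) are exactly the complexes the paper displays in its figures, with the classification of small triangulations cited from the Britton--Dunitz catalog there and taken as classical by you.
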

\begin{proof}
Britton and Dunitz have assembled a catalog of all polyhedra
with at most $8$ vertices, which includes all the triangulations
of $S^{2}$ with at most $8$ vertices \cite{BrDu1973}.
By Theorem~\ref{thm:oneringcond} it suffices
to show that each such triangulation $L$ of $S^{2}$ with at
most $6$ vertices has a corresponding tetrahedral
complex $K$ such that each tetrahedron of $K$ has at least
two facets in common with $L$.

\begin{figure}
\centering
\begin{minipage}[c]{80pt}
\includegraphics[width=80pt, trim=0pt 0pt 0pt 0pt, clip]
  {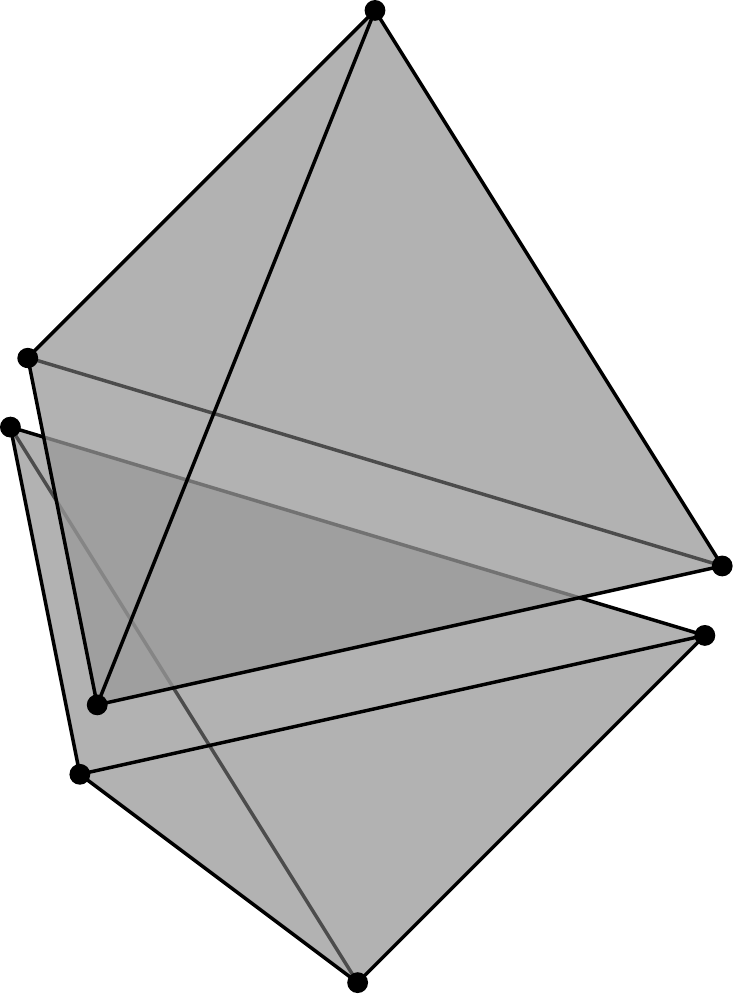}
\end{minipage}%
\hspace{30pt}%
\begin{minipage}[c]{80pt}
\includegraphics[width=80pt, trim=0pt 0pt 0pt 0pt, clip]
  {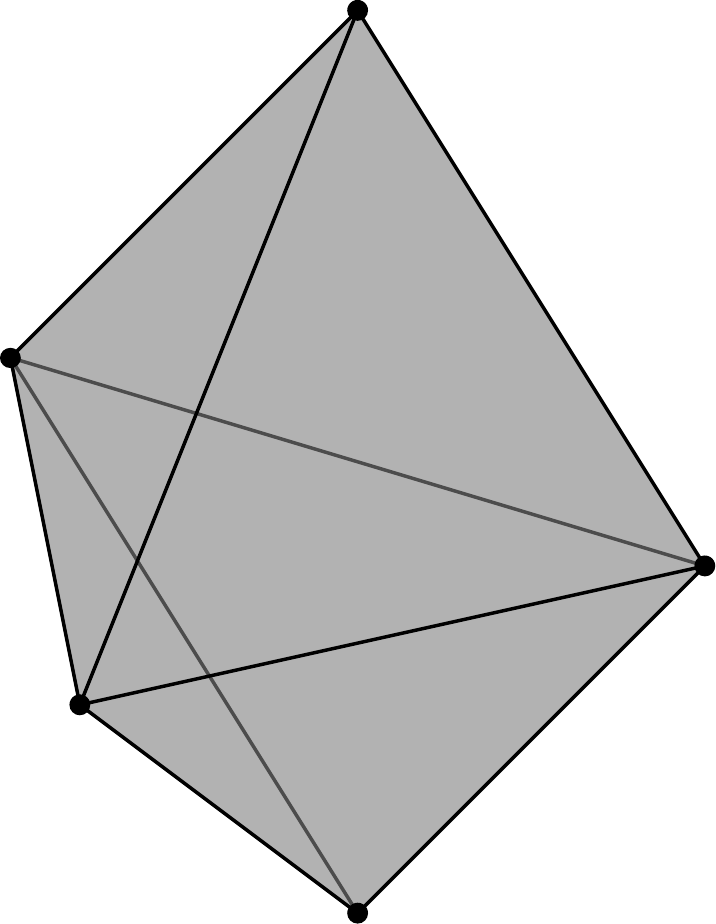}
\end{minipage}%
\hspace{30pt}%
\begin{minipage}[c]{110pt}
\includegraphics[width=110pt, trim=0pt 0pt 0pt 0pt, clip]
  {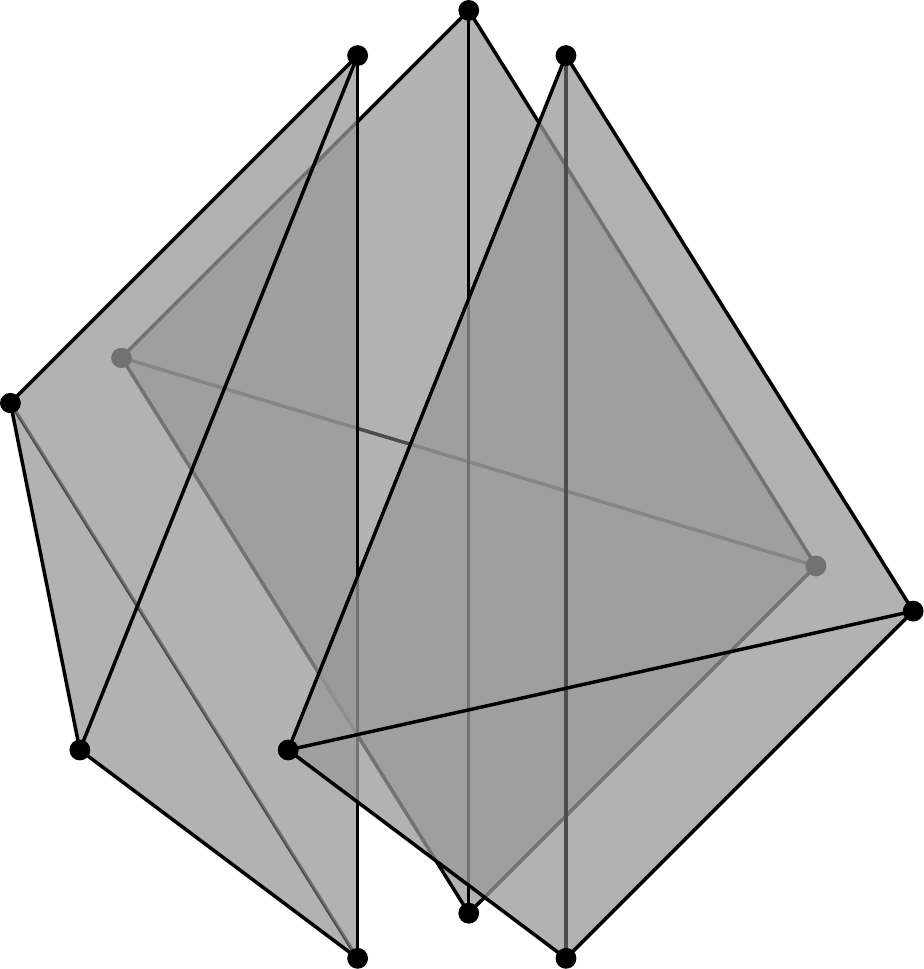}
\end{minipage}%
\caption{There is only one triangulation of $S^2$ with $5$
  vertices, and it has two corresponding tetrahedral complexes
  such that each tetrahedron has at least two facets in common with
  the triangulation.}
\label{fig:vlnk44433}
\end{figure}

There is only one triangulation of $S^{2}$ with $4$
vertices---the boundary of a tetrahedron.  The
corresponding tetrahedral complex is that single
tetrahedron.

There is also only one triangulation of $S^{2}$ with $5$
vertices.  This triangulation is shown in Fig.~\ref{fig:vlnk44433}
along with two corresponding tetrahedral complexes.  Either
complex certifies that the triangulation cannot be the
link of any vertex in a $3$-well-centered mesh.

\begin{figure}
\centering
\begin{minipage}[c]{80pt}
\includegraphics[width=80pt, trim=0pt 0pt 0pt 0pt, clip]
  {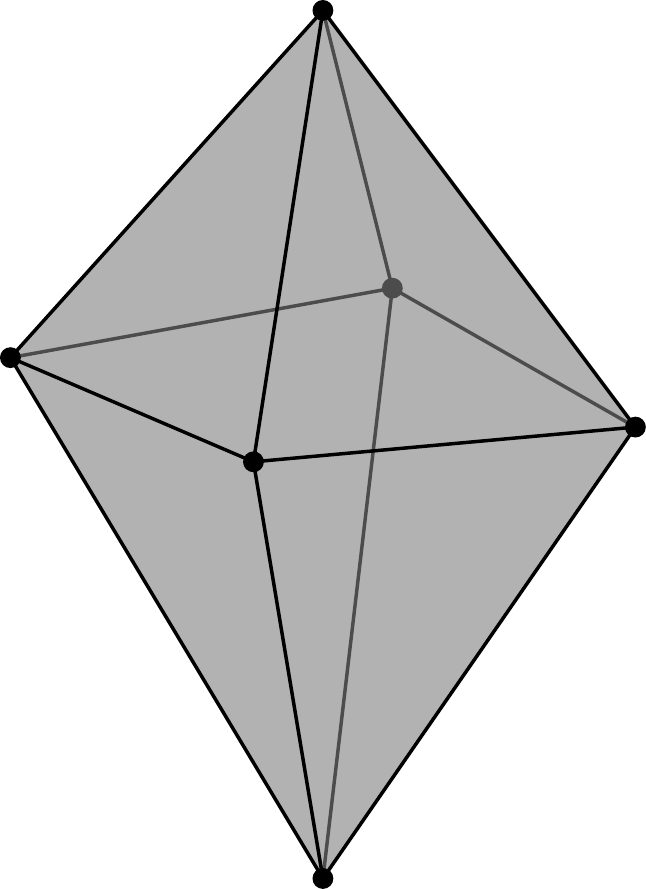}
\end{minipage}%
\hspace{30pt}%
\begin{minipage}[c]{120pt}
\includegraphics[width=120pt, trim=0pt 0pt 0pt 0pt, clip]
  {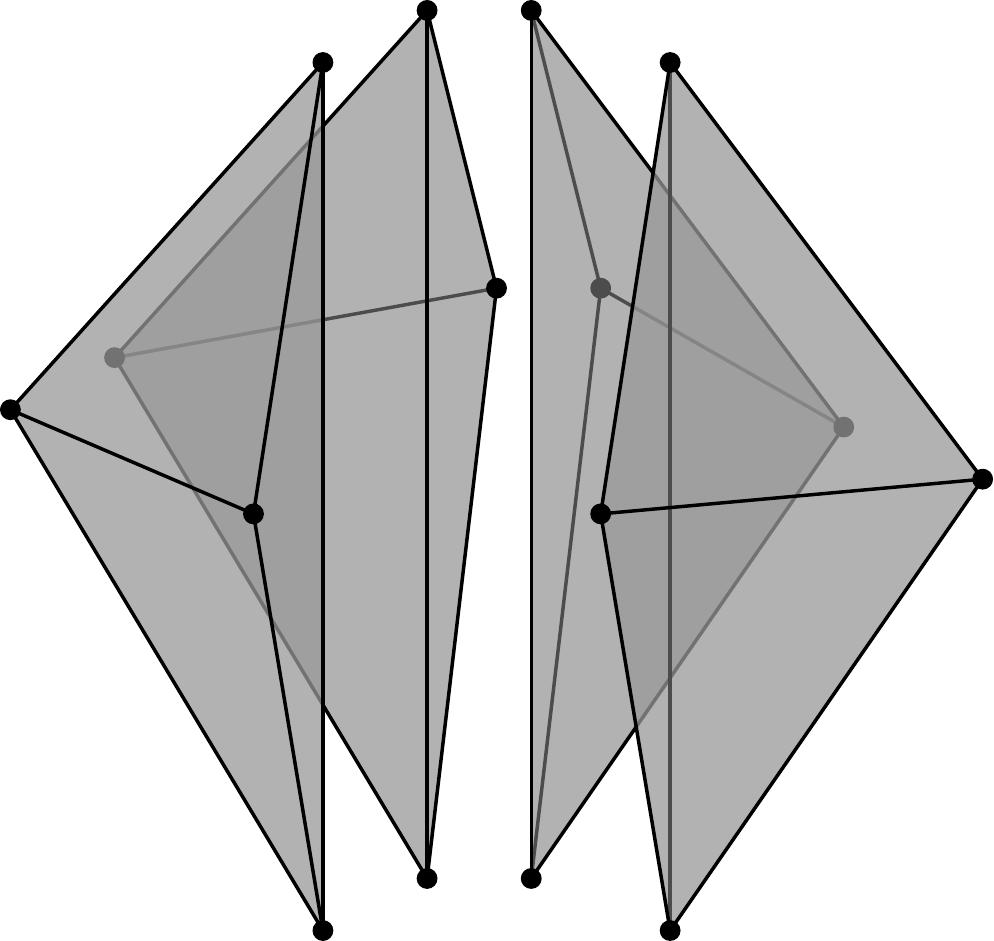}
\end{minipage}%
\caption{For one of the triangulations of $S^2$ with $6$
  vertices, each vertex has exactly four neighbors.  There
  is a tetrahedral complex consisting of four tetrahedra such
  that each tetrahedron has two facets in common with
  this triangulation of $S^{2}$.}
\label{fig:vlnk444444}
\end{figure}

\begin{figure}
\centering
\begin{minipage}[c]{80pt}
\includegraphics[width=80pt, trim=0pt 0pt 0pt 0pt, clip]
  {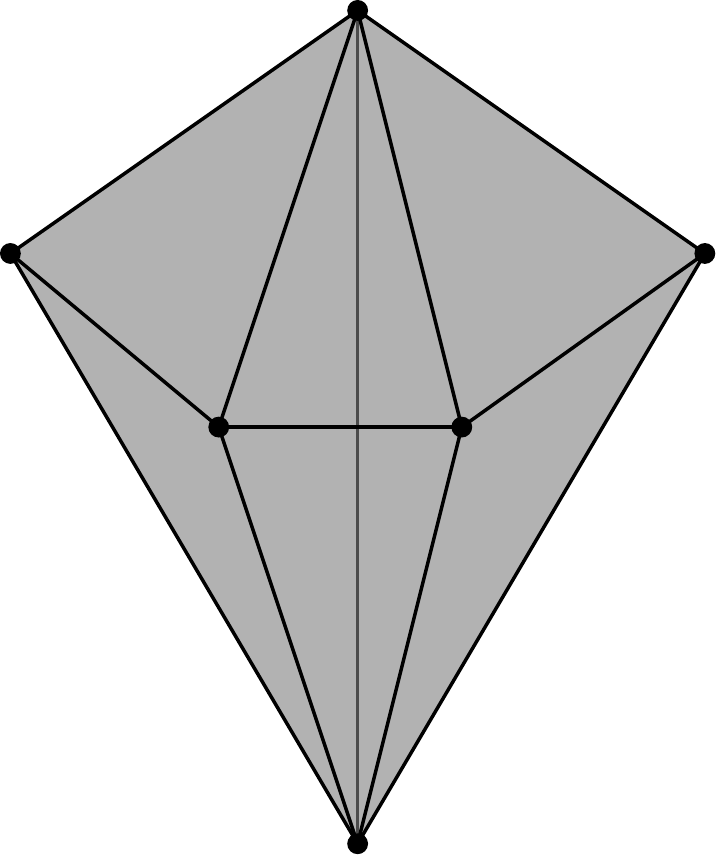}
\end{minipage}%
\hspace{30pt}%
\begin{minipage}[c]{100pt}
\includegraphics[width=100pt, trim=0pt 0pt 0pt 0pt, clip]
  {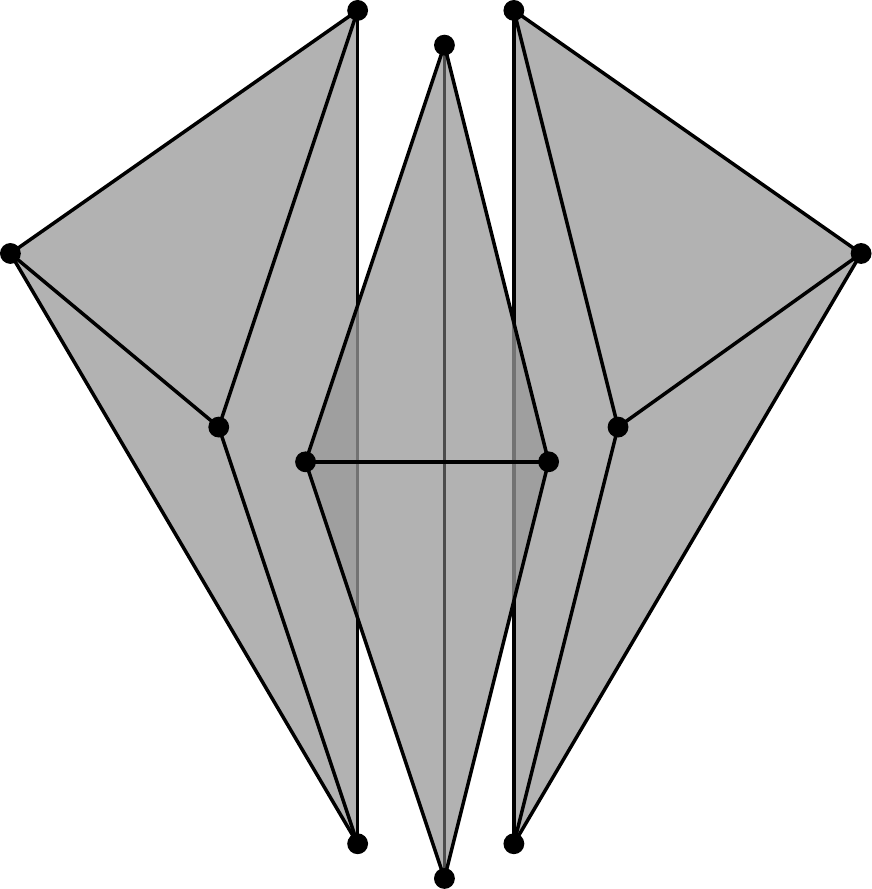}
\end{minipage}%
\caption{In the other triangulation of $S^2$ with $6$
  vertices, the degree list is $(5, 5, 4, 4, 3, 3)$.
  This triangulation of $S^{2}$ also has a corresponding
  tetrahedral complex such that each tetrahedron has at
  least two facets in common with the triangulation.}
\label{fig:vlnk554433}
\end{figure}

For six vertices there are two nonisomorphic triangulations
of $S^{2}$.  The first is shown in Fig.~\ref{fig:vlnk444444}
along with its corresponding tetrahedral complex.  The
second is drawn in Fig.~\ref{fig:vlnk554433} along with its
corresponding tetrahedral complex.
\end{proof}

\bigskip

\begin{figure}
\centering
\begin{minipage}[c]{150pt}
\centering
\includegraphics[width = 70pt, trim = 782pt 138pt 655pt 241pt, clip]%
  {onefreev_10/bestrndsym.pdf}%
\begin{picture}(0, 0)
\put(-68, 75){\scriptsize{$x$}}
\put(-58, 81){\scriptsize{$y$}}
\put(-61.7, 91){\scriptsize{$z$}}
\end{picture}
\end{minipage}%
\hspace{50pt}%
\begin{minipage}[c]{153pt}
\settowidth{\dotlen}{.}
\begin{tabular}{| r @{.} l | r @{.} l | r @{.} l |}
\hline
\multicolumn{2}{|c|}{$x$}
    & \multicolumn{2}{c|}{$y$} & \multicolumn{2}{c|}{$z$}\\
\hline
\multicolumn{1}{|r @{\hspace{\dotlen}}}{$0$} & &
  \multicolumn{1}{r @{\hspace{\dotlen}}}{$0$} & &
  \multicolumn{1}{r @{\hspace{\dotlen}}}{$0$} & \\
\hline
\multicolumn{1}{|r @{\hspace{\dotlen}}}{$0$} & &
  \multicolumn{1}{r @{\hspace{\dotlen}}}{$0$} & &
  \multicolumn{1}{r @{\hspace{\dotlen}}}{$1$} & \\
\hline
$-0$ & $1041$   &   $-0$ & $0601$   &   $0$ & $0117$\\
\hline
$0$ & $1041$   &   $-0$ & $0601$   &   $0$ & $0117$\\
\hline
\multicolumn{1}{|r @{\hspace{\dotlen}}}{$0$} &
    &   $0$ & $1202$   &   $0$ & $0117$\\
\hline
\multicolumn{1}{|r @{\hspace{\dotlen}}}{$0$} &
    &   $-0$ & $3622$   &   $-0$ & $8656$\\
\hline
$0$ & $3137$   &   $0$ & $1811$   &   $-0$ & $8656$\\
\hline
$-0$ & $3137$   &   $0$ & $1811$   &   $-0$ & $8656$\\
\hline
\end{tabular}
\end{minipage}
\caption{A $3$-well-centered mesh with an interior vertex $u$
  such that $\link~u$ has seven vertices and
  degree list $(5,5,5,4,4,4,3)$.  The vertex coordinates
  are listed in the table at right; vertex $u$ is at the origin.
}
\label{fig:onefreev_10/bestrndsym}
\end{figure}

\begin{figure}
\centering
\begin{minipage}[c]{150pt}
\centering
\includegraphics[width = 150pt, trim = 625pt 567pt 137pt 209pt, clip]%
  {onefreevver2_10/goodrndsym.pdf}%
\begin{picture}(0, 0)
\put(-83, 56){\scriptsize{$x$}}
\put(-77, 67){\scriptsize{$y$}}
\put(-93, 72){\scriptsize{$z$}}
\end{picture}
\end{minipage}%
\hspace{50pt}%
\begin{minipage}[c]{153pt}
\settowidth{\dotlen}{.}
\begin{tabular}{| r @{.} l | r @{.} l | r @{.} l |}
\hline
\multicolumn{2}{|c|}{$x$}
    & \multicolumn{2}{c|}{$y$} & \multicolumn{2}{c|}{$z$}\\
\hline
\multicolumn{1}{|r @{\hspace{\dotlen}}}{$0$} & &
  \multicolumn{1}{r @{\hspace{\dotlen}}}{$0$} & &
  \multicolumn{1}{r @{\hspace{\dotlen}}}{$0$} & \\
\hline
\multicolumn{1}{|r @{\hspace{\dotlen}}}{$0$} & &
  \multicolumn{1}{r @{\hspace{\dotlen}}}{$0$} & &
  \multicolumn{1}{r @{\hspace{\dotlen}}}{$1$} & \\
\hline
\multicolumn{1}{|r @{\hspace{\dotlen}}}{$0$} &
    &   $0$ & $8334$   &   $-0$ & $8588$\\
\hline
$-0$ & $7217$   &   $-0$ & $4167$   &   $-0$ & $8588$\\
\hline
$0$ & $7217$   &   $-0$ & $4167$   &   $-0$ & $8588$\\
\hline
\multicolumn{1}{|r @{\hspace{\dotlen}}}{$0$} &
    &   $-5$ & $0494$   &   $1$ & $0696$\\
\hline
$4$ & $3729$   &   $2$ & $5247$   &   $1$ & $0696$\\
\hline
$-4$ & $3729$   &   $2$ & $5247$   &   $1$ & $0696$\\
\hline
\end{tabular}
\end{minipage}%
\caption{A $3$-well-centered mesh with an interior vertex $u$
  such that $\link~u$ has seven vertices and
  degree list $(6,5,5,5,3,3,3)$.  The vertex coordinates
  are listed in the table at right; vertex $u$ is at the origin.
}
\label{fig:onefreevver2_10/goodrndsym}
\end{figure}

When there are $m \ge 7$ vertices, there exist triangulations
$L$ of $S^{2}$ with $m$ vertices such that there is no
tetrahedral complex $K$ satisfying both $\boundary{K} = L$
and the condition that every tetrahedron of $K$ have at
least two facets in $L$.  In particular, the triangulations
of $S^{2}$ with $7$ vertices and degree lists
$(5,5,5,4,4,4,3)$ and $(6,5,5,5,3,3,3)$, i.e.,
polyhedra $7$--$1$ and $7$--$4$ in the catalog of Britton
and Dunitz, both can appear as the link of a vertex
in a $3$-well-centered mesh.
Figure~\ref{fig:onefreev_10/bestrndsym} shows an
example of a $3$-well-centered mesh in $\RR^{3}$
consisting of a single vertex $u$
and its neighborhood $\closure{\star~u}$ such that
$\link~u$ is a triangulation with degree list
$(5, 5, 5, 4, 4, 4, 3)$.
Figure~\ref{fig:onefreevver2_10/goodrndsym} shows
a similar example for the degree list
$(6, 5, 5, 5, 3, 3, 3)$.
There are three other triangulations of $S^{2}$ with $7$
vertices.  Each has a corresponding tetrahedral complex
$K$ satisfying the requirements of
the One-Ring Necessary Condition
(Theorem~\ref{thm:oneringcond}), so
none of these triangulations can appear as the link
of a vertex in a $3$-well-centered mesh.

There are $14$ nonisomorphic triangulations of $S^{2}$
with $8$ vertices.  Of these, $5$ have tetrahedral
complexes $K$ that certify they cannot be the link of
a vertex in a $3$-well-centered tetrahedral mesh in
$\RR^{3}$.  Each of the other $9$ triangulations can
appear as the link of a vertex in a $3$-well-centered
tetrahedral mesh in $\RR^{3}$. (We mention these
results without proof here.)  For $m \le 8$ vertices,
then, the necessary condition of
Theorem~\ref{thm:oneringcond} completely
characterizes which triangulations can and cannot
be made $3$-well-centered.  We leave open the question
of whether the One-Ring Necessary Condition stated in
Theorem~\ref{thm:oneringcond} is a complete
characterization for $m > 8$ vertices in $3$ dimensions
or for $n$-well-centeredness in dimensions $n \ge 4$.

The triangulations on $8$ vertices that cannot be made
$3$-well-centered are polyhedra $8$--$4$, $8$--$5$, $8$--$6$,
$8$--$7$, and $8$--$13$ in the catalog~\cite{BrDu1973} of Britton
and Dunitz.  It is interesting to note that the
degree list of $8$--$7$, which cannot be made
$3$-well-centered, is the same as
the degree list of $8$--$8$, which can
be made $3$-well-centered~\cite{VanderZee2010}.
{\em Thus the degree list of a triangulation
does not provide enough information to determine whether
the triangulation can be the link of a vertex in
a $3$-well-centered tetrahedral mesh in $\RR^{3}$.}
There are $50$ nonisomorphic triangulations with $9$
vertices and an exponentially growing number of
triangulations with more vertices \cite{BrMc2007},
so although making a catalog for $9$ or $10$ vertices
might be somewhat interesting, something more abstract will
be necessary to definitively characterize
which triangulations can be made $3$-well-centered.

In the rest of this section we discuss some
more general results in the direction of characterizing
which triangulations of $S^{2}$ can appear as
the link of a vertex in a $3$-well-centered mesh
in $\RR^{3}$.  The results fall short of a complete
characterization, but do show that the set of
triangulations of $S^{2}$ that cannot appear as
the link of a vertex in a $3$-well-centered mesh
and the set of triangulations that can appear
as the link of a vertex are both infinite.

\bigskip

\begin{corollary}
\label{cor:infno3wc}
For any integer $m \ge 4$ there is a triangulation of $S^{2}$
with $m$ vertices that cannot appear as the link
of a vertex in a $3$-well-centered mesh.
\end{corollary}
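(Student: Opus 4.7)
The plan is to apply Theorem~\ref{thm:oneringcond} to an explicit infinite family of triangulations $L_{m}$ of $S^{2}$ by exhibiting a filling tetrahedral complex $K_{m}$ for each $m \ge 4$. The cases $m = 4, 5, 6$ actually follow already from Corollary~\ref{cor:min3wc}, since any triangulation of $S^{2}$ with at most six vertices has fewer than seven vertices and so cannot be the link of an interior vertex; I include them only for completeness and plan to cover them by a uniform bipyramid construction.

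For $m \ge 5$ I would take $L_{m}$ to be the boundary of an $(m-2)$-gonal bipyramid: two apex vertices $a_{0}, a_{1}$ together with base vertices $b_{0}, b_{1}, \ldots, b_{m-3}$ arranged in a cycle, with triangular facets $[a_{0}\,b_{i}\,b_{i+1}]$ and $[a_{1}\,b_{i}\,b_{i+1}]$ for $i = 0, 1, \ldots, m-3$ (indices mod $m-2$). This is a valid triangulation of $S^{2}$ on $m$ vertices. For $m = 4$ I would take $L_{4}$ to be $\boundary{T}$ for an abstract $3$-simplex $T$.

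Next I would construct the abstract tetrahedral complex $K_{m}$ required by Theorem~\ref{thm:oneringcond}. For $m = 4$, set $K_{4} = T$; then $\boundary{K_{4}} = L_{4}$ and all four facets of the single tetrahedron lie in $L_{4}$. For $m \ge 5$, introduce one additional interior edge $[a_{0}\,a_{1}]$ and form the fan of $m-2$ tetrahedra
\[
\sigma_{i} = [a_{0}\,a_{1}\,b_{i}\,b_{i+1}], \qquad i = 0, 1, \ldots, m-3,
\]
together with all of their faces. Consecutive tetrahedra $\sigma_{i}$ and $\sigma_{i+1}$ share the interior facet $[a_{0}\,a_{1}\,b_{i+1}]$, every interior facet is shared by exactly two tetrahedra, and the link of $a_{0}$, $a_{1}$, and each $b_{i}$ in $K_{m}$ is a disk; hence $K_{m}$ is a $3$-manifold complex with boundary, and that boundary is precisely $L_{m}$.

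Finally, I would verify hypothesis (iii). Each tetrahedron $\sigma_{i}$ has exactly two facets in $L_{m}$, namely the bipyramid faces $[a_{0}\,b_{i}\,b_{i+1}]$ and $[a_{1}\,b_{i}\,b_{i+1}]$ that contain the base edge $[b_{i}\,b_{i+1}]$; the other two facets $[a_{0}\,a_{1}\,b_{i}]$ and $[a_{0}\,a_{1}\,b_{i+1}]$ are interior to $K_{m}$. Condition (iii) therefore holds, so Theorem~\ref{thm:oneringcond} forbids $L_{m}$ from appearing as $\link~u$ for any interior vertex $u$ of a $3$-well-centered tetrahedral mesh in $\RR^{3}$. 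There is no real obstacle here beyond routine checking; the only care required is noticing that the bipyramid construction already exhausts the small cases ($L_{5}$ is the triangular bipyramid of Fig.~\ref{fig:vlnk44433} and $L_{6}$ is the octahedron of Fig.~\ref{fig:vlnk444444}), so a single uniform argument handles every $m \ge 4$.
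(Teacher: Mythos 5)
Your proposal is correct and follows essentially the same route as the paper: the paper's proof also applies Theorem~\ref{thm:oneringcond} to the complex of $m-2$ tetrahedra closing around a common edge (whose boundary is exactly your $(m-2)$-gonal bipyramid, generalizing Figs.~\ref{fig:vlnk44433} and~\ref{fig:vlnk444444}), handling $4 \le m \le 6$ by appeal to the cases already settled in Corollary~\ref{cor:min3wc}. Your explicit verification of hypotheses (i)--(iii) and the separate treatment of $m=4$ just make the same argument slightly more self-contained.
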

\begin{proof}
We have already proved that this holds for $4 \le m \le 6$.

For $m \ge 7$ we note that the tetrahedral
complexes shown on the right hand sides of
Figs.~\ref{fig:vlnk44433} and~\ref{fig:vlnk444444}
can be generalized.  Consider a tetrahedral complex $K$
consisting of a set of $m - 2$ tetrahedra that close
around a common edge.  The complex $K$ satisfies the
conditions of Theorem~\ref{thm:oneringcond}, so
$\boundary{K}$ cannot appear as the link of a vertex in a
$3$-well-centered mesh.  $\boundary{K}$ is a
triangulation of $S^{2}$ on $m$ vertices with
degree list $(m - 2, m - 2, 4, \ldots, 4)$.
\end{proof}

\bigskip

We note that by removing a single
tetrahedron from the example complex $K$ of
Corollary~\ref{cor:infno3wc}, we obtain another
infinite family of triangulations of $S^{2}$ that
cannot appear as the link of a vertex in a
$3$-well-centered mesh.  Each member of this family
is a triangulation on $m$ vertices with
degree list $(m - 1, m - 1, 4, \ldots, 4, 3, 3)$.
This family generalizes the tetrahedral complexes
shown on the left hand side of Fig.~\ref{fig:vlnk44433}
and the right hand side of Fig.~\ref{fig:vlnk554433}.

Corollary~\ref{cor:infno3wc} is one instance that shows
how substantial the difference is between tetrahedral
and triangle meshes.
In the case of triangle meshes in $\RR^{2}$, where we
consider triangulations of $S^{1}$ as the link of a
vertex, the only two triangulations that cannot appear
as the link of a vertex are the $3$-cycle
and the $4$-cycle.  In contrast, there are infinitely
many triangulations of $S^{2}$ that cannot be the link
of a vertex in a $3$-well-centered mesh in $\RR^{3}$.
One may wonder whether there are still infinitely many triangulations
of $S^{2}$ that \emph{can} appear as the link of a
vertex in a $3$-well-centered mesh in $\RR^{3}$.  The answer is
yes.  One way to prove this is to explicitly construct an infinite
family of $3$-well-centered meshes with different vertex
links.  We will do exactly that in a moment, with the help
of the following lemma, which we prove using the
Prism Condition (Proposition~\ref{prop:char_suf}).

\bigskip

\begin{lemma}
\label{lemma:smplxonsphere}
Let $S_{0}^{n-1}$ be a unit $(n-1)$-sphere centered at a
point $u$.  If $\tau^{n-1}$ is an $(n-1)$-well-centered
$(n-1)$-simplex whose vertices lie on $S^{n-1}$, and
the distance from $u$ to $\affine(\tau^{n-1})$ is
greater than $1/\sqrt{2}$, then
$\sigma^{n} := \cone{u}{\tau^{n-1}}$ is
an $n$-well-centered $n$-simplex.
\end{lemma}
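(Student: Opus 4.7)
The plan is to apply the Prism Condition (Proposition~\ref{prop:char_suf}) directly. We are already given that $\tau^{n-1}$ is $(n-1)$-well-centered, so to invoke the proposition we must verify the two remaining hypotheses: that $u$ lies strictly outside the equatorial ball $\eqB(\tau^{n-1})$, and that the reflection of $P(u)$ through $c(\tau^{n-1})$ lies in $\interior(\tau^{n-1})$, where $P$ denotes orthogonal projection onto $\affine(\tau^{n-1})$.

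The key observation, which makes both conditions essentially automatic, is that $P(u) = c(\tau^{n-1})$. Indeed, writing $d$ for the distance from $u$ to $\affine(\tau^{n-1})$, the Pythagorean theorem gives
\[
\lVert v - P(u)\rVert^{2} \;=\; \lVert v - u\rVert^{2} - d^{2} \;=\; 1 - d^{2}
\]
for every vertex $v$ of $\tau^{n-1}$, since all such vertices lie on the unit sphere around $u$. Thus $P(u)$ is equidistant from the vertices of $\tau^{n-1}$ inside $\affine(\tau^{n-1})$, forcing $P(u) = c(\tau^{n-1})$ and, as a byproduct, $R(\tau^{n-1}) = \sqrt{1-d^{2}}$.

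With this identification, the reflection of $P(u)$ through $c(\tau^{n-1})$ is $c(\tau^{n-1})$ itself, which lies in $\interior(\tau^{n-1})$ because $\tau^{n-1}$ is $(n-1)$-well-centered; this handles the second hypothesis for free. For the equatorial ball condition, note that $\eqB(\tau^{n-1})$, viewed inside $\affine(\sigma^{n})$, is centered at $c(\tau^{n-1}) = P(u)$ and has radius $\sqrt{1-d^{2}}$, while $u$ sits at distance exactly $d$ from $c(\tau^{n-1})$. The hypothesis $d > 1/\sqrt{2}$ is equivalent to $d^{2} > 1 - d^{2}$, i.e., $d > \sqrt{1-d^{2}} = R(\tau^{n-1})$, so $u$ lies strictly outside $\eqB(\tau^{n-1})$.

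Both hypotheses of the Prism Condition are now satisfied, so $\sigma^{n} = \cone{u}{\tau^{n-1}}$ is $n$-well-centered. There is no real obstacle to this proof; the only subtlety worth flagging is keeping the ambient dimension of $\eqB(\tau^{n-1})$ straight (it is the $n$-dimensional ball inside $\affine(\sigma^{n})$, not merely the circumball inside $\affine(\tau^{n-1})$), which is precisely why the threshold $1/\sqrt{2}$ is exactly the right cutoff.
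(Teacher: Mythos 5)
Your proof is correct and follows essentially the same route as the paper: identify $P(u) = c(\tau^{n-1})$ via the Pythagorean relation $R(\tau^{n-1})^{2} + d^{2} = 1$, note that the reflection of $P(u)$ through $c(\tau^{n-1})$ is the circumcenter itself (interior by $(n-1)$-well-centeredness), deduce $d > R(\tau^{n-1})$ from $d > 1/\sqrt{2}$ so that $u$ lies outside the equatorial ball, and conclude by the Prism Condition. No gaps; this matches the paper's argument.
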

\begin{proof}
Suppose that $\tau^{n-1}$ is an $(n -1)$-well-centered
simplex meeting the conditions specified in the hypothesis.
Let $S_{0}^{n-2}$ be the circumsphere of $\tau^{n-1}$.
$S_{0}^{n-2}$ is the intersection of $\affine(\tau^{n-1})$
with $S_{0}^{n-1}$, i.e., an $(n - 2)$-sphere lying in
$S_{0}^{n-1}$.  The orthogonal
projection of $u$ into $\affine(\tau^{n-1})$, which
we denote by $P(u)$, is the
center of $S_{0}^{n-2}$, i.e., the circumcenter
$c(\tau^{n-1})$ of $\tau^{n-1}$.

Since $\tau^{n-1}$ is $(n - 1)$-well-centered, it contains
the point $c(\tau^{n-1})$.  Thus $\tau^{n-1}$ contains
the reflection of $P(u)$ through $c(\tau^{n-1})$.  The
circumradius of $\tau^{n-1}$ satisfies
$R(\tau^{n-1})^2 + z^{2} = 1$, where
$z$ is the distance from $u$ to
$\affine(\tau^{n-1})$, so because $z^{2} > 1/2$,
we have $R(\tau^{n-1}) < 1/\sqrt{2}$, and $u$
lies outside the equatorial ball of
$\tau^{n-1}$.  By the Prism Condition, $\sigma^{n}$
is $n$-well-centered.
\end{proof}

\bigskip

It is relatively straightforward to prove the converse
as well.  For $\sigma^{n} = \cone{u}{\tau^{n-1}}$ with
the vertices of $\tau^{n-1}$ lying on a sphere $S_{0}^{n-1}$
centered at $u$, if $\tau^{n-1}$ is not $(n - 1)$-well-centered
or the distance $z$ from $u$ to $\affine(\tau^{n-1})$
satisfies $z \le 1/\sqrt{2}$, then $\sigma^{n}$ is not
$n$-well-centered.  This proof is left to the reader; the
result is not needed in this paper.

The simplex $\sigma^{n} = \cone{u}{\tau^{n-1}}$ in
Lemma~\ref{lemma:smplxonsphere} is an {\emph{isosceles simplex}}
with all vertices of $\tau^{n-1}$ equidistant from
the apex vertex $u$.  When $n = 2$,
Lemma~\ref{lemma:smplxonsphere} reduces to the
statement that an isosceles triangle is acute if
the apex angle is acute.  In higher dimensions
Lemma~\ref{lemma:smplxonsphere} tells us when an
isosceles simplex is $n$-well-centered. Note that
in an isosceles simplex all of the faces incident
to the apex vertex~$u$ are isosceles; the plane
of each such face intersects the sphere $S^{n-1}$
in some lower-dimensional sphere
centered at $u$, and Lemma~\ref{lemma:smplxonsphere}
can be applied to these isosceles faces.
It follows that $\sigma^{n}$ will be completely well-centered
if $\tau^{n-1}$ is completely well-centered and
$z > 1/\sqrt{2}$.  In particular, for the case
$n = 3$, an isosceles tetrahedron with an acute
triangle facet opposite the apex vertex is a completely
well-centered tetrahedron.

Thus from any triangulation of a unit sphere $S^{2}$ with
sufficiently small
acute triangles we can create a completely well-centered
tetrahedral mesh in $\RR^{3}$ by taking the cone
$\cone{u}{\tau^{2}}$ of each acute triangle $\tau^{2}$
with the center of the sphere $u$.  Figure~\ref{fig:poly7}
shows a completely well-centered tetrahedral mesh
constructed in this fashion.  The boundary of
the mesh in Fig.~\ref{fig:poly7} is an acute
triangulation of $S^{2}$ selected from
an infinite family of acute triangulations
of $S^{2}$.  The next two paragraphs describe
this family.

Consider the set of vertices consisting of the north pole
$(0, 0, 1)$, the south pole $(0, 0, -1)$, and the vertices
of two regular $k$-gons, one in the plane $z = 0.352$ and the
other in the plane $z = -0.352$.  We set the polygons
exactly off phase from each other.  For instance, let the
coordinates of the polygon vertices be
\[
  (0.936\cos\left(\frac{2i\pi}{k}\right),
   0.936\sin\left(\frac{2i\pi}{k}\right), 0.352)\text{,}
  \quad i = 0,1,\ldots,k-1\,\text{,}
\]
\[
  (0.936\cos\left(\frac{(2i + 1)\pi}{k}\right),
   0.936\sin\left(\frac{(2i + 1)\pi}{k}\right), -0.352)\text{,}
  \quad i = 0,1,\ldots,k-1\,\text{.}
\]
Let each pole vertex be adjacent to all of the vertices of
the closer regular polygon.  This constructs $k$ isosceles
triangles incident to each pole.  We take each vertex of a
regular polygon to be adjacent to the closer pole, the two
neighbors on its own regular polygon, and two vertices from
the other regular polygon.  Triangles formed entirely from
vertices of the two regular polygons are also isosceles.
The example in Fig.~\ref{fig:poly7} uses the result of this
construction for the case $k = 7$.

\begin{figure}
\centering
\includegraphics[width=150pt, trim=343pt 337pt 548pt 275pt, clip]
  {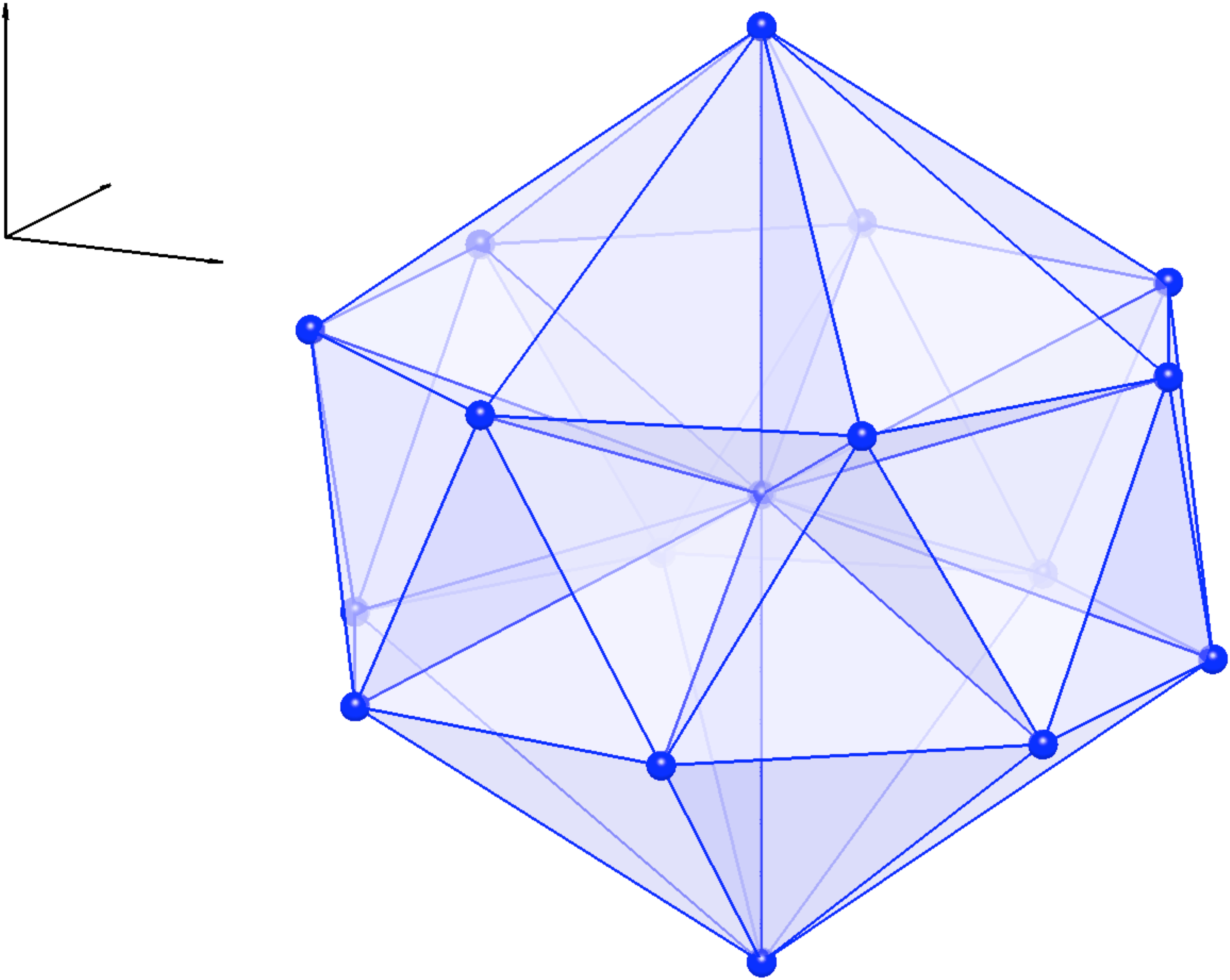}%
\begin{picture}(0, 0)
\put(-138, 83){\scriptsize{$x$}}
\put(-144, 98){\scriptsize{$y$}}
\put(-147, 108){\scriptsize{$z$}}
\end{picture}
\caption{For $k \ge 4$ we can create an acute triangulation
  of the unit sphere from a set of vertices consisting of
  the north and south poles and two out-of-phase regular
  $k$-gons.  Coning such a triangulation to the origin produces a
  completely well-centered tetrahedral mesh.  The figure
  shows the tetrahedral mesh obtained for $k = 7$.}
\label{fig:poly7}
\end{figure}

We claim that if $k \ge 4$, then each triangle $\tau^{2}$
of this construction is acute and
satisfies the condition that the distance from the
origin to $\tau^{2}$ is greater than $1/\sqrt{2}$.
Since $k \ge 4$ it is clear that the apex angles of
the isosceles triangles incident to the poles are
acute angles.  Verifying that the other triangles
are acute and that the triangles are far enough
from the origin is straightforward and we omit the details.
Lemma~\ref{lemma:smplxonsphere} applies, and as an immediate
consequence we have the following.

\bigskip

\begin{proposition}
\label{prop:inftycwc}
There are infinitely many triangulations of
$S^{2}$ that can appear as the link of a vertex
in a completely well-centered mesh.
\end{proposition}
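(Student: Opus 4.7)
The plan is to exhibit the explicit infinite family $\{T_{k}\}_{k \ge 4}$ of triangulations of $S^{2}$ constructed in the paragraphs just above the proposition, and to verify that each $T_{k}$ is the link of a vertex in some completely well-centered tetrahedral mesh in $\RR^{3}$. Since $T_{k}$ has $2k+2$ vertices, the members of the family have pairwise distinct vertex counts and are therefore pairwise non-isomorphic, so the family contains infinitely many members.

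Fixing $k \ge 4$, I would take $u$ to be the origin and set $M_{k} := \cone{u}{T_{k}}$, the mesh obtained by coning every triangle of $T_{k}$ to $u$. The vertices of $T_{k}$ all lie on the unit sphere centered at $u$, so $T_{k}$ bounds a star-shaped region containing $u$; consequently $M_{k}$ is a genuine embedded tetrahedral complex in $\RR^{3}$ in which $\link u = T_{k}$. To conclude that $M_{k}$ is completely well-centered, I would verify, for every triangle $\tau^{2} \in T_{k}$, that (a) $\tau^{2}$ is acute, and (b) the distance from $u$ to $\affine(\tau^{2})$ exceeds $1/\sqrt{2}$. Given (a) and (b), Lemma~\ref{lemma:smplxonsphere} applied with $n=3$ yields $3$-well-centeredness of the isosceles tetrahedron $\cone{u}{\tau^{2}}$; as noted in the discussion following that lemma, applying the same lemma one dimension lower to each isosceles $2$-face $\cone{u}{e}$ (whose vertices lie on a unit circle around $u$) upgrades this to complete well-centeredness of every tetrahedron of $M_{k}$.

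The main obstacle is the verification of (a) and (b). The triangles of $T_{k}$ split by symmetry into three classes: isosceles triangles incident to the north pole, isosceles triangles incident to the south pole, and triangles spanning vertices of both equatorial $k$-gons. For the polar classes the apex angle at the pole equals $2\pi/k$, acute for $k \ge 4$; acuteness of the two equal base angles and the distance bound $> 1/\sqrt{2}$ follow from direct computations using the stated coordinates $(0.936\cos\theta,\,0.936\sin\theta,\,\pm 0.352)$. The equatorial-spanning class is handled analogously, using that the two $k$-gons are exactly out of phase. These checks are the kind of routine coordinate calculations the authors explicitly omit, and I would present them similarly without working through the trigonometric details.
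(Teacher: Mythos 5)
Your proposal matches the paper's own argument: the same family built from the two poles and two out-of-phase regular $k$-gons ($k \ge 4$) on the unit sphere, coned to the center, with $3$-well-centeredness (and complete well-centeredness via the isosceles faces) supplied by Lemma~\ref{lemma:smplxonsphere}, and the routine acuteness/distance verifications deferred exactly as the authors do. The only addition is your explicit remark that distinct vertex counts give pairwise non-isomorphic links, which the paper leaves implicit; otherwise the approach is essentially identical and correct.
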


\bigskip

For large enough $k$, this construction of completely well-centered
neighborhoods of a vertex using acute triangulations
of a unit sphere $S^{2}$ can be generalized.  One
can use more than two regular $k$-gons, alternating the phase
between each successive $k$-gon.

We have seen that there are infinitely many triangulations of
$S^{2}$ that cannot appear and infinitely many that
can appear as the link of a vertex in
a $3$-well-centered mesh.  The authors suspect
that for $m \ge 8$ vertices the majority of triangulations
of $S^{2}$ on $m$ vertices are triangulations
that can appear as a link of a vertex
in a $3$-well-centered mesh.  We do not formally prove that
conjecture in this paper, but in light of the the next
proposition, it is highly likely;  Proposition~\ref{prop:deg3vtx3wc}
provides a method for constructing new triangulations that can
appear as the link of a vertex in a $3$-well-centered tetrahedral
mesh in $\RR^{3}$.

In Proposition~\ref{prop:deg3vtx3wc} we
consider a triangulation $G$ of $S^{2}$ with a vertex
of degree $3$.  In this context, the notation $G - v_{1}$
refers to the triangulation of $S^{2}$ obtained by
deleting vertex $v_{1}$ and all faces incident to $v_{1}$,
replacing them with the face $[v_{2}v_{3}v_{4}]$, where
$v_{2}, v_{3}, v_{4}$ are the neighbors of $v_{1}$
in $G$, ordered to keep the orientation consistent.

\bigskip

\begin{proposition}
\label{prop:deg3vtx3wc}
Let $G$ be a triangulation of $S^{2}$ with
a vertex $v_{1}$ of degree $3$, and let $v_{2}$,
$v_{3}$, $v_{4}$ be the neighbors of $v_{1}$ in
$G$.  Let $M$ be a tetrahedral mesh in $\RR^{3}$ consisting
of a vertex $u$ and its closed neighborhood $\closure{\star~u}$,
with $\link~{u}$ isomorphic to $G - v_{1}$.  If
\begin{enumerate}[(i)]
\item $M$ is $3$-well-centered
\item face angle $\measuredangle uv_{i}v_{j}$ is acute for
  each $i,j \in \{2,3,4\}, i \ne j$\,\text{,}
\end{enumerate}
then there exists a tetrahedral mesh $\widetilde{M}$
in $\RR^{3}$ and a vertex $u$ of $\widetilde{M}$
such that
\begin{enumerate}[(i)]
\item $\link~{u}$ is isomorphic to $G$
\item $\widetilde{M}$ is $3$-well-centered
\item face angle $\measuredangle uv_{i}v_{j}$ is acute for
  each $i,j \in \{1,2,3,4\}, i \ne j$\,\text{.}
\end{enumerate}
\end{proposition}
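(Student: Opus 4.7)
The plan is to construct $\widetilde{M}$ by inserting a new vertex $v_1$ into the tetrahedron $T := [uv_2v_3v_4]$ of $M$, replacing $T$ with the three new tetrahedra $[uv_1v_iv_j]$ (for the three pairs $\{i,j\} \subset \{2,3,4\}$) and keeping every other tetrahedron of $M$ unchanged. Since $M$ is the closed star of $u$, only the three sub-tetrahedra containing $u$ appear in $\widetilde{M}$; the would-be fourth sub-tetrahedron $[v_1v_2v_3v_4]$, which is not incident to $u$, is not in $\widetilde{M}$. Combinatorially this operation replaces the triangle $[v_2v_3v_4]$ of $\link~u$ by the three triangles $[v_1v_iv_j]$, so $\link~u$ becomes isomorphic to $G$, establishing conclusion~(i). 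The remaining task is to choose coordinates for $v_1 \in \interior(T)$ so that each of the three new tetrahedra is $3$-well-centered and the new face angles involving $v_1$ are acute.

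A natural first candidate is the circumcenter $c(T)$, which lies in $\interior(T)$ by hypothesis~(i). There each new tetrahedron is isosceles with apex $v_1$, since $|v_1 - u| = |v_1 - v_i| = R(T)$. With $v_1 = c(T)$ I would apply the Prism Condition (Proposition~\ref{prop:char_suf}) using the base $[uv_iv_j]$ opposite $v_1$: hypothesis~(ii) already provides acute angles at $v_i$ and $v_j$ in each base, and it remains to verify full $2$-well-centeredness of $[uv_iv_j]$ (i.e., acuteness of $\angle v_iuv_j$), that $v_1$ lies outside the equatorial ball of $[uv_iv_j]$, and that the reflection of $P(v_1)$ through $c([uv_iv_j])$ is interior to $[uv_iv_j]$.

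The principal obstacle is that hypothesis~(ii) does not directly force the apex angle $\angle v_iuv_j$ to be acute, so the Prism Condition may fail on some base at the circumcenter placement. To handle this, I would exploit openness: by Proposition~\ref{prop:necandsufpoly} the set of $v_1 \in \interior(T)$ making each new tetrahedron $3$-well-centered is cut out by cubic polynomial inequalities and is open, and conclusion~(iii) adds further open dot-product conditions $(u-v_i)\cdot(v_1-v_i)>0$ and $(u-v_1)\cdot(v_i-v_1)>0$ on top. Perturbing $v_1$ off $c(T)$---along the segment $uc(T)$, or toward the circumcenter of an offending face---corrects any base whose apex angle is non-acute while by continuity the other already-satisfied conditions are preserved. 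The technical core of the proof is showing that the intersection of all these open conditions in $\interior(T)$ is non-empty; here the $3$-well-centeredness of $T$ (providing the starting point $c(T)$ and a valid direction of perturbation toward $u$) and hypothesis~(ii) (placing $v_1$ automatically on the correct side of each perpendicular through $v_i$, so that the angles $\angle uv_iv_1$ remain acute whenever $v_1$ is kept close to $uc(T)$) are used most crucially, and this perturbation/intersection step is what I expect to be the main obstacle.
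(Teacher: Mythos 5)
Your placement of the new vertex is fatally wrong: you put $v_{1}$ in the interior of $T=[uv_{2}v_{3}v_{4}]$ and then ask for all three tetrahedra $[uv_{1}v_{i}v_{j}]$ to be $3$-well-centered, but this is impossible for \emph{every} choice of $v_{1}\in T$. Those three tetrahedra are among the four simplices of $\cone{v_{1}}{\boundary{T}}$, and Lemma~\ref{lemma:simplex_partition} (a direct consequence of the Cylinder Condition, Proposition~\ref{prop:char_nec}) states that at most one simplex of $\cone{v_{1}}{\boundary{T}}$ is $n$-well-centered when $v_{1}$ lies in $T$. Discarding the fourth sub-tetrahedron $[v_{1}v_{2}v_{3}v_{4}]$ does not help, since the lemma already rules out two of the three tetrahedra you keep. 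Consequently the step you defer as the ``technical core''---showing the open conditions have non-empty intersection inside $\interior(T)$---is exactly where the argument dies: that intersection is empty, and no perturbation of $v_{1}$ along $uc(T)$, toward the circumcenter of an offending face, or anywhere else inside $T$ can rescue it. The openness supplied by Proposition~\ref{prop:necandsufpoly} is useless when the feasible set is empty; your isosceles observation at $v_{1}=c(T)$ runs into the same wall (there the circumcenter of each $[uv_{1}v_{i}v_{j}]$ cannot be interior to more than one of them).

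The paper instead places $v_{1}$ \emph{outside} $T$, beyond the face $[v_{2}v_{3}v_{4}]$, which is a boundary face of $M=\closure{\star~u}$ so there is room to grow the mesh there while keeping $\link~u$ isomorphic to $G$. Concretely: let $\ell$ be the line through $u$ and $c(T)$, let $u^{\prime}$ be its second intersection with the circumsphere of $T$ ($3$-well-centeredness of $T$ guarantees the segment $uu^{\prime}$ exits through the interior of $[v_{2}v_{3}v_{4}]$), and set $v_{1}=u^{\prime}_{\varepsilon}=(1-\varepsilon)u^{\prime}+\varepsilon u$ for small $\varepsilon>0$. At $\varepsilon=0$ each facet $[uv_{i}v_{1}]$ is a right triangle with hypotenuse the diameter $uu^{\prime}$, and each tetrahedron $[uv_{i}v_{j}v_{1}]$ has circumcenter exactly $c(T)$, sitting on the edge $uu^{\prime}$; for small $\varepsilon>0$ the angles $\measuredangle uv_{i}v_{1}$ and $\measuredangle uv_{1}v_{i}$ are acute, and the circumcenter of $[uv_{i}v_{j}v_{1}]$ moves off $c(T)$ in the direction $c([uv_{i}v_{j}])-c(T)$, which points into the tetrahedron precisely because hypothesis~(ii) makes the angles of $[uv_{i}v_{j}]$ at $v_{i}$ and $v_{j}$ acute, so that $c([uv_{i}v_{j}])$ lies in the sector at $u$. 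That is the actual role of hypothesis~(ii); it is not there to make Prism-Condition hypotheses hold at $c(T)$, and the Prism Condition is not used in this argument at all.
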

\begin{proof}
Figure~\ref{fig:deg3vtx3wc} accompanies this proof and may
help the reader understand the geometric constructions
discussed in the proof.
Consider a particular tetrahedral mesh that satisfies
the conditions of the hypothesis.  In this mesh the tetrahedron
$\sigma = \sigma^{3} = [uv_{2}v_{3}v_{4}]$
is $3$-well-centered, so $c(\sigma)$ is
interior to $\sigma$.

\begin{figure}
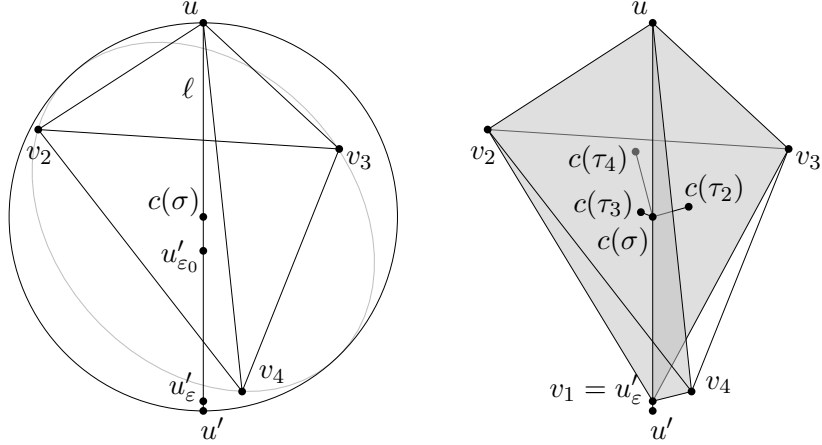

\centering
\begin{minipage}[c]{150pt}
\centering
\includegraphics[width=148pt, trim=0pt -27pt 0pt -27pt, clip]%
  {deg3vtx3wca.pdf}%
\begin{picture}(0,0)
\put(-82,162){$u$}
\put(-82,130){$\ell$}
\put(-141,107){$v_{2}$}
\put(-20,105){$v_{3}$}
\put(-53,24){$v_{4}$}
\put(-95,88){$c(\sigma)$}
\put(-75,1){$u^{\prime}$}
\put(-87,18){$u^{\prime}_{\varepsilon}$}
\put(-90,68){$u^{\prime}_{\varepsilon_{0}}$}
\end{picture}
\end{minipage}%
\hspace{20pt}%
\begin{minipage}[c]{150pt}
\centering
\includegraphics[width=148pt, trim=0pt -27pt 0pt -27pt, clip]%
  {deg3vtx3wcb.pdf}%
\begin{picture}(0,0)
\put(-82,162){$u$}
\put(-143,107){$v_{2}$}
\put(-20,105){$v_{3}$}
\put(-54,20){$v_{4}$}
\put(-95,73){$c(\sigma)$}
\put(-75,1){$u^{\prime}$}
\put(-113,17){$v_{1} = u^{\prime}_{\varepsilon}$}
\put(-62,93){$c(\tau_{2})$}
\put(-104,87){$c(\tau_{3})$}
\put(-106,104){$c(\tau_{4})$}
\end{picture}
\end{minipage}
\caption{Given a $3$-well-centered tetrahedron
  $\sigma = [uv_{2}v_{3}v_{4}]$ with acute angles
  $\measuredangle uv_{i}v_{j}$, one can
  construct three tetrahedra $[uv_{2}v_{3}v_{1}]$,
  $[uv_{3}v_{4}v_{1}]$, and $[uv_{4}v_{2}v_{1}]$ by
  adding a new vertex $v_{1} = u^{\prime}_{\varepsilon}$
  along the line $\ell$ through $u$ and $c(\sigma)$.
  The circumcenters of the constructed tetrahedra lie
  along lines connecting $c(\sigma)$ to the circumcenters
  $c(\tau_{i})$ of the $[uv_{i}v_{j}]$ facets of $\sigma$.
  As discussed in Proposition~\ref{prop:deg3vtx3wc},
  when $v_{1}$ is close enough to
  $u^{\prime}$---the reflection of $u$
  through $c(\sigma)$---the constructed tetrahedra
  will be $3$-well-centered and the angles
  $\measuredangle uv_{1}v_{i}$, $\measuredangle uv_{i}v_{1}$
  will be acute.  The angles $\measuredangle v_{i}uv_{j}$
  do not need to be acute for this construction.
  For example, $\measuredangle v_{2}uv_{3}$ is not an
  acute angle in this figure.}
\label{fig:deg3vtx3wc}
\end{figure}

Let $\ell$ be the line through $u$ and $c(\sigma)$.
Line $\ell$ intersects the circumsphere of $\sigma$ at two
points.  One of these is $u$, and the other we name $u^{\prime}$.
We define
\[
u^{\prime}_{\varepsilon}
    = (1 - \varepsilon)u^{\prime} + \varepsilon u\,\text{,}
\]
a point lying on $\ell$.  Because $\sigma$ is $3$-well-centered,
we know that segment $u u^{\prime}$ intersects triangle
$[v_{2}v_{3}v_{4}]$ at some point
$u^{\prime}_{\varepsilon_{0}}$, with $1/2 > \varepsilon_{0} > 0$.  We
can cut $\sigma$ into the three tetrahedra
$[uv_{2}v_{3}u^{\prime}_{\varepsilon_{0}}]$,
$[uv_{3}v_{4}u^{\prime}_{\varepsilon_{0}}]$, and
$[uv_{4}v_{2}u^{\prime}_{\varepsilon_{0}}]$.

For $\varepsilon_{0} > \varepsilon > 0$ we consider the three tetrahedra
$[uv_{2}v_{3}u^{\prime}_{\varepsilon}]$,
$[uv_{3}v_{4}u^{\prime}_{\varepsilon}]$, and
$[uv_{4}v_{2}u^{\prime}_{\varepsilon}]$.
We claim that for sufficiently small $\varepsilon > 0$ these three
tetrahedra are $3$-well-centered and the face angles
$\measuredangle uu^{\prime}_{\varepsilon}v_{i}$,
$\measuredangle uv_{i}u^{\prime}_{\varepsilon}$
are acute for $i = 2,3,4$.

Examining the face angles first, we note that at $\varepsilon = 0$ the
circumcenters of the facets $[uv_{i}u^{\prime}_{\varepsilon}]$
coincide with $c(\sigma)$ and with each other.  Indeed, each of these
facets is a right triangle with its circumcenter lying on the
hypotenuse $uu^{\prime}_{\varepsilon}$.  As $\varepsilon$ increases,
$\measuredangle v_{i}uu^{\prime}_{\varepsilon}$ does not change,
$\measuredangle uv_{i}u^{\prime}_{\varepsilon}$ decreases, becoming
smaller than $\pi/2$, and $\measuredangle
uu^{\prime}_{\varepsilon}v_{i}$ increases but remains less than
$\pi/2$ for sufficiently small $\varepsilon$.

Turning to the tetrahedra, then, we will argue that the specific
tetrahedron $[uv_{2}v_{3}u^{\prime}_{\varepsilon}]$ is
$3$-well-centered for sufficiently small $\varepsilon$.  An argument
identical except for changed labels applies to the other two
tetrahedra, so this will complete the proof.  We know that, regardless
of the value of $\varepsilon$, the circumcenter of
$[uv_{2}v_{3}u^{\prime}_{\varepsilon}]$ lies on the line orthogonal to
$\affine([uv_{2}v_{3}]) = \affine(\tau_{4})$ passing through
$c(\tau_{4})$; this line is the locus of points equidistant from $u$,
$v_{2}$, and $v_{3}$.  The location of
$c([uv_{2}v_{3}u^{\prime}_{\varepsilon}])$ varies continuously with
$\varepsilon$.  At $\varepsilon = 0$, the circumcenter of tetrahedron
$[uv_{2}v_{3}u^{\prime}_{\varepsilon}]$ coincides with $c(\sigma)$,
and as $\varepsilon$ increases from $0$ towards $\varepsilon_{0}$,
$c([uv_{2}v_{3}u^{\prime}_{\varepsilon}])$ moves in the direction of
vector $c(\tau_{4}) - c(\sigma)$.  Because $\measuredangle
uv_{2}v_{3}$ and $\measuredangle uv_{3}v_{2}$ are acute, we know that
$c(\tau_{4})$ lies in the sector of $\affine(\tau_{4})$ interior to
angle $\measuredangle v_{2}uv_{3}$.  Thus segment
$c(\sigma)c(\tau_{4}) \cap \sigma$ is contained in
$[uv_{2}v_{3}u^{\prime}_{\varepsilon_{0}}] \subset
[uv_{2}v_{3}u^{\prime}_{\varepsilon}]$, and for sufficiently small
$\varepsilon > 0$, tetrahedron $[uv_{2}v_{3}u^{\prime}_{\varepsilon}]$
is $3$-well-centered.
\end{proof}

\bigskip

Because the face angles $\measuredangle uv_{1}v_{i}$,
$\measuredangle uv_{i}v_{1}$ are acute in the construction
of Proposition~\ref{prop:deg3vtx3wc}, the construction
can be iterated.  If a triangulation $G$ of $S^{2}$
satisfies the conditions of Proposition~\ref{prop:deg3vtx3wc},
then a degree $3$ vertex $v_{1}$ can be inserted into face
$[v_{2}v_{3}v_{4}]$.  In the new triangulation of $S^{2}$,
the three new faces incident to $v_{1}$ satisfy the conditions of
Proposition~\ref{prop:deg3vtx3wc}, so a degree $3$ vertex
can be inserted into any one of those three faces,
and so on.  In particular, starting from
any completely well-centered mesh constructed from an acute
triangulation of a unit sphere $S^{2}$, one can successively
insert vertices of degree $3$ to create an infinite family of
triangulations that can appear as the link of a vertex
in a $3$-well-centered mesh.

It is also worth mentioning that each triangulation of a topological
$S^{2}$ with $8$ vertices $v_{1},\ldots,v_{8}$ that can appear as
$\link~u$ for a vertex $u$ in a $3$-well-centered mesh in $\RR^{3}$
has an embedding into $\RR^{3}$ for which all of the face angles
$\measuredangle uv_{i}v_{j}$ are acute for $i,j\in \{1,\ldots,8\}$, $i
\ne j$, where $v_iv_j$ is an edge in the triangulation.  Recall that
there are $50$ nonisomorphic triangulations of $S^{2}$ with $9$
vertices~\cite{BrMc2007}.  Using Proposition~\ref{prop:deg3vtx3wc} to
add vertices of degree $3$ to the various faces of triangulations of
$S^{2}$ with $8$ vertices, one can show that at least $34$ of these
$50$ triangulations of $S^{2}$ with $9$ vertices can appear as the
link of a vertex in a $3$-well-centered tetrahedral mesh embedded in
$\RR^{3}$.

\section{Local Combinatorial Properties of
  $2$-Well-Centered Tetrahedral Meshes}
\label{sec:combinatorial2wctetcond}

Corollary~\ref{cor:min3wc} shows that in a $3$-well-centered mesh
there are at least $7$ edges incident to each vertex.
In the following discussion we will see that the combinatorial
constraints for a mesh to be $2$-well-centered are
quite different from the constraints for
a mesh to be $3$-well-centered, and in terms of the minimum
number of edges incident to a vertex, are more stringent.
As in Sec.~\ref{sec:combinatorial3wccond}, the discussion
focuses on $\link~u$ where $u$ is a vertex interior
to a tetrahedral mesh in $\RR^{3}$.

\begin{definition} We say that a particular triangulation
$G$ of $S^{2}$ {\emph{permits a $2$-well-centered
neighborhood}} of a vertex~$u$ if there exists a
tetrahedral mesh $M$ in $\RR^{3}$ such that $u$ is an
interior vertex of $M$, $\link~u$ is isomorphic to $G$
(as a simplicial complex), and all facets of $M$ incident
to $u$ are $2$-well-centered.  (A facet means a
$2$-simplex in this context---a face of dimension $n - 1$.)
\end{definition}

It should be noted that this definition does not directly
address the question of whether the tetrahedra incident to $u$ are
$2$-well-centered, since each tetrahedron incident to $u$
has one facet lying on $\link~u$, and that facet
is not incident to $u$.  We shall see, however, that
for tetrahedral meshes in $\RR^{3}$, the smallest
triangulation that permits a $2$-well-centered
neighborhood in the sense of this definition can, in
fact, appear as the link of a vertex in a completely
well-centered mesh.  Finally, note that phrasing the
problem in terms of the facets of $M$ incident
to $u$ actually reduces the problem to determining
whether the face angles at $u$ are acute, because if
there is an arrangement of rays at $u$ such that all
of the face angles formed at $u$ by these rays are acute,
then we can place the neighbors of $u$ at the points where
these rays intersect a unit sphere centered at $u$.
This will create a neighborhood of $u$ in which every
$2$-dimensional face incident to $u$ is an isosceles
triangle with an acute apex angle at $u$.

The first result of this section is a simple observation
that forms the foundation for the theory developed in the
rest of the section.

\bigskip

\begin{lemma}
\label{lemma:orthogonalplane}
Let $u$ and $v_{1}$ be adjacent vertices in a
tetrahedral mesh $M$ embedded in $\RR^{3}$ and let
$v_{i}$ be a vertex of $\link~u$ that is adjacent
to $v_{1}$.  The angle $\measuredangle v_{1}uv_{i}$
is acute if and only if
$v_{i} \in H_{1}$, where
$H_{1}$ is the open halfspace that contains
$v_{1}$ and is bounded by the plane through
$u$ orthogonal to the vector $v_{1} - u$.
\end{lemma}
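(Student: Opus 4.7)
The plan is to translate both conditions into a single inner-product inequality and observe that they coincide. First I would recall that the angle $\measuredangle v_{1}uv_{i}$ between the vectors $v_{1} - u$ and $v_{i} - u$ satisfies
\[
\cos\bigl(\measuredangle v_{1}uv_{i}\bigr)
  = \frac{\langle v_{1} - u,\ v_{i} - u\rangle}
         {\lVert v_{1} - u\rVert\,\lVert v_{i} - u\rVert},
\]
so, since $u,v_{1},v_{i}$ are distinct vertices of the mesh (hence the denominator is strictly positive), the angle is acute if and only if $\langle v_{1} - u,\ v_{i} - u\rangle > 0$.

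Next I would unpack the definition of $H_{1}$. By construction, the bounding plane is $\{x \in \RR^{3} : \langle v_{1} - u,\ x - u\rangle = 0\}$, and the two open halfspaces it defines are characterized by the sign of $\langle v_{1} - u,\ x - u\rangle$. Evaluating at $x = v_{1}$ gives $\langle v_{1} - u,\ v_{1} - u\rangle = \lVert v_{1} - u\rVert^{2} > 0$, so $H_{1}$ is precisely the set
\[
H_{1} = \bigl\{x \in \RR^{3} : \langle v_{1} - u,\ x - u\rangle > 0\bigr\}.
\]

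Combining the two observations, $v_{i} \in H_{1}$ is equivalent to $\langle v_{1} - u,\ v_{i} - u\rangle > 0$, which in turn is equivalent to $\measuredangle v_{1}uv_{i}$ being acute. This gives the biconditional. There is no substantive obstacle here; the statement is essentially just the geometric content of the inner product, and the only care needed is to ensure the denominators are nonzero (which follows from the hypothesis that $v_{1}$ and $v_{i}$ are distinct mesh vertices adjacent to $u$) and that $H_{1}$ is correctly identified as the ``positive'' side of the orthogonal plane.
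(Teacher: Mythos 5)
Your proof is correct and is essentially the paper's argument: both reduce the acuteness of $\measuredangle v_{1}uv_{i}$ to the sign of $\langle v_{1}-u,\ v_{i}-u\rangle$ and identify $H_{1}$ as the open halfspace where that inner product is positive. Your version simply spells out the details (the cosine formula and the nondegeneracy of the norms) that the paper leaves implicit.
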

\begin{proof}
The angle $\measuredangle v_{1}uv_{i}$ is acute if and only
if $\langle v_{1} - u, v_{i} - u\rangle > 0$,
where $\langle \cdot, \cdot\rangle$ is the standard
inner product on $\RR^{3}$, and this holds if
and only if $v_{i}$ lies in $H_{1}$.
\end{proof}

\bigskip

The next two technical lemmas are
based on Lemma~\ref{lemma:orthogonalplane}.
They lead to the proof of the main result
of this section.  In both lemmas
and in the subsequent theorem we use the following notation.
We denote by $u$ a vertex in a tetrahedral
mesh in $\RR^{3}$, and the $m$ vertices of
$\link~u$ are labeled $v_{1}, \ldots, v_{m}$.
For each vertex $v_{i}$, the plane through
$u$ orthogonal to $v_{i} - u$ is denoted $P_{i}$,
and the open halfspace bounded by $P_{i}$ that
contains $v_{i}$ is denoted $H_{i}$.  The
other halfspace bounded by $P_{i}$ will be
called $H^{\prime}_{i}$, and we take this to
be a closed halfspace, which contains its
boundary $P_{i}$.  The orthogonal
projection of a vertex $v_{j}$ into $P_{i}$
will be denoted $Pr_{i}(v_{j})$.

\bigskip

\begin{lemma}
\label{lemma:2wcvtxproj}
Let $v_{1}$ and $v_{2}$ be nonadjacent vertices
of $\link~u$, with $v_{2} \in H^{\prime}_{1}$.
If $v_{i}$ is a vertex adjacent to
both $v_{1}$ and $v_{2}$ such that
$\measuredangle v_{1}uv_{i}$ and
$\measuredangle v_{2}uv_{i}$ are both
acute angles, then the orthogonal
projection of $v_{i}$ into $P_{1}$ lies
in $P_{1} \cap H_{2}$.
\end{lemma}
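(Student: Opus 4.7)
The plan is to reduce the geometric claim to a short inner-product calculation, using Lemma~\ref{lemma:orthogonalplane} (and its obvious extension) to translate ``the angle at $u$ is acute'' and ``$v_j\in H_j$'' into positivity of the standard inner product against $v_j-u$. Throughout, I work with the vectors $v_j-u$, so $u$ plays the role of the origin.

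First I would orthogonally decompose $v_i-u$ along and transverse to $v_1-u$, writing
\[
v_i - u \;=\; c\,(v_1-u) + w, \qquad c \;=\; \frac{\langle v_i-u,\;v_1-u\rangle}{\lVert v_1-u\rVert^{2}},
\]
where $w$ is orthogonal to $v_1-u$ and therefore lies in the linear plane parallel to $P_1$. Since $P_1$ passes through $u$, the orthogonal projection satisfies $Pr_1(v_i) = u+w$, so showing $Pr_1(v_i)\in P_1\cap H_2$ reduces to showing $\langle w,\,v_2-u\rangle > 0$ (membership in $P_1$ is automatic).

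Next I would read off the signs of the three relevant inner products from the hypotheses. Acuteness of $\measuredangle v_1 u v_i$ yields $\langle v_1-u,\,v_i-u\rangle>0$, hence $c>0$. Acuteness of $\measuredangle v_2 u v_i$ yields $\langle v_2-u,\,v_i-u\rangle>0$. Finally, $v_2\in H'_1$ says $v_2$ lies in the closed halfspace opposite $v_1$ across $P_1$, which is precisely $\langle v_1-u,\,v_2-u\rangle \le 0$. Combining these via the decomposition gives
\[
\langle w,\;v_2-u\rangle \;=\; \langle v_i-u,\;v_2-u\rangle \;-\; c\,\langle v_1-u,\;v_2-u\rangle \;>\; 0,
\]
since the first term is strictly positive and the subtracted term is nonpositive (product of a positive and a nonpositive quantity). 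This is the desired inequality.

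There is essentially no obstacle here: once the three hypotheses are translated through Lemma~\ref{lemma:orthogonalplane}, the result is a one-line linear-algebra identity. The only point that requires a moment of care is that $v_2\in H'_1$ gives only a nonstrict inequality (because $H'_1$ is the closed halfspace), but this nonstrict inequality combines with the strict inequality $\langle v_i-u,\,v_2-u\rangle>0$ to still yield a strict conclusion, so $Pr_1(v_i)$ lands in the \emph{open} halfspace $H_2$ as claimed.
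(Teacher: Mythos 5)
Your proof is correct, and it reaches the paper's conclusion by a cleaner route. The paper's proof fixes an explicit coordinate frame ($u$ at the origin, $v_{1}$ on the positive $z$-axis, and then $v_{2}=(x_{2},0,z_{2})$ with $x_{2}<0$), which forces it to first argue that $v_{2}$ does not lie on the $z$-axis (this is deduced from $v_{i}\in H_{1}\cap H_{2}\neq\emptyset$) before the normalization $y_{2}=0$ is legitimate; the sign chase $x_{i}x_{2}+z_{i}z_{2}>0$, $z_{i}\ge 0$, $z_{2}\le 0$, $x_{2}<0\Rightarrow x_{i}<0$ then gives $Pr_{1}(v_{i})\in P_{1}\cap H_{2}$. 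You instead use the coordinate-free orthogonal decomposition $v_{i}-u=c\,(v_{1}-u)+w$ with $c>0$ and $w\perp v_{1}-u$, so that $Pr_{1}(v_{i})=u+w$ and the conclusion is the single identity $\langle w,\,v_{2}-u\rangle=\langle v_{i}-u,\,v_{2}-u\rangle-c\,\langle v_{1}-u,\,v_{2}-u\rangle>0$, where the hypotheses enter exactly as in Lemma~\ref{lemma:orthogonalplane} (strict positivity from $\measuredangle v_{2}uv_{i}$ acute, nonpositivity of the subtracted term from $v_{2}\in H^{\prime}_{1}$, $c>0$ from $\measuredangle v_{1}uv_{i}$ acute). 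The two arguments are the same computation at heart, but yours avoids the auxiliary step about $v_{2}$ and the $z$-axis, handles the open/closed halfspace distinction explicitly, and, being free of the choice of a planar coordinate slice, generalizes verbatim to $\RR^{n}$ --- which is relevant to the paper's closing remark that Lemmas~\ref{lemma:orthogonalplane}, \ref{lemma:2wcvtxproj}, and~\ref{lemma:facetintrsct} extend to higher dimensions. What the paper's coordinate setup buys in exchange is that the same frame (with $y_{2}=0$, $x_{2}<0$) is reused directly in the proofs of Lemma~\ref{lemma:facetintrsct} and Theorem~\ref{thm:nminus3}, so the explicit coordinates are doing organizational work beyond this one lemma.
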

\begin{proof}

\begin{figure}
\centering
\includegraphics[width=244pt, trim=0pt 0pt 0pt 0pt, clip]
  {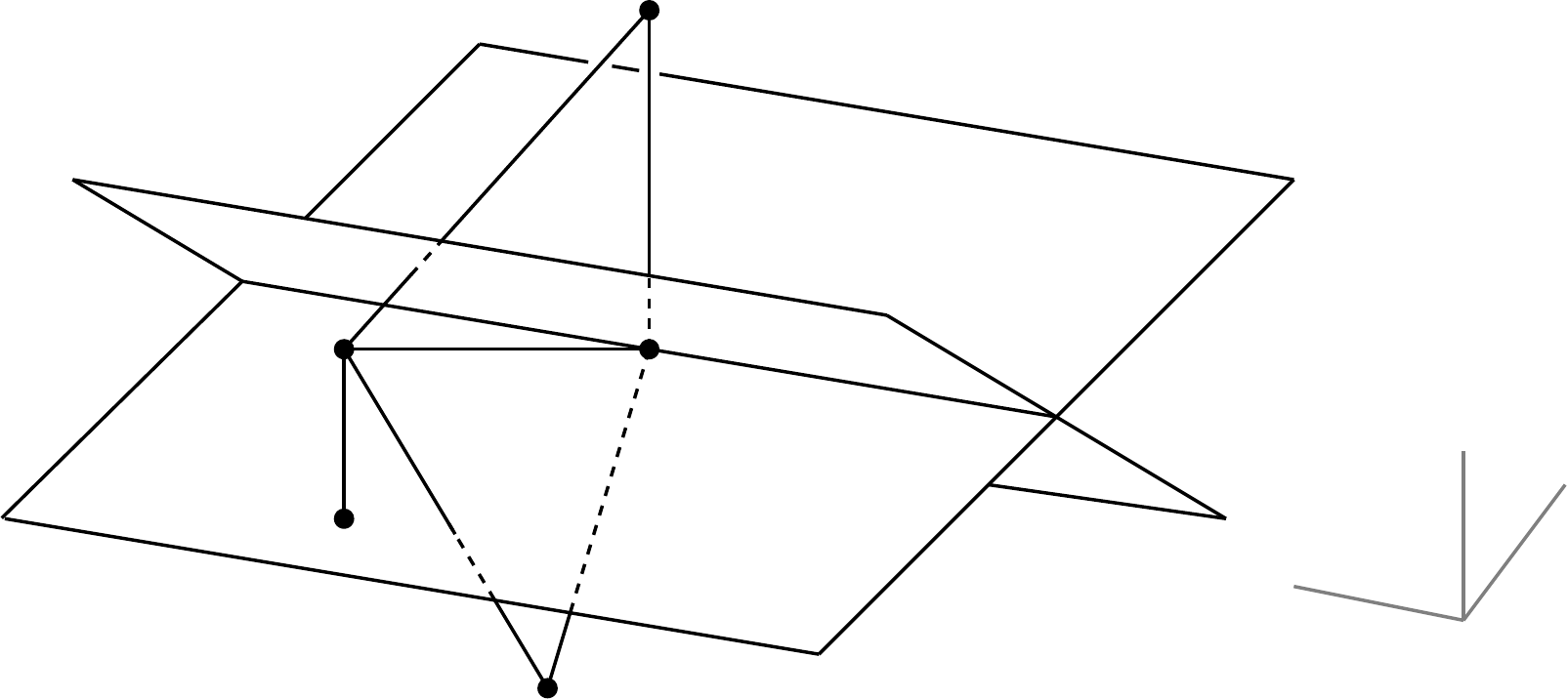}%
\begin{picture}(0,0)
\put(-142,45){$u$}
\put(-139,103){$v_{1}$}
\put(-155,2){$v_{2}$}
\put(-202,54){$v_{i}$}
\put(-67,72){$P_{1}$}
\put(-114.8,51.3){$P_{2}$}
\put(-228,30){$Pr_{1}(v_{i})$}
\put(-6,17){\gray{\small{$x$}}}
\put(-32,8){\gray{\small{$y$}}}
\put(-23,24){\gray{\small{$z$}}}
\end{picture}
\caption{If a $2$-well-centered mesh contains two vertices
  $v_{1}$ and $v_{2}$ that both lie in $\link~u$,
  are not adjacent to each other, and have a common
  neighbor $v_{i}$ and if $v_{2}$ lies in $H^{\prime}_{1}$,
  then the orthogonal projection of $v_{i}$ into $P_{1}$,
  i.e., the point $Pr_{1}(v_{i})$, must lie in $P_{1} \cap H_{2}$.}
\label{fig:2wcvtxproj}
\end{figure}

The sketch in Fig.~\ref{fig:2wcvtxproj} illustrates
this result.  For an algebraic proof we assign a
coordinate system with $u$ as the origin and $v_{1}$
lying on the positive $z$ axis.  Using coordinates
$(x_{i}, y_{i}, z_{i})$ for vertex $v_{i}$, the condition
$v_{2} \in H^{\prime}_{1}$ means that $z_{2} \le 0$.  Since
the angle $\measuredangle v_{1}uv_{i}$ is acute,
Lemma~\ref{lemma:orthogonalplane} implies that $v_{i}$
must lie in $H_{1}$, and since the angle
$\measuredangle v_{2}uv_{i}$ is acute,
Lemma~\ref{lemma:orthogonalplane} implies that $v_{i}$
must lie in $H_{2}$.  Thus $v_{i}$ lies
in $H_{1} \cap H_{2}$.  Since $H_{1} \cap H_{2}$ would
be empty if $v_{2}$ had coordinates $(0, 0, z_{2})$,
we can conclude that $v_{2}$ does not lie on the $z$-axis.
With the remaining freedom in defining a coordinate
system we specify that $v_{2}$ has coordinates
$(x_{2}, 0, z_{2})$ with $x_{2} < 0$.

Now since $v_{i} \in H_{1}$, we know that $z_{i} \ge 0$.
We also know that
$\langle v_{i}, v_{2}\rangle = x_{i}x_{2} + z_{i}z_{2} > 0$,
because $v_{i} \in H_{2}$.
We have established that $z_{i}z_{2} \le 0$ and
that $x_{2} < 0$.  It follows that $x_{i} < 0$.  The
projection $Pr_{1}(v_{i})$ has coordinates
$(x_{i}, y_{i}, 0)$ and is interior to
$P_{1} \cap H_{2} = \{(x,y,0):x < 0\}$.
\end{proof}

\bigskip


\begin{lemma}
\label{lemma:facetintrsct}
Let $v_{1}$ and $v_{2}$ be nonadjacent vertices of $\link~u$,
with $v_{2} \in H^{\prime}_{1}$.
If $[v_{i}v_{2}v_{j}]$ is a $2$-simplex of $\link~u$,
such that $v_{i}, v_{j}$ are both adjacent to $v_{1}$
and the face angles
$\measuredangle v_{1}uv_{i}$, $\measuredangle v_{1}uv_{j}$,
$\measuredangle v_{2}uv_{i}$, $\measuredangle v_{2}uv_{j}$,
are all acute angles, then
$Pr_{1}([v_{i}v_{2}v_{j}]) \subset P_{1} \cap H_{2}$,
i.e., the orthogonal projection of the entire
facet $[v_{i}v_{2}v_{j}]$ into $P_{1}$ lies in $H_{2}$.
\end{lemma}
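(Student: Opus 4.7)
The plan is to reduce the claim about a whole triangle to a claim about its three vertices and then invoke convexity. Since orthogonal projection is an affine map, we have $Pr_{1}([v_{i}v_{2}v_{j}]) = \mathrm{conv}(Pr_{1}(v_{i}), Pr_{1}(v_{2}), Pr_{1}(v_{j}))$. Also, $P_{1} \cap H_{2}$ is the intersection of a plane with an open halfspace, hence convex. So it suffices to show that each of the three vertex projections lies in $P_{1} \cap H_{2}$.

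For $Pr_{1}(v_{i})$ and $Pr_{1}(v_{j})$, the hypotheses are exactly those of Lemma~\ref{lemma:2wcvtxproj}: both $v_{i}$ and $v_{j}$ are adjacent to $v_{1}$ and to $v_{2}$, and the angles $\measuredangle v_{1}uv_{i}$, $\measuredangle v_{2}uv_{i}$, $\measuredangle v_{1}uv_{j}$, $\measuredangle v_{2}uv_{j}$ are all acute. Applying Lemma~\ref{lemma:2wcvtxproj} twice immediately gives $Pr_{1}(v_{i}), Pr_{1}(v_{j}) \in P_{1} \cap H_{2}$.

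The remaining step is to show $Pr_{1}(v_{2}) \in P_{1} \cap H_{2}$, and this is where a brief direct computation is needed. I would reuse the coordinate system fixed in the proof of Lemma~\ref{lemma:2wcvtxproj}: place $u$ at the origin, $v_{1}$ on the positive $z$-axis, and $v_{2}$ at $(x_{2},0,z_{2})$ with $z_{2}\le 0$ and $x_{2}<0$ (the latter requires that $v_{2}$ not lie on the $z$-axis, which was already justified in the previous lemma from the existence of a common neighbor $v_{i}$ of $v_{1}, v_{2}$ producing two acute angles, forcing $H_{1}\cap H_{2}\ne\emptyset$). In this frame, $P_{1}$ is the $xy$-plane, so $Pr_{1}(v_{2}) = (x_{2},0,0)$, and $\langle Pr_{1}(v_{2}),\, v_{2}-u\rangle = x_{2}^{2} > 0$, which places $Pr_{1}(v_{2})$ strictly in $H_{2}$.

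Combining these three inclusions with convexity of $P_{1} \cap H_{2}$ completes the proof. There is no real obstacle here; the only subtle point is making sure the inclusion $Pr_{1}(v_{2}) \in H_{2}$ is strict, but this follows from $x_{2} \ne 0$, which was already extracted in the proof of Lemma~\ref{lemma:2wcvtxproj}. The lemma is essentially a convex-hull corollary of Lemma~\ref{lemma:2wcvtxproj} plus one direct verification for the $v_{2}$ vertex.
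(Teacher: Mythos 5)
Your proof is correct and follows essentially the same route as the paper's: apply Lemma~\ref{lemma:2wcvtxproj} to $v_{i}$ and $v_{j}$, verify directly in the coordinate frame of that lemma that $Pr_{1}(v_{2})=(x_{2},0,0)$ with $x_{2}<0$ lies in $P_{1}\cap H_{2}$, and conclude by convexity of $P_{1}\cap H_{2}$ since the projected facet is the convex hull of the three projected vertices. The only differences are cosmetic (making the affine-map fact explicit and checking $\langle Pr_{1}(v_{2}),v_{2}\rangle=x_{2}^{2}>0$ rather than citing $x_{2}<0$ directly).
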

\begin{proof}
\begin{figure}
\centering
\includegraphics[width=244pt, trim=0pt 0pt 0pt 0pt, clip]
  {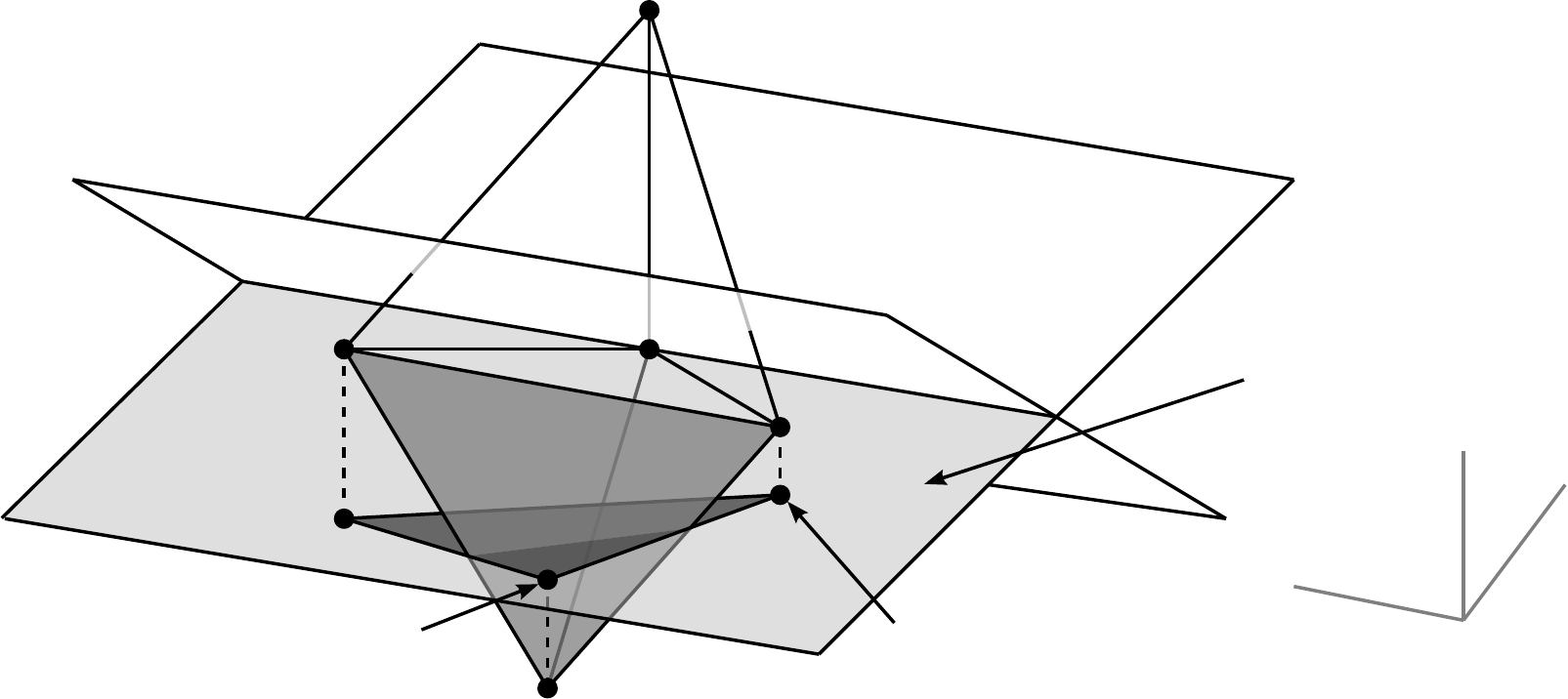}%
\begin{picture}(0,0)
\put(-141,57){$u$}
\put(-139,103){$v_{1}$}
\put(-153,2){$v_{2}$}
\put(-216,6){$Pr_{1}(v_{2})$}
\put(-202,54){$v_{i}$}
\put(-228,30){$Pr_{1}(v_{i})$}
\put(-120,42){$v_{j}$}
\put(-104,7){$Pr_{1}(v_{j})$}
\put(-50,46){$P_{1}\cap H_{2}$}
\put(-6,17){\gray{\small{$x$}}}
\put(-32,8){\gray{\small{$y$}}}
\put(-23,24){\gray{\small{$z$}}}
\end{picture}
\caption{Let $u$ be a vertex of a tetrahedral mesh
embedded in $\RR^{3}$, and let $v_{1}$, $v_{2}$,
$v_{i}$, $v_{j}$ be vertices of $\link~u$ with
adjacencies as shown.  If the face angles at $u$
between adjacent vertices of $\link~u$ are all
acute angles, but $\measuredangle v_{1}uv_{2}$
is nonacute, then the projection of facet
$[v_{i}v_{2}v_{j}]$ into $P_{1}$ lies in
$P_{1} \cap H_{2}$.}
\label{fig:2wcfacetproj}
\end{figure}

See the sketch in Fig.~\ref{fig:2wcfacetproj}.
From the given hypotheses we can conclude by
Lemma~\ref{lemma:2wcvtxproj} that
$Pr_{1}(v_{i})$ and $Pr_{1}(v_{j})$ both lie in
$P_{1} \cap H_{2}$.  Using the same coordinate system
defined in the proof of Lemma~\ref{lemma:2wcvtxproj},
the point $Pr_{1}(v_{2})$ has coordinates $(x_{2}, 0, 0)$
with $x_{2} < 0$, thus it lies in $P_{1} \cap H_{2}$ as
well.  It follows that the orthogonal projection of the
facet $[v_{i}v_{2}v_{j}]$ into $P_{1}$, which is
the convex hull of $Pr_{1}(v_{i})$, $Pr_{1}(v_{2})$,
and $Pr_{1}(v_{j})$, lies entirely in
the convex set $P_{1} \cap H_{2}$.
\end{proof}

\bigskip

Applying the above two lemmas, we obtain a combinatorial
necessary condition on the neighborhood of an interior vertex
in a $2$-well-centered mesh.

\bigskip

\begin{theorem}
\label{thm:nminus3}
Let $G$ be a triangulation of $S^{2}$ with
$m$ vertices.  If $G$ contains a vertex $v_{1}$ of degree
$d(v_{1}) \ge m - 3$, then $G$ does not permit a
$2$-well-centered neighborhood.
\end{theorem}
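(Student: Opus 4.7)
The plan is a proof by contradiction. Suppose $G$ permits a $2$-well-centered neighborhood of an interior vertex~$u$, so that the vertices $v_1,\ldots,v_m$ of $G$ can be placed in $\RR^3$ around $u$ with every face angle at~$u$ along an edge of $G$ acute. I would put $u$ at the origin and $v_1$ on the positive $z$-axis, so $P_1=\{z=0\}$ and $H_1=\{z>0\}$. By Lemma~\ref{lemma:orthogonalplane}, every neighbor of $v_1$ lies in $H_1$; since $d(v_1)\ge m-3$, at most two vertices, $v_2$ and possibly $v_3$, fail to be adjacent to $v_1$. The critical topological fact I would exploit is that, since $u$ is interior to the mesh, $u$ lies in the bounded component of $\RR^3\setminus G$, so the negative $z$-ray from $u$ meets $G$ at a point $p^-$ with $z(p^-)<0$.

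First, if every non-neighbor of $v_1$ lies in $H_1$, then every vertex is in $H_1$, hence $G\subset H_1$ (as convex combinations), and the bounded component of $\RR^3\setminus G$ is contained in $H_1$, which cannot contain $u\in P_1$---a contradiction. So at least one non-neighbor, say $v_2$, lies in $H^{\prime}_1$; after rotating around the $z$-axis I may assume $v_2=(x_2,0,z_2)$ with $x_2<0$ and, in generic position, $z_2<0$. The face $\tau$ of $G$ containing $p^-$ must have a vertex with $z<0$, hence contains $v_2$ or $v_3$.

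If $\tau=[v_2,v_i,v_j]$ with $v_i,v_j\ne v_3$, then both $v_i,v_j$ are adjacent to $v_1$, and Lemma~\ref{lemma:facetintrsct} gives $Pr_1(\tau)\subset P_1\cap H_2=\{x<0\}$, contradicting $Pr_1(p^-)=u=(0,0,0)$. If $\tau=[v_2,v_3,v_j]$ with $v_3\in H_1$ (so $z_3>0$), acuteness of $\measuredangle v_2uv_3$ gives $x_2x_3+z_2z_3>0$ with $z_2z_3<0$, forcing $x_3<0$; hence all three projections $Pr_1(v_2),Pr_1(v_3),Pr_1(v_j)$ lie in $\{x<0\}$ and the same contradiction follows. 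The symmetric argument rules out faces containing $v_3$ but not $v_2$. The only remaining possibility is that $v_2\sim v_3$, both lie in $H^{\prime}_1$, and $\tau=[v_2,v_3,v_a]$ for a common neighbor $v_a$ of $v_1,v_2,v_3$.

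This last subcase is the main obstacle, since Lemma~\ref{lemma:facetintrsct} does not apply ($v_3$ is not adjacent to $v_1$), and requires a direct sign computation. From $v_a\sim v_1,v_2$ one obtains $z_a>0$ and $x_a<0$. Writing $p^-=\alpha v_2+\beta v_3+\gamma v_a$, the $x$-coordinate equation $\alpha x_2+\beta x_3+\gamma x_a=0$ with $x_2,x_a<0$ forces $x_3>0$. Then $v_a\sim v_3$ gives $x_ax_3+y_ay_3+z_az_3>0$; since $x_ax_3<0$ and $z_az_3<0$, we conclude $y_ay_3>0$, so $y_a$ and $y_3$ share a sign. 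But the $y$-coordinate equation $\beta y_3+\gamma y_a=0$ with $\beta,\gamma>0$ demands they have opposite signs. The degenerate configurations (when one of $\alpha,\beta,\gamma$ is zero, or when $v_2$ lies exactly on $P_1$) can be ruled out by similar sign arguments, completing the contradiction.
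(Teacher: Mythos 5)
Your proposal is correct, and it follows the paper's framework closely in its skeleton: the same normalization ($u$ at the origin, $v_{1}$ on the positive $z$-axis), the same use of Lemma~\ref{lemma:orthogonalplane} to force all neighbors of $v_{1}$ into $H_{1}$, and the same invocation of Lemmas~\ref{lemma:2wcvtxproj} and~\ref{lemma:facetintrsct} to handle facets containing exactly one non-neighbor of $v_{1}$. Where you genuinely diverge is in two places. First, you make the topological step explicit: instead of the paper's formulation (every facet meeting $H_{1}^{\prime}$ projects into $P_{1}$ missing $u$, hence $u$ lies outside the polyhedron bounded by $\link~u$ and some tetrahedron is inverted), you argue directly that the negative $z$-ray from the interior point $u$ must hit the link at some $p^{-}$ and derive a contradiction from the single facet containing $p^{-}$; this is a cleaner way to package the same fact. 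Second, and more substantively, in the hard case $\tau=[v_{2}v_{3}v_{a}]$ with $z_{2},z_{3}\le 0$ the paper proves the stronger statement that $Pr_{1}(\tau)$ avoids $u$ by a geometric decomposition (locating a point $p_{0}\in P_{1}\cap H_{2}\cap H_{3}$ on $Pr_{1}([v_{2}v_{3}])$ and splitting the projected triangle into cones lying in $P_{1}\cap H_{2}$ and $P_{1}\cap H_{3}$), whereas you only need to exclude the specific point $p^{-}$ and do so by a barycentric sign computation: $z_{a}>0$ and $x_{a}<0$ from $v_{a}\sim v_{1},v_{2}$, then $x_{3}>0$ from the $x$-equation, then $y_{3}y_{a}>0$ from $\langle v_{3},v_{a}\rangle>0$, contradicting the $y$-equation. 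This is a nice simplification, and the degenerate cases you defer do check out (e.g.\ $\beta=0$ is impossible, $\gamma=0$ forces $y_{3}=0$ and then $\langle v_{3},v_{a}\rangle\le 0$; $z_{2}=0$ or $z_{3}=0$ only make the signs easier). The one justification you should add is that $v_{2}$ does not lie on the $z$-axis, which is needed for the normalization $v_{2}=(x_{2},0,z_{2})$, $x_{2}<0$: if it did, every neighbor of $v_{2}$ would satisfy $z<0$ and hence be a non-neighbor of $v_{1}$, but $v_{2}$ has at least three neighbors and only $v_{3}$ is available (the paper handles this point via a common neighbor of $v_{1}$ and $v_{2}$).
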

\begin{proof}
We consider a vertex $u$ such that $\link~u$ is
isomorphic to $G$ where $G$ has a vertex of degree
at least $m - 3$ and consider a geometric realization of
$\closure{\star~u}$ in $\RR^{3}$.  Label the vertices
of $\link~u$ with the labels $v_{1},v_{2},\ldots,v_{m}$
such that $v_{1}$ is a vertex of maximum degree and the
(at most two) vertices not adjacent to $v_{1}$ are
listed immediately after $v_{1}$ (e.g., labeled $v_{2}$,
$v_{3}$ if there are two of them).
We choose a coordinate system on $\RR^{3}$
such that $u$ is at the origin and $v_{1}$ lies on the
positive $z$ axis.

\begin{figure}
\centering
\begin{minipage}[c]{180pt}%
\includegraphics[width=180pt, trim=0pt 0pt 0pt 0pt, clip]
  {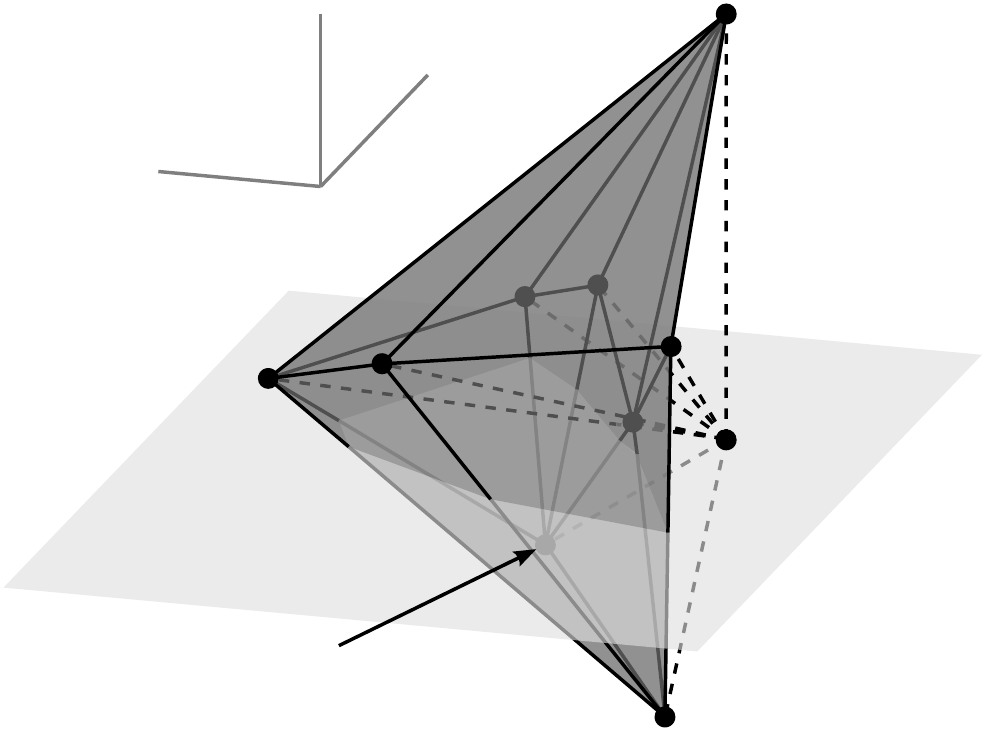}%
\begin{picture}(0,0)
\put(-43,52){$u$}
\put(-43,127){$v_{1}$}
\put(-54,3){$v_{2}$}
\put(-124,9){$v_{3}$}
\put(-160,30){\small{$P_{1}$}}
\put(-109,106){\gray{\small{$x$}}}
\put(-141,94){\gray{\small{$y$}}}
\put(-129,114){\gray{\small{$z$}}}
\end{picture}%
\end{minipage}%
\hspace{30pt}%
\begin{minipage}[c]{120pt}%
\includegraphics[width=120pt, trim=0pt 0pt 0pt 0pt, clip]
  {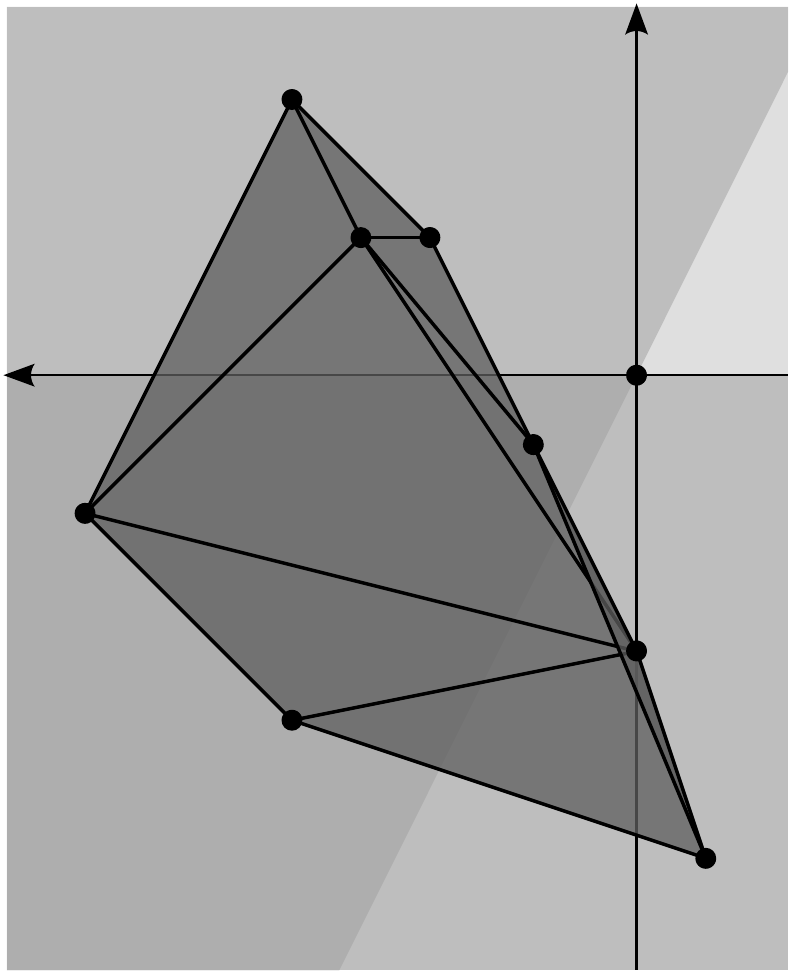}%
\begin{picture}(0,0)
\put(-21,82){$u$}
\put(-21,50){$v_{2}$}
\put(-78,112){$v_{3}$}
\put(-21,140){\scriptsize{$x$}}
\put(-117,83){\scriptsize{$y$}}
\put(-60,3){\small{$P_{1}\cap H_{2}$}}
\put(-117,136){\small{$P_{1}\cap H_{3}$}}
\put(-12,98){\small{$P_{1}$}}
\end{picture}%
\end{minipage}
\caption{When $\link~u$ has $m$ vertices
  and one of the vertices $v_{1}$ has degree $d(v_{1}) \ge m - 3$,
  any geometric realization of $\closure{\star~u}$ in $\RR^{3}$ with
  all face angles at $u$ acute is not an embedding.
  Theorem~\ref{thm:nminus3} shows that if we consider such a
  geometric realization and project every facet that intersects
  $H_{1}^{\prime}$ into $P_{1}$, then the union of the projected
  facets does not contain $u$.  The sketch at left shows an
  example of a geometric realization of a tetrahedral mesh
  $\closure{\star~u}$ in $\RR^{3}$ such that
  every face angle at vertex $u$ is acute.
  In the sketch, $v_{1}$ has degree $d(v_{1}) = 6 = m - 3$
  in~$\link~u$.  The sketch at right shows the result of taking
  the geometric realization on the left and projecting each facet
  that intersects $H_{1}^{\prime}$ into $P_{1}$.}
\label{fig:2wcthmconcept}
\end{figure}

Assume that all of the face angles $\measuredangle v_{i}uv_{j}$
are acute.  We claim this implies that for any facet
$[v_{i}v_{j}v_{k}]$ with at least one vertex
in $H^{\prime}_{1}$, the orthogonal projection of
the facet into $P_{1}$, i.e., $Pr_{1}([v_{i}v_{j}v_{k}])$,
does not contain vertex $u$.
Assuming this claim for the moment, we see that $u$ lies
outside the (solid) polyhedron bounded by $\link~u$.
(See Fig.~\ref{fig:2wcthmconcept}.)  Since $u$
is outside this polyhedron, some
$3$-simplex incident to $u$ must be inverted.
Thus the geometric realization of
$\closure{\star~u}$ is not an embedding, and the
claim completes the proof.

We proceed to prove the claim.
Noting that $v_{1} \in H_{1}$ by
our definition of $H_{1}$, we observe that
for $i \ge 4$, vertex $v_{i}$ must lie
in $H_{1}$ because $v_{i}$ is adjacent to $v_{1}$.
(This follows from Lemma~\ref{lemma:orthogonalplane}.)
Thus there are only two
types of facets that may have nonempty intersection
with $H^{\prime}_{1}$.  The first
type is $[v_{i}v_{2}v_{j}]$ or $[v_{i}v_{3}v_{j}]$
where $v_{i}$ and $v_{j}$ both are adjacent to $v_{1}$,
and the second type is $[v_{2}v_{3}v_{j}]$ for $j \ge 4$.
Consider, then, the first type of facet, taking
the specific notation $[v_{i}v_{2}v_{j}]$.
(The same argument applies to $[v_{i}v_{3}v_{j}]$.)
If $v_{2}$ lies in $H_{1}$, we are done; the facet
does not intersect $H^{\prime}_{1}$.  Otherwise
$v_{2}$ lies in $H^{\prime}_{1}$.  Hence
$\measuredangle v_{1}uv_{2}$ is nonacute, and
$v_{2}$ is not adjacent to $v_{1}$.
Lemma~\ref{lemma:facetintrsct} applies.

The proof for facets of the second type is
more complicated.  If both $v_{2}$ and $v_{3}$
lie in $H_{1}$, we are done.  If one vertex lies in
$H_{1} \cup P_{1}$ and the other lies in $H^{\prime}_{1}$,
we assume without loss of generality that
$z_{2} \le 0$ and $z_{3} \ge 0$.

Then $v_{2}$ is not
adjacent to $v_{1}$.   If $v_{3}$ is adjacent to $v_{1}$,
then Lemma~\ref{lemma:facetintrsct} applies directly 
with $v_{3}$ functioning as $v_{i}$. On the other hand,
even if $v_{3}$ is not adjacent to $v_{1}$, the arguments of
Lemmas~\ref{lemma:2wcvtxproj} and~\ref{lemma:facetintrsct}
can be applied with $v_{3}$ functioning as $v_{i}$.
(In the proofs of Lemmas\ref{lemma:2wcvtxproj}
and~\ref{lemma:facetintrsct} we used
$v_{i}$ adjacent to $v_{1}$ to establish only that
$z_{i} \ge 0$ and that $v_{2}$ does not lie on the
$z$-axis.  The latter holds in this
case because $v_{2}$ and $v_{1}$ have common neighbor
$v_{j} \ne v_{3}$.)

This leaves the case $z_{2} < 0$ and
$z_{3} < 0$.  As noted above, $v_{2}$ does not lie on
the $z$-axis.  We choose the coordinate system
with $v_{2} = (x_{2}, 0, z_{2})$, $x_{2} < 0$.
We also assume without loss of generality
that $y_{3} \ge 0$.  (We can reflect through the plane
$y = 0$ if $y_{3} < 0$.)  See Fig.~\ref{fig:2wcthmprf}
for sketches related to this case.

By applying Lemma~\ref{lemma:orthogonalplane} three
times, we obtain $v_{j} \in H_{1} \cap H_{2} \cap H_{3}$,
and by applying Lemma~\ref{lemma:2wcvtxproj} twice we
obtain $Pr_{1}(v_{j}) \in P_{1} \cap H_{2} \cap H_{3}$.
If $x_{3} < 0$, then the whole segment
$Pr_{1}([v_{2}v_{3}])$ lies in $P_{1} \cap H_{2}$,
and since $Pr_{1}(v_{j}) \in P_{1} \cap H_{2}$, it follows
that $Pr_{1}([v_{2}v_{3}v_{j}]) \subset P_{1} \cap H_{2}$.
Thus $Pr_{1}([v_{2}v_{3}v_{j}])$ does not contain $u$.

\begin{figure}
\centering
\begin{minipage}[c]{150pt}%
\includegraphics[width=150pt, trim=0pt 0pt 0pt 0pt, clip]
  {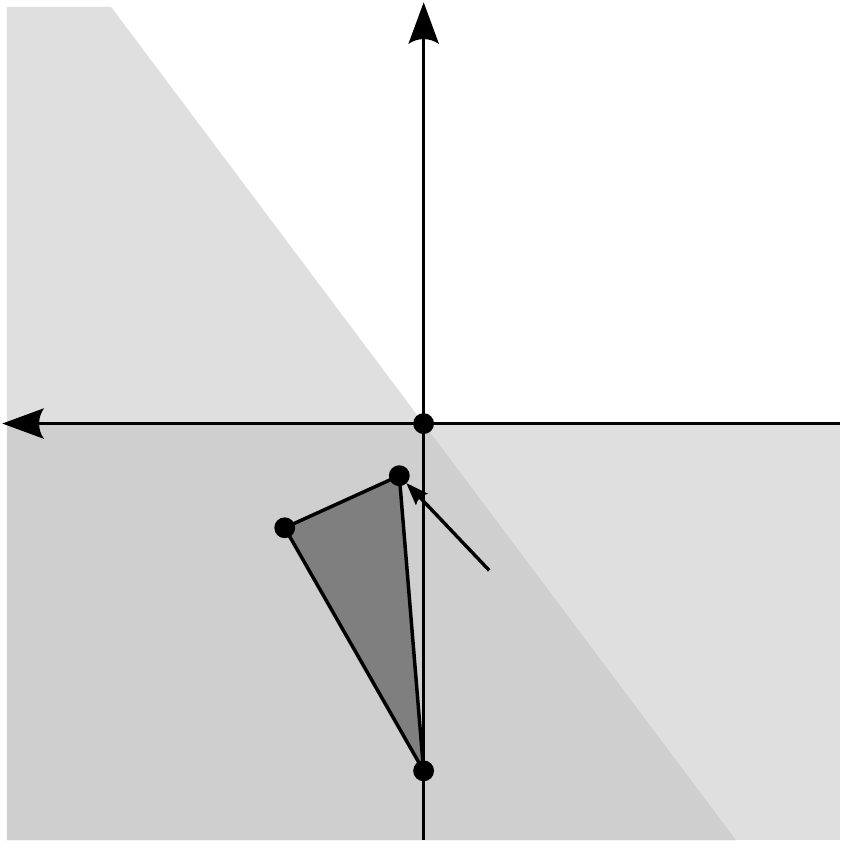}%
\begin{picture}(0,0)
\put(-73,78){$u$}
\put(-73,6){$Pr_{1}(v_{2})$}
\put(-139,52){$Pr_{1}(v_{3})$}
\put(-63,44){$Pr_{1}(v_{j})$}
\put(-70,142){\scriptsize{$x$}}
\put(-146,67){\scriptsize{$y$}}
\put(-38,65){\small{$P_{1}\cap H_{2}$}}
\put(-147,118){\small{$P_{1}\cap H_{3}$}}\end{picture}%
\end{minipage}%
\hspace{40pt}%
\begin{minipage}[c]{150pt}%
\includegraphics[width=150pt, trim=0pt 0pt 0pt 0pt, clip]
  {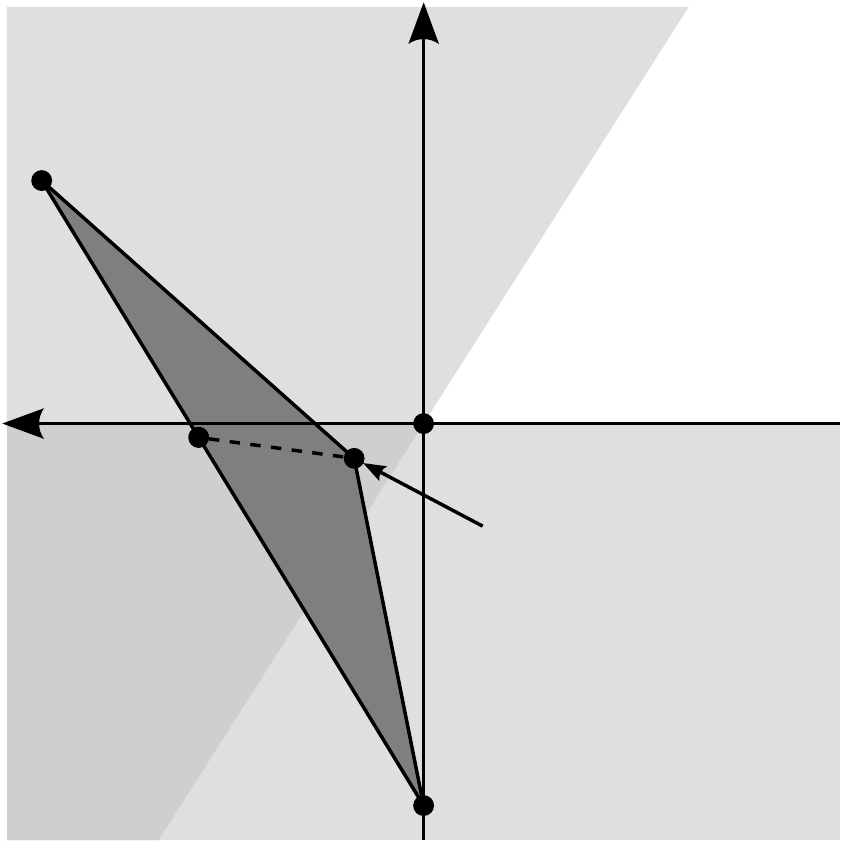}%
\begin{picture}(0,0)
\put(-73,78){$u$}
\put(-73,6){$Pr_{1}(v_{2})$}
\put(-138,118){$Pr_{1}(v_{3})$}
\put(-65,51){$Pr_{1}(v_{j})$}
\put(-124,65){$p_{0}$}
\put(-70,142){\scriptsize{$x$}}
\put(-146,67){\scriptsize{$y$}}
\put(-38,65){\small{$P_{1}\cap H_{2}$}}
\put(-147,140){\small{$P_{1}\cap H_{3}$}}
\end{picture}%
\end{minipage}
\caption{In the proof of Theorem~\ref{thm:nminus3}, the most
  difficult case to analyze is a facet of the second type
  $[v_{2}v_{3}v_{j}]$ with $z_{2} < 0$ and $z_{3} < 0$.  These
  sketches illustrate the projection of the facet $[v_{2}v_{3}v_{j}]$
  into $P_{1}$ for the two subcases $x_{3} < 0$ (left)
  and $x_{3} \ge 0$ (right).  In both subcases, $Pr_{1}(v_{j})$
  lies in $P_{1} \cap H_{2} \cap H_{3}$.  When $x_{3} < 0$,
  $Pr_{1}(v_{3}) \in P_{1} \cap H_{2}$, so the projection of
  facet $[v_{2}v_{3}v_{j}]$ into $P_{1}$ is a subset of
  $P_{1} \cap H_{2}$.  When $x_{3} \ge 0$, the projected
  facet can be decomposed into two pieces meeting along
  $[p_{0}Pr_{1}(v_{j})]$.  One piece lies in $P_{1} \cap H_{2}$,
  and the other piece lies in $P_{1} \cap H_{3}$.  In both
  subcases we see that $u \notin Pr_{1}([v_{2}v_{3}v_{j}])$.}
\label{fig:2wcthmprf}
\end{figure}

So we assume that $x_{3} \ge 0$.  Now if $x_{3} = 0$
we know that $y_{3} \ne 0$ because $v_{3}$, like~$v_{2}$,
does not lie on the $z$-axis.  Moreover, $x_{3} > 0$ also
implies $y_{3} \ne 0$, since otherwise $P_{1} \cap H_{3}$ would
be $\{(x,y,0): x > 0\}$, yielding
$P_{1} \cap H_{2} \cap H_{3} = \emptyset \owns Pr_{1}(v_{j})$.
A point on $Pr_{1}([v_{2}v_{3}])$ has the form
\[
  \lambda Pr_{1}(v_{2}) + (1 - \lambda) Pr_{1}(v_{3})
    = \lambda (x_{2}, 0, 0) + (1 - \lambda) (x_{3}, y_{3}, 0)\,\text{,}
\]
with $0 \le \lambda \le 1$.  Thus for a point
$p = (x_{p}, y_{p}, 0)$ on $Pr_{1}([v_{2}v_{3}])$, either
$x_{p} < 0$ and the point lies in $P_{1} \cap H_{2}$ or
both $x_{p} \ge 0$ and $y_{p} > 0$
so that $\langle p, v_{3}\rangle = x_{p}x_{3} + y_{p}y_{3} > 0$
and the point lies in $P_{1} \cap H_{3}$.  We conclude
that $Pr_{1}([v_{2}v_{3}]) \subset P_{1} \cap (H_{2} \cup H_{3})$.

Finally we note that there must exist a point
$p_{0} = (\varepsilon, y_{p_{0}}, 0)$ with $\varepsilon < 0$
such that $p_{0}$ lies on $Pr_{1}([v_{2}v_{3}])$ and
$p_{0} \in P_{1} \cap H_{2} \cap H_{3}$.  Thus we can
decompose $Pr_{1}([v_{2}v_{3}v_{j}])$ into the two pieces
$\cone{p_{0}}{Pr_{1}([v_{2}v_{j}])}$ and
$\cone{p_{0}}{Pr_{1}([v_{3}v_{j}])}$,
with the first piece lying in $P_{1} \cap H_{2}$
and the second piece lying in $P_{1} \cap H_{3}$.
It follows that $Pr_{1}([v_{2}v_{3}v_{j}])$
does not contain $u$.
\end{proof}

\bigskip

Recall that Euler's formula specifies a relationship between
the number of vertices, edges, and faces in a planar
graph.  If $m$, $e$, and $f$ are the number of vertices,
edges, and faces respectively, then Euler's formula states
that $m - e + f = 2$ for planar graphs.  In a
planar triangulation
each face is incident to three edges and each edge is
incident to two faces, so $2e = 3f$, and the relationship
$f = 2(m - 2)$ can be derived.  Moreover, in a
planar triangulation
each face is incident to three vertices, and vertex
$v_{i}$ is incident to $d(v_{i})$ faces, so
$\sum_{i} d(v_{i}) = 3f = 6(m - 2)$.  Combining these
consequences of Euler's formula with
Theorem~\ref{thm:nminus3}, we easily obtain a
lower bound on the number of edges incident to an
interior vertex in a $2$-well-centered tetrahedral mesh
in $\RR^{3}$.

\bigskip

\begin{corollary}
\label{cor:min2wc}
Let $M$ be a $2$-well-centered (or completely well-centered)
tetrahedral mesh in $\RR^{3}$.  For every
vertex $u$ interior to $M$, at least $9$ edges of $M$
are incident to $u$.
\end{corollary}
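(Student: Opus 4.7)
The plan is to combine Theorem~\ref{thm:nminus3} with the Euler formula identities spelled out just before the corollary. The number of edges of $M$ incident to $u$ equals the number $m$ of vertices of the link $G = \link~u$, so it suffices to show $m \ge 9$ whenever $G$ is a triangulation of $S^{2}$ that permits a $2$-well-centered neighborhood of $u$.

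First I would invoke Theorem~\ref{thm:nminus3}: since $G$ permits a $2$-well-centered neighborhood, no vertex of $G$ has degree at least $m - 3$. Hence $d(v_{i}) \le m - 4$ for every vertex $v_{i}$ of $G$. Summing over all $m$ vertices gives
\[
\sum_{i=1}^{m} d(v_{i}) \;\le\; m(m-4).
\]
Next I would apply the consequence of Euler's formula recorded in the paragraph preceding the corollary, namely $\sum_{i} d(v_{i}) = 6(m-2)$. Combining the two relations yields the quadratic inequality
\[
6(m-2) \;\le\; m(m-4), \qquad \text{i.e.,} \qquad m^{2} - 10m + 12 \ge 0.
\]
The roots of $m^{2} - 10m + 12$ are $5 \pm \sqrt{13}$, and since $5 + \sqrt{13} > 8$, any integer $m$ satisfying the inequality and exceeding the smaller root must satisfy $m \ge 9$. (One can also just check directly that $m = 8$ violates the inequality, since $64 - 80 + 12 = -4 < 0$, while $m = 9$ gives $81 - 90 + 12 = 3 \ge 0$.)

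There is no real obstacle here; the corollary is essentially a two-line arithmetic consequence of the previously stated theorem and Euler's formula. The only minor care point is to remember that the completely well-centered case is subsumed by the $2$-well-centered case, so the same bound applies to both, and to note that the statement concerns \emph{interior} vertices, for which $\link~u$ is indeed a triangulation of $S^{2}$ so that Theorem~\ref{thm:nminus3} is applicable.
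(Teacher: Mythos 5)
Your proposal is correct and uses exactly the same ingredients as the paper's proof, namely Theorem~\ref{thm:nminus3} together with the Euler-formula identity $\sum_i d(v_i) = 6(m-2)$; the only difference is that you sum the uniform bound $d(v_i) \le m-4$ and solve the quadratic $m^2 - 10m + 12 \ge 0$ in one stroke, whereas the paper checks the cases $m = 4,\ldots,8$ individually by exhibiting a vertex of degree at least $m-3$ via the average degree. This is a harmless repackaging of the same argument, and your implicit use of $m \ge 4$ (so that only the larger root matters) is justified since the link of an interior vertex is a triangulation of $S^2$.
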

\begin{proof}
Let $G = \link~u$ for some interior vertex $u$ of a
$2$-well-centered mesh $M$, and let
$m$ be the number of edges incident
to $u$, i.e., the number of vertices of $G$.
Consider the possibility $m = 8$. Euler's formula shows
that for $m = 8$ we have $\sum_{i} d(v_{i}) = 36$ so the
average vertex degree is $4.5$, and there must be at least
one vertex of degree at least $5 = m - 3$.  By
Theorem~\ref{thm:nminus3}, this cannot occur, for such a graph
$G$ would not permit a $2$-well-centered neighborhood of $u$.
Similarly, if $m = 7$ the average degree is $30/7 > 4$
and there must be a vertex of degree at least $m - 2$.
For $m = 6$ the average degree is $4$ and there
must be a vertex of degree at least $m - 2$.  In each of the
cases $m = 5$ and $m = 4$, there is only one
triangulation, and this triangulation has a vertex
of degree $m - 1$.
\end{proof}

\bigskip

\begin{figure}
\centering
\begin{minipage}[c]{150pt}
\centering
\includegraphics[width = 150pt, trim = 524pt 237pt 522pt 222pt, clip]%
  {onefreev_14/goodrndsym.pdf}%
\begin{picture}(0, 0)
\put(-140, 110){\scriptsize{$x$}}
\put(-142, 124){\scriptsize{$y$}}
\put(-147, 132){\scriptsize{$z$}}
\end{picture}
\end{minipage}%
\hspace{50pt}%
\begin{minipage}[c]{153pt}
\settowidth{\dotlen}{.}
\begin{tabular}{| r @{.} l | r @{.} l | r @{.} l |}
\hline
\multicolumn{2}{|c|}{$x$}
    & \multicolumn{2}{c|}{$y$} & \multicolumn{2}{c|}{$z$}\\
\hline
\multicolumn{1}{|r @{\hspace{\dotlen}}}{$0$} & &
  \multicolumn{1}{r @{\hspace{\dotlen}}}{$0$} & &
  \multicolumn{1}{r @{\hspace{\dotlen}}}{$0$} & \\
\hline
\multicolumn{1}{|r @{\hspace{\dotlen}}}{$0$} & &
  \multicolumn{1}{r @{\hspace{\dotlen}}}{$0$} & &
  \multicolumn{1}{r @{\hspace{\dotlen}}}{$1$} & \\
\hline
\multicolumn{1}{|r @{\hspace{\dotlen}}}{$0$} &
    &   $0$ & $533$   &   $0$ & $164$\\
\hline
$0$ & $533$   &   \multicolumn{1}{r @{\hspace{\dotlen}}}{$0$} &
    &   $0$ & $164$\\
\hline
\multicolumn{1}{|r @{\hspace{\dotlen}}}{$0$} &
    &   $-0$ & $533$   &   $0$ & $164$\\
\hline
$-0$ & $533$   &   \multicolumn{1}{r @{\hspace{\dotlen}}}{$0$} &
    &   $0$ & $164$\\
\hline
$0$ & $63$   &   $0$ & $63$   &   $-0$ & $7$\\
\hline
$-0$ & $63$   &   $-0$ & $63$   &   $-0$ & $7$\\
\hline
$0$ & $594$   &   $-0$ & $594$   &   $-0$ & $9$\\
\hline
$-0$ & $594$   &   $0$ & $594$   &   $-0$ & $9$\\
\hline
\end{tabular}
\end{minipage}
\caption{A completely well-centered mesh with an interior vertex $u$
  such that $\link~u$ has nine vertices and
  degree list $(5,5,5,5,5,5,4,4,4)$.  The vertex coordinates
  are listed in the table at right; vertex $u$ is at the origin.}
\label{fig:onefreev_14/goodrndsym}
\end{figure}

When $m = 9$, the average degree is $4 \frac{2}{3}$, and
there is a triangulation of $S^{2}$ with degree list
$(5, 5, 5, 5, 5, 5, 4, 4, 4)$ that permits a completely
well-centered neighborhood.
Figure~\ref{fig:onefreev_14/goodrndsym}
shows a figure of a completely-well-centered mesh
that has a single interior vertex $u$
such that $\link~u$ is a $9$-vertex triangulation of
$S^{2}$ with the specified degree list.

We have already seen that there are infinitely
many triangulations of $S^{2}$ that can appear
as the link of an interior vertex in a
$2$-well-centered mesh (Proposition~\ref{prop:inftycwc}).
In the spirit of Proposition~\ref{prop:deg3vtx3wc},
we now discuss some ways to use an existing triangulation
that permits a $2$-well-centered neighborhood
to construct new triangulations that permit a
$2$-well-centered neighborhood.
The next two propositions show that one can add vertices
of degree $3$, subtract vertices of degree $3$,
or add vertices of degree $4$ to obtain new
triangulations that permit a $2$-well-centered
neighborhood.
In Proposition~\ref{prop:deg3} we again use the
notation $G - v_{1}$ used in Proposition~\ref{prop:deg3vtx3wc}.

\bigskip

\begin{proposition}
\label{prop:deg3}
A triangulation $G$ of $S^{2}$ that contains a vertex $v_{1}$
of degree three permits a $2$-well-centered neighborhood
if and only if the triangulation $G - v_{1}$ permits
a $2$-well-centered neighborhood.
\end{proposition}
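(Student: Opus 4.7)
The plan is to work in the spherical picture recalled just before this proposition: $G$ permits a $2$-well-centered neighborhood if and only if one can assign unit vectors $\hat v_1,\ldots,\hat v_m \in \RR^3$ to the vertices of $G$ so that the great-circle-arc arrangement they determine is a spherical triangulation of $S^2$ combinatorially isomorphic to $G$, with every arc of length less than $\pi/2$ (equivalently, $\hat v_i\cdot\hat v_j>0$ for every edge $v_iv_j$). I would prove both directions by a purely local modification of such an arrangement near $\hat v_1$; the translation back to an honest tetrahedral mesh is then automatic by placing the neighbors of $u$ on the unit sphere centered at $u$, as noted in the paragraph following the definition.

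For the forward direction, take a spherical realization of $G$ with all arcs acute. Because $v_1$ has degree three with neighbors $v_2,v_3,v_4$, the three spherical triangles incident to $\hat v_1$ form a topological disk bounded by the three arcs $\hat v_2\hat v_3$, $\hat v_3\hat v_4$, $\hat v_4\hat v_2$. Each of these bounding arcs is acute, and since $\hat v_1\cdot\hat v_i>0$ for $i=2,3,4$, the vector $\hat v_1$ lies in the open positive cone generated by $\hat v_2,\hat v_3,\hat v_4$; hence it lies in $\interior(T)$ where $T$ is the unique convex spherical triangle with those three vertices. It follows that the three-triangle disk coincides with $T$. Deleting $\hat v_1$ and replacing these three triangles by $T$ produces a spherical triangulation isomorphic to $G-v_1$, and every remaining arc was already an arc of $G$, so every arc is acute.

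For the reverse direction, take a spherical realization of $G-v_1$ with all arcs acute. The face $[v_2v_3v_4]$ of $G-v_1$ is realized as a convex spherical triangle $T$ with three acute sides. Set
\[
\hat v_1 := \frac{\hat v_2+\hat v_3+\hat v_4}{\lVert\hat v_2+\hat v_3+\hat v_4\rVert}.
\]
Since $\hat v_1$ lies in the open positive cone of $\hat v_2,\hat v_3,\hat v_4$, it lies in $\interior(T)$, and the three arcs $\hat v_1\hat v_i$ subdivide $T$ into three spherical triangles matching the combinatorics of $G$ near $v_1$. For each $i\in\{2,3,4\}$,
\[
\hat v_1\cdot \hat v_i \;\propto\; 1 + \hat v_i\cdot\hat v_j + \hat v_i\cdot\hat v_k, \quad \{j,k\}=\{2,3,4\}\setminus\{i\},
\]
and every summand is positive by the acuteness of the three arcs of $T$; hence each new arc is acute. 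The modified arrangement therefore realizes $G$ with all arcs acute.

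The main subtlety is keeping the spherical combinatorial picture rigorous: one must check that the three original spherical triangles incident to $\hat v_1$ truly glue to a single convex spherical triangle $T$ (which relies on the arcs $\hat v_i\hat v_j$ being shorter than $\pi/2$ so that the bounded region is unambiguous, together with the cone/hemisphere argument above placing $\hat v_1$ in $\interior(T)$), and dually that inserting $\hat v_1$ in the interior of $T$ in the reverse direction yields a legitimate refinement without introducing arc crossings. Once these points are handled, the two constructions together give the biconditional.
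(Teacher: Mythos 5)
Your route is essentially the paper's: the reverse direction is the paper's argument verbatim (the paper inserts $v_{1}=\lambda_{2}v_{2}+\lambda_{3}v_{3}+\lambda_{4}v_{4}$ with arbitrary positive weights and expands the inner products; you take equal weights and normalize), and the forward direction is, as in the paper, simply deletion of $v_{1}$, noting that every surviving face angle at $u$ was already acute. What you add is an attempt to justify the point the paper passes over silently, namely that after deleting $\hat v_{1}$ the three spherical triangles incident to it glue to exactly the convex spherical triangle $T$ spanned by $\hat v_{2},\hat v_{3},\hat v_{4}$, so that the modified arrangement really is a realization of $G-v_{1}$.

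The justification you give for that step, however, does not hold: from $\hat v_{1}\cdot\hat v_{i}>0$ for $i=2,3,4$ you infer that $\hat v_{1}$ lies in the open positive cone generated by $\hat v_{2},\hat v_{3},\hat v_{4}$, and hence in $\interior(T)$. That implication is false in general; if $\hat v_{2},\hat v_{3},\hat v_{4}$ are clustered near the north pole and $\hat v_{1}$ sits at latitude $45^{\circ}$, all three inner products are positive while $\hat v_{1}$ is far outside the cone. The correct reason $\hat v_{1}\in\interior(T)$ in your setting is not these three inner products but the fact that the three triangles $[\hat v_{1}\hat v_{i}\hat v_{j}]$ belong to a genuine geodesic triangulation and therefore have pairwise disjoint interiors (consistent orientation). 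To repair the step: all six pairs among $\hat v_{1},\ldots,\hat v_{4}$ are edges of $G$ (the outer three bound the faces at $v_{1}$), so all pairwise inner products are positive; if no open hemisphere contained all four vectors, the origin would be a nontrivial nonnegative combination $\sum_{i}\lambda_{i}\hat v_{i}=0$, and pairing with $\hat v_{j}$ gives $\lambda_{j}+\sum_{i\ne j}\lambda_{i}\,\hat v_{i}\cdot\hat v_{j}=0$, forcing every $\lambda_{i}=0$ --- a contradiction. Hence the four vectors lie in a common open hemisphere, and central (gnomonic) projection reduces the configuration to the planar fact that three triangles $[pab]$, $[pbc]$, $[pca]$ with pairwise disjoint interiors force $p\in\interior([abc])$ and tile it. With that substitution your forward direction is sound (and indeed supplies a detail the paper's own proof treats as immediate); the reverse direction needs no change.
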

\begin{proof}
First we suppose that $G$ permits a $2$-well-centered
neighborhood.  Then consider some tetrahedral mesh embedded
in $\RR^{3}$ that contains a vertex $u$ with
$\link~u$ isomorphic to $G$ and all
face angles $\measuredangle v_{i}uv_{j}$ acute.  We choose
a coordinate system on $\RR^{3}$ such that $u$
lies at the origin and identify each vertex $v_{i}$ of
$\link~u$ with the vector originating at the
origin and terminating at $v_{i}$.  Now
vector $v_{1}$ makes an acute face
angle for each of the three facets that are incident
to the edge $[uv_{1}]$.  Deleting $v_{1}$ from
$\link~u$ removes the three facets that are incident to
edge $[uv_{1}]$, but has no effect on the other facets
incident to $u$ or face angles at $u$.  Thus
all facets incident to $u$ remain acute after removing $v_{1}$,
and the modified neighborhood of $u$ is a mesh embedded in
$\RR^{3}$ that certifies that
$G - v_{1}$ permits a $2$-well-centered neighborhood.

On the other hand, if we suppose that $G - v_{1}$ permits a
$2$-well-centered neighborhood, we will be able to add vertex
$v_{1}$ and still have all face angles at $u$ acute.  Consider
some specific tetrahedral mesh embedded in $\RR^{3}$ containing a
vertex $u$ such that $\link~u$ is isomorphic to $G - v_{1}$.
Let $v_{2}, v_{3},$ and $v_{4}$ be the
three vertices of $\link~u$ that are adjacent to
$v_{1}$ in $G$.  Then the mesh contains facets
$[v_{2}uv_{3}]$, $[v_{3}uv_{4}]$, and $[v_{4}uv_{2}]$.
Moreover, since the face angles at $u$
are acute, we have
$\langle v_{i}, v_{j}\rangle > 0$ for each
$(i, j) \in \{2,3,4\}\times\{2,3,4\}.$  It follows that if
we insert vertex $v_{1}$ satisfying
$v_{1} = \lambda_{2}v_{2} + \lambda_{3}v_{3}
  + \lambda_{4}v_{4}$
with each $\lambda_{i} > 0,$ then for $i = 2, 3, 4$ we have
\[
\langle v_{1}, v_{i}\rangle =
    \lambda_{2}\langle v_{2}, v_{i}\rangle
  + \lambda_{3}\langle v_{3}, v_{i}\rangle
  + \lambda_{4}\langle v_{4}, v_{i}\rangle  > 0.
\]
In other words, as long as $v_{1}$ lies interior to the cone at $u$
bounded by vectors $v_{2}$, $v_{3}$, and $v_{4}$,
it will make acute face angles with each of $v_{2}$,
$v_{3},$ and $v_{4}$.
\end{proof}

\bigskip

Notice that Proposition~\ref{prop:deg3} also implies
that adding or deleting a degree three vertex from a
triangulation of $S^{2}$ that does not permit a $2$-well-centered
neighborhood creates another triangulation of $S^{2}$ that
does not permit a $2$-well-centered neighborhood.  In
particular, this means that Theorem~\ref{thm:nminus3} does not
characterize the triangulations of $S^{2}$ that cannot
appear as the link of a vertex in a $2$-well-centered
tetrahedral mesh in $\RR^{3}$.

In the next proposition, we consider the case of
a triangulation $G$ of $S^{2}$ with a vertex $v_{1}$
such that $d(v_{1}) = 4$.  To talk about removing
vertex $v_{1}$ from $G$ in this
case, we need to specify an edge to add after removing
the vertex.  Let $v_{2}$, $v_{3}$, $v_{4}$, and $v_{5}$
be the neighbors of $v_{1}$, listing in cyclic order.
Then $(G - v_{1}) \cup [v_{2}v_{4}]$ is the triangulation
of $S^{2}$ obtained from $G$ by removing vertex $v_{1}$
along with the four edges and triangles incident to $v_{1}$
and adding the edge $[v_{2}v_{4}]$ along with the two
triangles $[v_{2}v_{3}v_{4}]$ and $[v_{2}v_{4}v_{5}]$.

\bigskip

\begin{proposition}
\label{prop:deg4}
Consider a triangulation $G$ of $S^{2}$ that contains a vertex $v_{1}$
of degree four with neighbors $v_{2}$, $v_{3}$, $v_{4}$, $v_{5}$
(listed in clockwise order).
If $(G - v_{1}) \cup [v_{2}v_{4}]$ or $(G - v_{1}) \cup [v_{3}v_{5}]$
permits a  $2$-well-centered neighborhood, then $G$
permits a $2$-well-centered neighborhood.
\end{proposition}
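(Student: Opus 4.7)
The plan is to follow the template of the insertion direction in the proof of Proposition~\ref{prop:deg3}, using the diagonal edge of the intermediate triangulation together with Lemma~\ref{lemma:orthogonalplane}. I would assume that $(G - v_1) \cup [v_2 v_4]$ permits a $2$-well-centered neighborhood (the case of $[v_3 v_5]$ is handled symmetrically), fix a tetrahedral mesh realizing this, place $u$ at the origin, and identify each vertex of $\link~u$ with its position vector. Lemma~\ref{lemma:orthogonalplane} then translates acuteness of every face angle at $u$ in the given mesh into $\langle v_i, v_j\rangle > 0$ for every pair $(v_i, v_j)$ spanning an edge of that link.

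The main observation that will drive the proof is that in the intermediate link $(G - v_1) \cup [v_2 v_4]$, both $v_2$ and $v_4$ are adjacent to every vertex in $\{v_2, v_3, v_4, v_5\}$ --- trivially for $j = 2$ and $j = 4$, and for the remaining indices via the four cycle edges $[v_2 v_3]$, $[v_3 v_4]$, $[v_4 v_5]$, $[v_5 v_2]$ together with the diagonal $[v_2 v_4]$. I would then set
\[
v_1 := \lambda_2 v_2 + \lambda_4 v_4
\]
for arbitrary positive scalars $\lambda_2, \lambda_4$, so that
\[
\langle v_1, v_j\rangle = \lambda_2 \langle v_2, v_j\rangle + \lambda_4 \langle v_4, v_j\rangle > 0 \quad\text{for each } j \in \{2, 3, 4, 5\}.
\]
Lemma~\ref{lemma:orthogonalplane} then immediately yields acuteness of the four new face angles $\measuredangle v_j u v_1$. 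The remaining face angles at $u$ are unchanged from the original mesh, and the angle $\measuredangle v_2 u v_4$ is no longer required to be acute since $[v_2 v_4]$ is not an edge of $G$.

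The main obstacle will be verifying that this algebraic assignment actually yields an embedded tetrahedral neighborhood whose combinatorial link at $u$ is $G$. The ray from $u$ through $v_1$ lies in the plane $\Pi := \mathrm{span}(u, v_2, v_4)$, and its intersection with the unit sphere centered at $u$ is an interior point of the shorter geodesic arc from $v_2$ to $v_4$ --- precisely the diagonal arc of the spherical quadrilateral with corners $v_2, v_3, v_4, v_5$. Since $v_3$ and $v_5$ must lie on opposite sides of $\Pi$ (otherwise the tetrahedra $[u v_2 v_3 v_4]$ and $[u v_2 v_4 v_5]$ of the original mesh could not both be non-degenerate), the four candidate tetrahedra $[u v_1 v_2 v_3]$, $[u v_1 v_3 v_4]$, $[u v_1 v_4 v_5]$, $[u v_1 v_5 v_2]$ will be non-degenerate, fill exactly the region previously occupied by $[u v_2 v_3 v_4] \cup [u v_2 v_4 v_5]$, and carry consistent orientations. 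After rescaling the neighbors of $u$ to the unit sphere, every facet incident to $u$ will be an isosceles triangle with acute apex at $u$, certifying that $G$ permits a $2$-well-centered neighborhood.
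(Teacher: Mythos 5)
Your proposal is correct and takes essentially the same approach as the paper: the paper likewise assumes WLOG the diagonal $[v_{2}v_{4}]$, inserts $v_{1}$ as a positive combination of $v_{2}$ and $v_{4}$ (specifically the midpoint $(v_{2}+v_{4})/2$), and verifies the four new face angles at $u$ via $\langle v_{1}, v_{j}\rangle > 0$ for $j = 2,3,4,5$. Your closing paragraph on non-degeneracy is a harmless elaboration that the paper sidesteps, since choosing the midpoint makes the new neighborhood a literal subdivision of the existing embedded mesh.
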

\begin{proof}
Suppose without loss of generality that
$(G - v_{1}) \cup [v_{2}v_{4}]$ permits a
$2$-well-centered neighborhood.  Let $u$ be a vertex
for which $\link~u$ is isomorphic to $(G - v_{1}) \cup [v_{2}v_{4}]$
and consider some embedding of $\cone{u}{\link~u}$ into
$\RR^{3}$ such that all face angles at $u$ are acute.
We choose a coordinate system on $\RR^{3}$ such that $u$
lies at the origin and identify each vertex of
$\link~u$ with the vector originating at the
origin $u$ and terminating at the vertex.  We know that
$\langle v_{2}, v_{3}\rangle > 0,$
$\langle v_{3}, v_{4}\rangle > 0,$ $\langle v_{4}, v_{5}\rangle > 0,$
$\langle v_{5}, v_{2}\rangle > 0,$ and $\langle v_{2}, v_{4}\rangle > 0,$
because each pair of vectors bounds a face with an acute
face angle at $u$.

Now let $v_{1} = (v_{2} + v_{4})/2$ and, deleting the facet
$[v_{2}uv_{4}]$, add the four facets $[v_{1}uv_{i}]$
for $i = 2,3,4,5$.
The new facets $[v_{1}uv_{2}]$ and $[v_{1}uv_{4}]$ have face angles
at $u$ that are smaller than the face angle
$\measuredangle v_{2}uv_{4}$ was, so they are acute. The facets
$[v_{1}uv_{3}]$ and $[v_{1}uv_{5}]$ also have acute
face angles at $u$ because
\[
\langle v_{1}, v_{3}\rangle = \frac{1}{2}\langle v_{2}, v_{3}\rangle
   + \frac{1}{2}\langle v_{4}, v_{3}\rangle > 0
\]
and similarly
\[
\langle v_{1}, v_{5}\rangle = \frac{1}{2}\langle v_{2}, v_{5}\rangle
   + \frac{1}{2}\langle v_{4}, v_{3}\rangle > 0.
\]
We see that adding $v_{1} = (v_{2} + v_{4})/2$ has created
a new mesh for which all face angles at $u$ are acute.  Thus
$G$ permits a $2$-well-centered neighborhood.
\end{proof}

\bigskip

\section{Applications to the Cube}
\label{sec:cubeapps}

The theoretical results presented in this paper are useful
for creating well-centered meshes of specific
regions in $\RR^{3}$.  In particular, one might
design a tetrahedral mesh of a volume so that it meets all
of the combinatorial conditions discussed in
Secs.~\ref{sec:combinatorial3wccond}
and~\ref{sec:combinatorial2wctetcond}.  Then applying
the optimization procedure discussed in~\cite{VaHiGuRa2009},
one may hope to obtain a well-centered mesh of the domain.
This technique was successfully used in \cite{VaHiGu2008} to create
well-centered meshes of several domains in $\RR^{3}$,
including the cube.

The theory developed in this paper has several
obvious implications for the combinatorial properties of
a well-centered triangulation of the cube.  For example,
no cube corner tetrahedron, e.g., the tetrahedron
shown in Fig.~\ref{fig:badcubecrnr}, can be
$3$-well-centered;  considering the bottom
facet to be a given facet, we see that the fourth
vertex of the tetrahedron projects onto (not inside) the
circumcircle of the given facet, violating the necessary
Cylinder Condition of Proposition~\ref{prop:char_nec}.

It follows that in a $3$-well-centered mesh of the
cube there must be at least two tetrahedra incident
to each corner of the cube.  Indeed, there must be
at least three tetrahedra incident to each corner
of the cube in a $3$-well-centered mesh.
In the case of two tetrahedra incident
to a corner vertex there must be exactly four edges
incident to the corner vertex, of which three are in
the directions of the coordinate axes.  The fourth
edge must lie in a face, and both tetrahedra are
incident to the axis orthogonal to the face containing
the fourth edge.  The Cylinder Condition applies again,
and we see that the mesh cannot be $3$-well-centered.

\begin{figure}
\centering
\begin{minipage}[b]{180pt}
\centering
\includegraphics[width=120pt, trim=247pt 338pt 242pt 333pt, clip]%
  {badcubeface}
\end{minipage}%
\hspace{20pt}%
\begin{minipage}[b]{180pt}
\centering
\includegraphics[width=80pt, trim=178pt 268pt 279pt 253pt, clip]%
  {badcubecrnr}
\end{minipage}\\[-10pt]
\begin{minipage}[t]{180pt}
\caption{No $3$-well-centered mesh of the cube has a face
  with this triangulation.}
\label{fig:badcubeface}
\end{minipage}%
\hspace{20pt}%
\begin{minipage}[t]{180pt}
\caption{No $2$-well-centered or $3$-well-centered mesh of the solid cube
  has a tetrahedron with three of its facets lying in faces of the cube.}
\label{fig:badcubecrnr}
\end{minipage}
\end{figure}

Ad hoc arguments from basic Euclidean geometry
provide more restrictions on
well-centered triangulations of the solid cube.  For instance,
in any $3$-well-centered mesh of the cube, no face of
the cube is triangulated as shown in Fig.~\ref{fig:badcubeface},
with two right triangles meeting along the hypotenuse.
The two right triangles have the same circumcenter, which lies at
the midpoint of the common hypotenuse of the triangles ---
the center of the face of the cube.  For either triangle,
a tetrahedron having that triangle as a facet must have its
circumcenter on a line $\ell$ perpendicular to the face of the cube
that meets the cube face at its center.  Considering two
tetrahedra $\sigma_{1}$ and $\sigma_{2}$, each having one
of the right triangles as a face, it can be shown that
at most one of $\sigma_{1}, \sigma_{2}$ can be
$3$-well-centered.  There is a plane that contains the
hypotenuse of the right triangles and divides $\RR^3$
into two open halfspaces $H_{1}$ and $H_{2}$ such that
$\sigma_{1} \in \closure{H_{1}}$ and
$\sigma_{2} \in \closure{H_{2}}$.  If $\sigma_{1}$
and $\sigma_{2}$ share a common face, the plane must be
$\affine(\sigma_{1} \cap \sigma_{2})$, but otherwise
there is some flexibility in the choice of the plane.
The portion of $\ell$ interior
to the cube is either in the boundary between $H_{1}$ and $H_{2}$
or without loss of generality can be assumed
to lie entirely in $H_{1}$.
In either case, the circumcenter of $\sigma_{2}$ is not
strictly interior to $\sigma_{2}$,
so $\sigma_{2}$ is not $3$-well-centered.

The paper \cite{VaHiGu2008}, along with discussing
some well-centered meshes of the cube, raises the
question of the minimum number of tetrahedra
needed to create a well-centered triangulation of the
cube.  We can use the statements above to derive some
simple lower bounds on the number of tetrahedra in
a well-centered mesh of the $3$-cube.  In a triangulation
of the cube, the number of tetrahedra incident to the
surface of the cube is a lower bound on the total number
of tetrahedra, so one can obtain a lower bound on the
number of tetrahedra by counting the number of
triangular facets in a surface triangulation.  The number
of facets is not a direct lower bound, since there may
be a single tetrahedron with multiple facets in the
surface of the cube.  Because there are at least three
distinct tetrahedra incident to a cube corner in a
$3$-well-centered triangulation of the cube, a tetrahedron
cannot be counted more than twice in counting the
number of surface facets of a $3$-well-centered triangulation
of the cube.  The same holds true for $2$-well-centered
triangulations of the cube, since three of the facets of
a cube corner tetrahedron are right triangles.

Noting, then, that each face of the cube must contain at
least $3$ triangles in a $3$-well-centered mesh of the cube
and at least $8$ triangles in a $2$-well-centered mesh of the
cube, we easily obtain a lower bound of $9$ tetrahedra
for a $3$-well-centered triangulation of the cube, and
$24$ tetrahedra for a $2$-well-centered triangulation of the cube.
(These lower bounds are mentioned in~\cite{VaHiGu2008}
without the details of the geometric or combinatorial
arguments.)  It should be possible to improve both of these
bounds, but these relatively simple bounds help demonstrate
a possible application of this paper's theory and
are a starting place for a more careful analysis.

\section{Conclusions}
\label{sec:conclusion}
In this paper we introduced several geometric propositions
related to $n$-well-centered simplices and gave an algebraic
characterization of an $n$-well-centered simplex in terms
of cubic polynomial inequalities.  We applied the geometric
propositions to the study of the combinatorial properties
of well-centered meshes, especially well-centered tetrahedral
meshes.

We considered triangulations of topological $S^{2}$ and showed
that the set of such triangulations that cannot appear as the link
of a vertex in a $3$-well-centered (or $2$-well-centered
or completely well-centered) tetrahedral mesh
embedded in $\RR^{3}$ is an infinite set, contrasting
this to the analogous question for triangle meshes in $\RR^{2}$.
We showed also that the set of triangulations of $S^{2}$ that
do appear as the link of a vertex in some completely
well-centered (or $3$-well-centered or $2$-well-centered)
tetrahedral mesh embedded in $\RR^{3}$ is an infinite set.
We proved several results in the direction of classifying
which triangulations of $S^{2}$ can appear as the link of
a vertex in a $2$-well-centered or $3$-well-centered tetrahedral
mesh embedded in $\RR^{3}$.

The work on combinatorial properties of well-centered meshes
leads to some interesting open questions.  Is there a compact
way to express a complete characterization of which triangulations
of $S^{2}$ can appear in a $3$-well-centered (or $2$-well-centered
or completely well-centered) mesh embedded in $\RR^{3}$?
Is the necessary condition described in
Theorem~\ref{thm:oneringcond} a complete characterization
for vertex links in $3$-well-centered tetrahedral meshes
in $\RR^{3}$?  If a triangulation of $S^{2}$ permits a
$2$-well-centered neighborhood in the sense defined in
this paper, does this imply that it can appear as the
link of a vertex in a $2$-well-centered tetrahedral mesh
embedded in $\RR^{3}$?  If a triangulation of $S^{2}$
can appear as the link of a vertex in both a $2$-well-centered
tetrahedral mesh in $\RR^{3}$ and a $3$-well-centered tetrahedral
mesh in $\RR^{3}$, does this guarantee that it can appear
as the link of a vertex in a completely well-centered mesh?

Beyond the questions about tetrahedral meshes there are
questions about higher dimensions.  Is it possible to
extend the results of Sec.~\ref{sec:combinatorial2wctetcond}
to say something about $2$-well-centered meshes in
higher dimensions?  Certainly Lemmas~\ref{lemma:orthogonalplane},
\ref{lemma:2wcvtxproj}, and~\ref{lemma:facetintrsct} can be
generalized to higher dimensions.  Is it the case that
for each $n$ there exists a completely well-centered
$n$-simplicial neighborhood of a vertex embedded in $\RR^{n}$?
If so, Lemma~\ref{lemma:smplxonsphere} may provide a way
of constructing such neighborhoods.  We note, however,
that for $n \ge 5$ there is no dihedral acute $n$-simplicial
neighborhood of a vertex embedded in
$\RR^{n}$~\cite{KoPaPr2009}~\cite{Krizek2006}.

This work also leads to some interesting practical
questions about creating well-centered tetrahedral meshes
in $\RR^{3}$.  In particular, now that there is some
understanding of which triangulations can appear as the link
of a vertex in a well-centered mesh, we hope that practical
methods can be developed for improving the local mesh
connectivity of tetrahedral meshes.  An algorithm for
improving mesh connectivity of triangle meshes in $\RR^{2}$
appears in~\cite{VaHiGuRa2007}, but it is not obvious how to
formulate that type of algorithm for tetrahedral meshes
in $\RR^{3}$.  It is also worth noting that a triangulation
of $S^{2}$ that can theoretically appear as the link of a
vertex in a well-centered mesh might be a poor neighborhood
for a vertex in a practical setting.  For instance, in a
triangle mesh in $\RR^{2}$, a vertex link with $100$ vertices
can appear in a $2$-well-centered triangle mesh embedded
in $\RR^{2}$, but a mesh containing such a vertex link would
be considered poor quality in almost any application.  Is
there a good way to rate vertex links in tetrahedral
meshes according to their applicability in a practical setting?

We hope that our results will motivate others to
investigate these interesting and important questions.

\section*{Acknowledgement}
The authors thank Doug West for a helpful discussion.  The
work of Anil N. Hirani and Evan VanderZee was supported by
an NSF CAREER Award (Grant No. DMS-0645604).  Evan VanderZee
was also partially supported by a fellowship jointly funded
by the Computational Science and Engineering Program and the
Applied Mathematics Program of the University of Illinois
at Urbana-Champaign.  Vadim Zharnitsky was partially supported
by NSF grant DMS 08-07897.

\bibliographystyle{acmurldoi}
\bibliography{wct}

\end{document}